\newtheorem{theorem}{Theorem}[section]
\newtheorem{lemma}[theorem]{Lemma}
\newtheorem{problem}{Open Problem}
\newtheorem{example}{Example}
\newcommand{\C}{\mathcal{C}}
\begin{document}
	
	\title{Five infinite families of binary cyclic codes and their related codes with good parameters
	\thanks{The work of Hai Liu and Chengju Li was supported by the National
Natural Science Foundation of China (12071138), Shanghai Rising-Star Program (22QA1403200), the open research fund of National Mobile Communications Research Laboratory of Southeast University (2022D05), and the Shanghai Trusted Industry Internet Software Collaborative Innovation Center.
The work of Cunsheng Ding was supported by the Hong Kong Research Grants Council under Grant 16302121. \emph{(Corresponding author: Chengju Li.)}}}
	
\author{Hai Liu,~~ Chengju Li, ~~ Cunsheng Ding
\thanks{H. Liu and C. Li are with the Shanghai Key Laboratory of Trustworthy Computing, East China Normal University,
Shanghai, 200062, China; and are also with the National Mobile Communications Research Laboratory, Southeast University, Nanjing 210096, China
(email: 52265902014@stu.ecnu.edu.cn, cjli@sei.ecnu.edu.cn).}

\thanks{C. Ding is with the Department of Computer Science and Engineering, The Hong Kong University of Science and Technology,
Clear Water Bay, Kowloon, Hong Kong, China (email: cding@ust.hk).}}

	\date{\today}
	\maketitle
	
\begin{abstract}
Cyclic codes are an interesting type of linear codes
and have wide applications in communication and storage systems
due to their efficient encoding and decoding algorithms. Inspired by the recent work on binary cyclic codes published in IEEE Trans. Inf. Theory, vol. 68, no. 12,
pp. 7842-7849, 2022,
the objectives of this paper are the construction and analyses of five infinite families of
binary cyclic codes with parameters $[n, k]$ and $(n-6)/3 \leq k \leq 2(n+6)/3$. Three of the five families of binary cyclic codes
and their duals have a very good lower bound on their minimum distances and contain distance-optimal codes.
The other two families of binary cyclic codes are composed of binary duadic codes with a square-root-like lower bound on their minimum distances. As a by-product, two infinite families of self-dual binary codes with a square-root-like lower bound on their minimum distances are obtained.
	\end{abstract}
	
	{\bf  Keywords} Cyclic code; duadic code; linear code, self-dual code.

\newpage 	
	\tableofcontents

\newpage	
	\section{Introduction} \label{sec-intro}
	
	In this paper, let $\Bbb F_q $ denote the finite field of order $q$, where $q$ is a power of a prime $p$.
	An $[n,k,d]$ linear code $\mathcal C$ over $\Bbb F_q$ is a $k$-dimensional subspace of $\Bbb F_q^n$ with minimum (Hamming) distance $d$. The dual code of $\mathcal{C}$, denoted by $\mathcal{C}^{\perp}$, is defined by
$$\mathcal{C}^{\perp}=\{\mathbf{b} \in\Bbb F_q^n \ :\ \mathbf{b}\mathbf{c}^{T}=0 \,\,\text{for all $\mathbf{c} \in \mathcal{C}$}\}, $$
where $\mathbf{b}\mathbf{c}^{T}$ is the standard inner product of two vectors $\mathbf{b}$ and $\mathbf{c}$ in $\Bbb F_q^n$.

The linear code $\mathcal C$ over $\Bbb F_q$ is said to be \emph{cyclic} if $\mathbf (c_0, c_1, \ldots, c_{n-1}) \in \mathcal C $ implies $\mathbf (c_{n-1}, c_0, \ldots, c_{n-2}) \in \mathcal C $.
	By identifying each vector $\mathbf (c_0, c_1, \ldots, c_{n-1}) \in \Bbb F_q^n $ with
	$$c_0+c_1x+c_2x^2+\cdots+c_{n-1}x^{n-1} \in \Bbb F_q[x]/(x^n-1),$$
	a code $\mathcal C$ of length $n$ over $\Bbb F_q$ corresponds to a subset of $\Bbb F_q[x]/(x^n-1)$. Then $\mathcal C$ is a cyclic code if and only if the corresponding subset is an ideal of $\Bbb F_q[x]/(x^n-1)$.
	Note that every ideal of $\Bbb F_q[x]/(x^n-1)$ is principal. Then there is a monic polynomial $g(x)$ of the smallest degree such that $\mathcal C=( g(x) )$ and $g(x) \mid (x^n-1)$. Then $g(x)$ is called the \emph{generator
	polynomial} and $h(x)=(x^n-1)/g(x)$ is referred to as the \emph{check polynomial} of $\mathcal C$.
	Throughout this paper, assume that $\gcd(q,n)=1$.
 Denote $m=\text{ord}_n(q)$, i.e., $m$ is the smallest positive integer such that $q^m \equiv 1 \pmod n$.
Let $\alpha$ be a primitive element of $\Bbb F_{q^m}$ and put $\beta=\alpha^{\frac {q^m-1} n}$. Then
$\beta$ is a primitive $n$-th root of unity. The set $T=\{0 \le i \le n-1 : g(\beta^i)=0\}$ is referred to as the
\emph{defining set} of $\mathcal C$ with respect to $\beta$. If $T$ contains $\delta-1$ consecutive integers, then we have the
well-known BCH bound on cyclic codes, i.e., $d \geq \delta$.


Cyclic codes are interesting in theory, as they are closely related to quite a number of areas of mathematics such as
algebra, algebraic number theory, number theory, combinatorics and finite geometry. For example,
the determination of the weight distributions of irreducible cyclic codes is the same as the evaluation of certain Gaussian periods \cite{DY}.
Cyclic codes are important in practice due to their efficient encoding and decoding algorithms. However,
it is theoretically hard to design cyclic codes of length $n$ with good parameters if $n$ has small divisors
more than $1$ due to some general theory developed in \cite{Xiong,XiongZhang}. This fact is also
confirmed by the tables of best binary cyclic codes in Appendix A.2 of \cite{DingBK18}. It is harder to
design binary cyclic codes with good parameters as the alphabet size is too small. It is a much more difficult problem
to design an infinite family of binary cyclic codes such that each code in the family has good parameters.

It is an interesting problem to design an infinite family of binary cyclic codes with good parameters such that their duals also  have
good parameters. For convenience,  we call such an infinite family of binary cyclic codes a \emph{dually-good infinite family}
of binary cyclic codes.  Dully-good infinite families of binary cyclic codes with small or large dimensions relative to their lengths
are relatively easy to construct. The binary Hamming codes and the punctured binary second-order Reed-Muller codes are two dully-good infinite families of binary cyclic codes with a large and small dimension, respectively. However, only a small number of dully-good infinite families of binary cyclic codes with parameters $[n, k]$ and $(n-6)/3 \leq k \leq 2(n+6)/3$ are known in the literature. The following is a list of such binary cyclic codes known to the authors.
\begin{enumerate}
\item The family of binary quadratic-residue codes.
\item The punctured binary Reed-Muller codes of length $2^m-1$ and order $(m-1)/2$, where $m$ is odd.
\item Two families  of cyclic codes presented in \cite{TD}.
\end{enumerate}
It is in general very hard to determine the minimum distance of a cyclic code with parameters $[n, k]$ and $(n-6)/3 \leq k \leq 2(n+6)/3$ as the dimension $k$ is neither small nor large compared with a large length $n$ \cite{Charpin,LSX, JPen}. If it is impossible to
determine the minimum distance of such a code, the best one can do is to develop a good lower bound on the minimum distance of the code. This is the only way to show that such a code has a good error-correcting capability. However, this is not easy either. It is more difficult to develop good lower bounds on the minimum distances of both $\C$ and $\C^\perp$ \cite{GDL,SYW}. This explains why it is very difficult to find
a dully-good infinite family of binary cyclic codes with parameters $[n, k]$ and $(n-6)/3 \leq k \leq 2(n+6)/3$.

Inspired by the works in \cite{TD}, our objectives in this paper are the construction and analyses of five dully-good  infinite families of
binary cyclic codes with parameters $[n, k]$ and $(n-6)/3 \leq k \leq 2(n+6)/3$.
Three of the five infinite families of binary cyclic codes
and their duals have a very good lower bound on their minimum distances and contain distance-optimal codes.  The other two families of binary cyclic codes are composed of binary duadic codes with a square-root-like lower bound on their minimum distances. As a by-product, two families of self-dual binary codes with a square-root-like lower bound on their minimum distances are obtained.

 In this paper by the Database we mean the tables of best known linear codes  \cite{G}, which are maintained by Markus Grassl at
http://www.codetables.de/. We inform the reader that all the code examples given in this paper are computed by the Magma software package.

The rest of this paper is organised as follows. Section \ref{sec-3fcyclic} introduces and analyses the first three families of
binary cyclic codes and their related codes.  Section \ref{sec-duadicc} constructs and analyses the two families of
binary duadic codes and their related codes.  Section \ref{sec-Con} concludes this paper and proposes some open problems.

\section{The first three families of binary cyclic codes and their duals}\label{sec-3fcyclic}

\subsection{The construction of the first three families of binary cyclic codes }

Let $m \geq 2$ be a positive integer and let $n = {2^m -1}$.
Let  $\Bbb Z_n =\{0,1,2,\ldots,n-1\}$ be the ring of integers modulo $n$. For any $s \in \Bbb Z_n$, the $2$-cyclotomic coset of $s$ modulo $n$ is defined by
	$$C_s^{(2,n)}=\{s,s2,s2^2,\ldots,s2^{l_s-1}\}\bmod n \subseteq \Bbb Z_n,$$
	where $l_s$ is the smallest positive integer such that $s \equiv s2^{l_s}\pmod  n$.
For an integer $i$ with $0 \le i \le 2^m-1$, let $$i=i_{m-1}2^{m-1} + i_{m-2}2^{m-2} + \cdots + i_1 2+i_0$$ be the $2$-adic expansion of $i$, where
	$i_j \in \{0, 1\}$ for $0 \le j \le m-1$. We will also write $i=(i_{m-1},i_{m-2},\ldots,i_1,i_0)$ and call it the $2$-adic expansion of $i$ in the sequel.
For any $i$ with $ 0 \leq i \leq n-1 $, define $w_{2}(i) = \sum\limits_{j=0}^{m-1}i_{j}$. 	
	Let $\alpha$ be a generator of $\Bbb F_{2^m}^*$. We define a polynomial
	\begin{equation} \label{equ-gim}
		g_{(i,m)}(x) =  \prod\limits_{
			\substack{1 \leq i \leq n-1\\
				w_{2}(j) \equiv i\pmod{3}}}
		(x-\alpha^{j})
	\end{equation}
	for each $i \in \left\{{0,1,2}\right\}$.
Note that $w_{2}(a_1)=w_{2}(a_2)$ if $a_1$ and $a_2$ belong to the same cyclotomic coset.
 It then follows that $g_{(i,m)}(x) \in \Bbb F_2[x]$. Let $\mathcal C_{(i,m)}$ denote the binary cyclic code of length $n=2^m-1$ with generator polynomial $g_{(i,m)}(x)$ for $i=0, 1, 2$.
When $m=3$ and $4$, Table \ref{Tab} shows that the
three families of binary cyclic codes  and their dual codes contain some optimal binary
cyclic codes, which motivated us to study the parameters of the three families of binary cyclic codes $\mathcal C_{(i,m)}$ and their dual codes.

	 \begin{table}
     \centering
     \caption{Parameters of $\mathcal{C}_{(i,m)}$ for $m=3, 4$} \label{Tab}
     \begin{tabular}{|c|c|c|c|c|c|c|c|c|c|c|}
      \hline
      Code  & Parameters  & Optimality  &&  Code  & Parameters  & Optimality \\ \hline
      $\mathcal C_{(0,3)}$ & $ [7,7,1] $ &     Optimal &  & $\mathcal C_{(0,4)}$ & $ [15,11,3] $ &     Optimal   \\ \hline
      $\mathcal C_{(1,3)}$ & $ [7,4,3] $ &    Optimal &  & $\mathcal C_{(1,4)}$ & $ [15,11,3] $ &    Optimal  \\ \hline
      $\mathcal C_{(2,3)}$ & $ [7,4,3] $ &    Optimal &  & $\mathcal C_{(2,4)}$ & $ [15,9,4] $ &    Optimal   \\ \hline
      $\mathcal C_{(0,3)}^\bot$ & $[7,0]$ &  / &  & $\mathcal C_{(0,4)}^\bot$ & $[15,4,8]$ &  Optimal    \\ \hline
      $\mathcal C_{(1,3)}^\bot$ & $[7,3,4]$ &  Optimal &  & $\mathcal C_{(1,4)}^\bot$ & $[15,4,8]$ &  Optimal   \\ \hline
      $\mathcal C_{(2,3)}^\bot$ & $[7,3,4]$ &  Optimal &  & $\mathcal C_{(2,4)}^\bot$ & $[15,6,6]$ &  Optimal   \\ \hline
     \end{tabular}
    \end{table}

  In the following three subsections, the dimensions of the three families of binary cyclic codes $\mathcal C_{(i,m)}$ and their dual codes will be determined and
lower bounds on their minimum distances will be developed with the BCH bound on cyclic codes.

\subsection{Some auxiliary results}\label{sec-auxi}
	
In this subsection, we will present some auxiliary results on the defining sets of the three families of binary cyclic codes $\mathcal C_{(i,m)}$, which will play an important role
in developing good lower bounds on the minimum distance of the binary codes.
The following well-known lemma will be employed later.

	\begin{lemma}\label{lemma-gcd2}
		Let $l$ and $m$ be two positive integers. Then
		$$\gcd(a^{m}-1, a^{l}-1) = a^{\gcd(m,l)} -1,$$
		where $a \geq 2$ is a positive integer.
	\end{lemma}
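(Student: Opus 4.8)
The plan is to prove this classical identity by mimicking the Euclidean algorithm on the exponents, using induction on $m + l$. Since the statement is symmetric in $m$ and $l$, I would assume without loss of generality that $m \geq l$. The base case is $m = l$: here $\gcd(m, l) = m$ and both sides of the claimed identity reduce to $a^{m} - 1$, so there is nothing to prove.

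For the inductive step the key algebraic observation is the division identity
$$a^{m} - 1 = a^{m-l}\left(a^{l} - 1\right) + \left(a^{m-l} - 1\right),$$
which shows that $a^{m} - 1 \equiv a^{m-l} - 1 \pmod{a^{l} - 1}$. Because $\gcd(u, v) = \gcd(u - qv,\, v)$ for every integer $q$, this at once gives
$$\gcd\left(a^{m} - 1,\ a^{l} - 1\right) = \gcd\left(a^{m-l} - 1,\ a^{l} - 1\right).$$
As $(m - l) + l = m < m + l$, the inductive hypothesis applies to the right-hand side (here $m - l \geq 0$, and if $m - l = 0$ we are back in the base case), yielding $\gcd(a^{m-l} - 1,\ a^{l} - 1) = a^{\gcd(m-l,\, l)} - 1$. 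The elementary fact $\gcd(m - l, l) = \gcd(m, l)$ then closes the induction.

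I expect the only delicate point to be the recursion bookkeeping: making sure the exponents stay nonnegative and that the reduction terminates precisely when the smaller exponent reaches $\gcd(m, l)$, exactly as in the ordinary Euclidean algorithm. An alternative, non-inductive route would be to prove divisibility in both directions. Writing $d := \gcd(m, l)$, the factorization $a^{kd} - 1 = (a^{d} - 1)\sum_{j=0}^{k-1} a^{jd}$ (applied with $k = m/d$ and $k = l/d$) shows that $a^{d} - 1$ divides both $a^{m} - 1$ and $a^{l} - 1$, hence divides their gcd $g$; conversely, choosing a B\'ezout representation $d = sm + tl$ with, say, $s \geq 0 \geq t$ and reducing modulo $g$, where $a^{m} \equiv a^{l} \equiv 1$, gives $a^{d} \equiv 1 \pmod{g}$, so $g \mid a^{d} - 1$. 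The two divisibilities, together with positivity, force $g = a^{d} - 1$. In this second approach the sign handling in the B\'ezout step is the only real subtlety.
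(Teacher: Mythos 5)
Your proof is correct, but there is nothing in the paper to compare it against: the paper states this as a ``well-known lemma'' and gives no proof at all, simply invoking it later when computing $\gcd(v,n)$ for various choices of $v=2^t-1$ and $n=2^m-1$. Your first argument (induction on $m+l$ via the identity $a^{m}-1=a^{m-l}\bigl(a^{l}-1\bigr)+\bigl(a^{m-l}-1\bigr)$, which mirrors the Euclidean algorithm on exponents) is the standard textbook proof, and your bookkeeping is sound: the base case $m=l$ is dispatched first, so in the inductive step $m-l\geq 1$ and the sum of exponents strictly decreases. Your second, non-inductive route is also complete: the geometric-series factorization gives $a^{d}-1 \mid \gcd\bigl(a^{m}-1,a^{l}-1\bigr)$ for $d=\gcd(m,l)$, and the B\'ezout step with $s\geq 0\geq t$ correctly avoids any need to invert $a$ modulo $g$, since $a^{sm}=a^{d}\cdot a^{-tl}$ involves only nonnegative exponents. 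Either argument would serve as a valid proof of the lemma the paper leaves unproved.
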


 Below we always assume that $n=2^m-1$. Denote
$$T_{(i,m)} = \{ 1 \leq j \leq n-1:w_{2}(j) \equiv i \pmod{3}\}$$
for $i=0, 1, 2$. By the definition of $\mathcal C_{(i,m)}$ given by \eqref{equ-gim}, $T_{(i,m)}$ is the defining set of $\mathcal C_{(i,m)}$ with respect to the $n$-th primitive root of unity $\alpha$.

\subsubsection{The odd $m$ case}

	\begin{lemma}\label{lemma-m61}
		Let $m \equiv 1 \pmod{6} \geq 7$. Then we have the following.
\begin{enumerate}
  \item If $v=2^{(m-1)/2} -1$, then $\gcd(v,n)=1$ and
		$$\{av : 1 \leq a \leq 2^{(m-1)/2} +2 \} \subseteq	T_{(0,m)}.$$
  \item If $v=2^{(m+1)/2} -1$, then $\gcd(v,n)=1$ and $$\{av : 1 \leq a \leq 2^{(m-1)/2} +2 \} \subseteq	T_{(1,m)}.$$
  \item If $v=2^{(m-1)/2} -1$, then $\gcd(v,n)=1$ and $$\{av  \bmod \ n: 2^{m-1}+2^{(m-1)/2} +1 \leq a \leq  2^{m-1}+2^{(m-1)/2} + 2^{(m-3)/2}+ 1 \} \subseteq	T_{(2,m)}.$$
\end{enumerate}		
	\end{lemma}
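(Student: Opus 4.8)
The plan is to reduce all three parts to a single elementary identity on binary weights, namely that
$$w_2\big(a(2^k-1)\big) = k \quad \text{for all } 1 \le a \le 2^k,$$
together with the congruences forced by $m \equiv 1 \pmod 6$. First I would dispose of the coprimality claims: since $m$ is odd, Lemma \ref{lemma-gcd2} gives $\gcd(2^{(m-1)/2}-1,\,2^m-1) = 2^{\gcd((m-1)/2,\,m)}-1$ and $\gcd(2^{(m+1)/2}-1,\,2^m-1) = 2^{\gcd((m+1)/2,\,m)}-1$; because any common divisor of $m$ and $(m\pm1)/2$ also divides $(m\pm1)$ and hence divides $\gcd(m,m\pm1)=1$, both exponents equal $1$ and each gcd is $2^1-1=1$.

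The key identity is proved by the decomposition $a(2^k-1) = (a-1)2^k + (2^k-a)$, valid for $1\le a\le 2^k$: the high summand $(a-1)$ occupies bit positions $k,\dots,2k-1$ and the low summand $2^k-a$ occupies positions $0,\dots,k-1$, so there is no carry, and since $2^k-a = (2^k-1)-(a-1)$ is the bitwise complement of $a-1$ in $k$ bits, $w_2(2^k-a) = k - w_2(a-1)$. Adding the two contributions gives weight $k$ independently of $a$. I will use this repeatedly, each time combined with the observation that $m\equiv 1 \pmod 6$ makes $t := (m-1)/2 \equiv 0 \pmod 3$ and $s := (m+1)/2 \equiv 1 \pmod 3$.

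For part (1) I take $k=t$: for $1\le a\le 2^t$ the identity gives $w_2(av)=t\equiv 0$, and since $av \le (2^t+2)(2^t-1) < n$ no reduction modulo $n$ is needed. The two remaining values are handled by hand: $a=2^t+1$ gives $av = 2^{2t}-1$ of weight $2t\equiv 0$, and $a=2^t+2$ gives $av = 2^{2t}+2^t-2$ of weight $t\equiv 0$. Part (2) is the same with $k=s$: for $1\le a\le 2^{s-1}$ one checks $av<n$ and gets weight $s\equiv 1$. Here the top two values $a = 2^{s-1}+1,\,2^{s-1}+2$ exceed $n$, so I reduce using $2^m\equiv 1 \pmod n$ and find the representatives $2^{s-1}$ (weight $1$) and $2^s+2^{s-1}-1$ (weight $s$), both $\equiv 1 \pmod 3$; this is the step where the implicit reduction ``$\bmod\ n$'' must be read into the statement.

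Part (3) is the main obstacle and needs a different opening move. Writing $a = 2^{2t}+2^t+1+c$ with $0 \le c \le 2^{t-1}$ and exploiting the factorisation $2^{2t}+2^t+1 = (2^{3t}-1)/(2^t-1)$, I get $av = (2^{3t}-1) + c(2^t-1)$. Reducing with $2^{2t+1}\equiv 1 \pmod n$ turns $2^{3t}$ into $2^{t-1}$, so $av \equiv c\,2^t + 2^{t-1} - c - 1 \pmod n$; I then verify this representative already lies in $[1,n-1]$. For $0\le c\le 2^{t-1}-1$ it splits as a high block $c\,2^t$ in positions $t,\dots,2t-2$ and a low block $(2^{t-1}-1)-c$ in positions $0,\dots,t-2$, which are disjoint, and complementation in $t-1$ bits again cancels the $c$-dependence to give weight $t-1\equiv 2$; the endpoint $c=2^{t-1}$ collapses to $2^{2t-1}-1$ of weight $2t-1\equiv 2$. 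The delicate points throughout are tracking exactly which bit positions are occupied so that the ``no overlap'' claims are legitimate, and confirming each reduced representative lands in the correct range before reading off its weight.
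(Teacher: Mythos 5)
Your proof is correct and is essentially the paper's own argument: the identity $w_2\big(a(2^k-1)\big)=k$ for $1\le a\le 2^k$ that you isolate via the split $a(2^k-1)=(a-1)2^k+(2^k-a)$ is exactly the no-carry/complement computation the paper redoes in each case (after first writing $a=2^l\bar a$ with $\bar a$ odd, a reduction your formulation makes unnecessary), and your part (3), via $av=(2^{3t}-1)+c(2^t-1)$ and $2^{3t}\equiv 2^{t-1}\pmod{n}$, matches the paper's expansion of $av$ for $a=2^{m-1}+2^{(m-1)/2}+1+c$. If anything you are more thorough than the paper: in part (2) you make explicit the mod-$n$ reduction needed for $a=2^{(m-1)/2}+1$ and $a=2^{(m-1)/2}+2$ (where $av>n$, so the lemma's set must be read modulo $n$), which the paper waves off with ``one can similarly check,'' and in part (3) you cover the last value $a=2^{m-1}+2^{(m-1)/2}+2^{(m-3)/2}+1$ (your $c=2^{(m-3)/2}$, giving the representative $2^{2t-1}-1$ of weight $2t-1\equiv 2 \pmod 3$), which the paper's digit expansion, written only for increments between $1$ and $2^{(m-3)/2}-1$, omits.
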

	
	\begin{proof}
If $v=2^{(m-1)/2} -1$, it follows from Lemma \ref{lemma-gcd2} that $ \gcd(v,n) =1$.
		 When $a=2^{(m-1)/2} +2$, we have
		$$av =2^{m-1} + 2^{(m-1)/2} -2 = 2(2^{m-2} + 2^{(m-3)/2}-1).$$
		Consequently, $w_{2}(av) =(m-1)/2 \equiv 0 \pmod{3}$.
		When $ a=2^{(m-1)/2} +1 $, $ av = 2^{m-1} -1$ and $w_{2}(av) =(m-1)\equiv 0 \pmod{3}$.
		When $ a=2^{(m-1)/2}$, $w_{2}(av) =w_{2}(v)=(m-1)/2\equiv 0 \pmod{3}$.
		Now we assume that $1 \leq a \leq 2^{(m-1)/2} - 1$. Let $a=2^{l}\bar{a}$, where $\bar{a}$ is odd and $l \ge 0$ is an integer. Then we have $1 \leq a \leq 2^{(m-1)/2} - 1$ and the $2$-adic expansion of $\bar{a}$ given by $$ \bar{a} = \sum\limits_{i=0}^{(m-3)/2}a_{i}2^{i}.$$
		Since $\bar{a}$ is odd, $a_{0}=1$. We have
		\begin{align*}
			\bar{a}v =\sum\limits_{i=1}^{(m-3)/2}a_{i}2^{i+(m-1)/2} + \sum\limits_{i=0}^{(m-3)/2}(1-a_{i})2^{i} +1.
		\end{align*}
	    It then follows that
		$$w_{2}(\bar{a}v) =  w_{2}(\bar{a})-1 + 1+ \frac{m-1}{2}-w_{2}(\bar{a}) = \frac{m-1}{2}\equiv 0 \pmod{3}.$$
		The desired conclusion in the first case then follows.

		If $v=2^{(m+1)/2} -1$, it follows from Lemma \ref{lemma-gcd2} that
		$$\gcd(v,n)=2^{\gcd((m+1)/2,m)}-1 =2^{\gcd((m+1)/2,(m-1)/2)}-1 =1.$$
	When $ a=2^{(m-1)/2}$, it is easy to see that
		$$w_{2}(av) = w_{2}(v) = \frac{m+1}{2}\equiv 1 \pmod{3}.$$
Furthermore, one can similarly check that $w_{2}(av) \equiv 1 \pmod{3}$ for $a=2^{(m-1)/2}+1$ and $2^{(m-1)/2}+2$.
		Next, we assume that $1 \leq a \leq 2^{(m-1)/2} - 1$. Let $a=2^{l}\bar{a}$, where $\bar{a}$ is odd and $l \ge 0$ is an integer. Then we have $1 \leq \bar{a} \leq 2^{(m-1)/2} - 1$. Let the 2-adic expansion of $\bar{a}$ be given by
		$$\bar{a} = \sum\limits_{i=0}^{(m-3)/2}a_{i}2^{i}.$$
		Since $\bar{a}$ is odd, $a_{0}=1$.  Then
		\begin{align*}
			\bar{a}v &= \bar{a}2^{(m+1)/2} -\bar{a} \\
			&= \sum\limits_{i=1}^{(m-3)/2}a_{i}2^{i+(m+1)/2} +2^{(m-1)/2} + \sum\limits_{i=0}^{(m-3)/2}(1-a_{i})2^{i} +1.
		\end{align*}
		As a result, we have
		$$ w_{2}(\bar{a}v) =  w_{2}(\bar{a})-1 + 2 + \frac{m-1}{2}-w_{2}(\bar{a}) = \frac{m+1}{2}\equiv 1 \pmod{3}.$$
	
If $v=2^{(m-1)/2} -1$, it follows from Lemma \ref{lemma-gcd2} that
		$\gcd(v,n) =1$. When $a=2^{m-1}+2^{(m-1)/2} +1$,
one can check that
		$$w_{2}(av) =w_{2}(v)=(m-3)/2\equiv 2\pmod{3}.$$
		Write $a= 2^{m-1}+2^{(m-1)/2} +1 + t$ for $1 \leq t \leq 2^{(m-3)/2} - 1$ and $t = \sum\limits_{i=0}^{(m-5)/2}t_{i}2^{i}$.
		Then we have
		\begin{align*}
			av &= (2^{m-1}+2^{(m-1)/2} +1 + \sum\limits_{i=0}^{(m-5)/2}t_{i}2^{i})(2^{(m-1)/2} -1) \\
			   &= 2^{(m-1)/2}2^{m-1} -1 + \sum\limits_{i=0}^{(m-5)/2}t_{i}2^{i+(m-1)/2} -\sum\limits_{i=0}^{(m-5)/2}t_{i}2^{i} \\
               &\equiv \sum\limits_{i=0}^{(m-5)/2}t_{i}2^{i+(m-1)/2} +  \sum\limits_{i=0}^{(m-5)/2}(1-t_{i})2^{i} \pmod n.
		\end{align*}
		    It then follows that
		$$ w_{2}(av \bmod{n})= w_{2}(t) + \frac{m-3}{2} - w_{2}(t) = \frac{m-3}{2} \equiv 2 \pmod{3}. $$
		This completes the proof.
	\end{proof}

	\begin{lemma}\label{lemma-m63}
		Let $m \equiv 3 \pmod{6} \geq 9$. Then we have the following.
		\begin{enumerate}
			\item If $v=2^{(m-1)/2} -1$, then $\gcd(v,n)=1$ and
			$$\{av \bmod{n}: 2^{m-1}+2^{(m-1)/2} +1 \leq a \leq  2^{m-1}+2^{(m-1)/2} + 2^{(m-3)/2} \} \subseteq T_{(0,m)}.$$
			\item If $v=2^{(m-1)/2} -1$, then $\gcd(v,n)=1$ and $$\{av : 1 \leq a \leq 2^{(m-1)/2} \} \subseteq	T_{(1,m)}.$$
			\item If $v=2^{(m+1)/2} -1$, then $\gcd(v,n)=1$ and $$\{av  : 1 \leq a \leq 2^{(m-1)/2} \} \subseteq	T_{(2,m)}.$$
		\end{enumerate}		
	\end{lemma}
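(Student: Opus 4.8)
The plan is to follow closely the strategy of the proof of Lemma \ref{lemma-m61}, since the case $m \equiv 3 \pmod 6$ differs from the case $m \equiv 1 \pmod 6$ only in the residues modulo $3$ attached to the relevant binary weights. Writing $m = 6k+3$, the three quantities that control everything are $(m-1)/2 = 3k+1 \equiv 1$, $(m+1)/2 = 3k+2 \equiv 2$ and $(m-3)/2 = 3k \equiv 0 \pmod 3$. I therefore expect the same explicit $2$-adic decompositions used before to reappear essentially verbatim, only now producing the residues $1$, $2$ and $0$ that match the defining sets $T_{(1,m)}$, $T_{(2,m)}$ and $T_{(0,m)}$ in parts (2), (3) and (1), respectively.

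First I would dispose of the coprimality claims with Lemma \ref{lemma-gcd2}. For $v = 2^{(m-1)/2}-1$ one has $\gcd(v,n) = 2^{\gcd((m-1)/2,\,m)}-1 = 1$, because $m - 2\cdot\frac{m-1}{2} = 1$ forces $\gcd((m-1)/2, m) = 1$; for $v = 2^{(m+1)/2}-1$ one has $\gcd((m+1)/2, m)=1$ since $2\cdot\frac{m+1}{2} \equiv 1 \pmod m$, whence $\gcd(v,n)=1$ again. For parts (2) and (3), whose index ranges start at $a=1$, I would exploit the fact that multiplication by a power of $2$ modulo $n = 2^m-1$ is a cyclic shift and hence preserves the binary weight $w_{2}$. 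Writing $a = 2^{l}\bar a$ with $\bar a$ odd, it then suffices to evaluate $w_{2}(\bar a v)$ for odd $\bar a$ in the range $1 \le \bar a \le 2^{(m-1)/2}-1$, the single boundary value $a = 2^{(m-1)/2}$ (i.e. $\bar a = 1$) being absorbed by the same computation. Expanding $\bar a = \sum_{i=0}^{(m-3)/2} a_i 2^i$ with $a_0 = 1$ and grouping the high and low bits of $\bar a v = \bar a\,2^{(m-1)/2}-\bar a$ (part (2)) and $\bar a v = \bar a\,2^{(m+1)/2}-\bar a$ (part (3)) into the non-overlapping carry-free forms already written down in Lemma \ref{lemma-m61}, the weight telescopes to $(m-1)/2 \equiv 1$ in part (2) and, because the subtraction leaves an isolated middle bit $2^{(m-1)/2}$ raising the count by one, to $(m+1)/2 \equiv 2$ in part (3). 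The only side conditions are that $\bar a v < n$ (so no reduction interferes) and that the claimed bit positions are genuinely disjoint, both of which follow from the size bounds on $\bar a$ and $v$.

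Part (1) is handled by direct reduction rather than by a shift. I would write $a = 2^{m-1}+2^{(m-1)/2}+1+t$ with $0 \le t \le 2^{(m-3)/2}-1$ and $t = \sum_{i=0}^{(m-5)/2} t_i 2^i$, and first record the clean identity $(2^{m-1}+2^{(m-1)/2}+1)(2^{(m-1)/2}-1) = 2^{(3m-3)/2}-1$. Reducing modulo $n$ via $2^{(3m-3)/2} = 2^{m}\cdot 2^{(m-3)/2} \equiv 2^{(m-3)/2} \pmod n$ and using $2^{(m-3)/2}-1 = \sum_{i=0}^{(m-5)/2}2^i$ to absorb the $-t$ contribution, I expect $av$ to reduce to $\sum_{i=0}^{(m-5)/2} t_i 2^{i+(m-1)/2} + \sum_{i=0}^{(m-5)/2}(1-t_i)2^i$, a carry-free expression whose weight is $w_{2}(t) + \big(\tfrac{m-3}{2}-w_{2}(t)\big) = (m-3)/2 \equiv 0$, as required for $T_{(0,m)}$.

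The computations are routine once set up, so the real work is organisational: verifying in each case that the stated binary expansions involve no carries, that the intermediate products stay below $n$ so that the single reduction $2^{(3m-3)/2}\equiv 2^{(m-3)/2}$ suffices in part (1), and that the exponent arithmetic $(3m-3)/2 = m + (m-3)/2$ together with the weight count over the index set $0 \le i \le (m-5)/2$ (which contains $(m-3)/2$ terms) are carried out correctly. Keeping this bookkeeping straight while aligning the residues $1$, $2$, $0$ with the correct parts is the main, and essentially the only, obstacle.
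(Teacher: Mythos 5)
Your proposal is correct: the paper itself omits this proof, stating only that it is ``very similar to that of Lemma \ref{lemma-m61},'' and your reconstruction follows exactly that template — the same $2$-adic carry-free decompositions of $\bar{a}v$ for parts (2) and (3), and the same reduction $2^{(3m-3)/2}-1+tv \equiv \sum_i t_i 2^{i+(m-1)/2}+\sum_i(1-t_i)2^i \pmod{n}$ for part (1), with the residues $1$, $2$, $0$ of $(m-1)/2$, $(m+1)/2$, $(m-3)/2$ modulo $3$ correctly matched to $T_{(1,m)}$, $T_{(2,m)}$, $T_{(0,m)}$. All the computations and side conditions you flag (coprimality via Lemma \ref{lemma-gcd2}, no wraparound, disjoint bit positions) check out.
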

    \begin{proof}
    	The proof is very similar to that of Lemma \ref{lemma-m61} and omitted here.
    \end{proof}

	\begin{lemma}\label{lemma-m65}
		Let $m \equiv 5 \pmod{6} \geq 5$. Then we have the following.
		\begin{enumerate}
			\item If $v=2^{(m+1)/2} -1$, then $\gcd(v,n)=1$ and
			$$\{av : 1 \leq a \leq  2^{(m-1)/2}  \} \subseteq	T_{(0,m)}.$$
			\item If $v=2^{(m+1)/2} -1$, then $\gcd(v,n)=1$ and $$\{av \bmod{n}: 2^{m-1}+2^{(m-1)/2} +1 \leq a \leq  2^{m-1}+2^{(m-1)/2} + 2^{(m-3)/2} \} \subseteq	T_{(1,m)}.$$
			\item If $v=2^{(m-1)/2} -1$, then $\gcd(v,n)=1$ and $$\{av : 1 \leq a \leq 2^{(m-1)/2}  \} \subseteq	T_{(2,m)}.$$
		\end{enumerate}		
	\end{lemma}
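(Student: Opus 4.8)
The plan is to follow the template of the proof of Lemma~\ref{lemma-m61}, taking advantage of the fact that the two weight computations needed for the first and third assertions are literally special cases of the ones carried out there, so that only the residues modulo $3$ change. Throughout I will use the congruences valid for $m \equiv 5 \pmod 6$, namely $(m+1)/2 \equiv 0$, $(m-1)/2 \equiv 2$, $(m-3)/2 \equiv 1$ and $(m+3)/2 \equiv 1 \pmod 3$. The coprimality claims come first: in every case Lemma~\ref{lemma-gcd2} gives $\gcd(v,n) = 2^{\gcd(e,m)}-1$ with $e \in \{(m-1)/2,\,(m+1)/2\}$, and since $m = \frac{m-1}{2}+\frac{m+1}{2}$ one has $\gcd(e,m) = \gcd\!\bigl(\tfrac{m-1}{2},\tfrac{m+1}{2}\bigr)=1$, so $\gcd(v,n)=1$.

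For the first assertion I take $v = 2^{(m+1)/2}-1$, the same $v$ as in part~(2) of Lemma~\ref{lemma-m61}. Writing a generic $a$ in the range as $a = 2^{l}\bar a$ with $\bar a$ odd and expanding $\bar a v = \bar a 2^{(m+1)/2}-\bar a$ in base $2$, the complementary low bits cancel exactly as before and give $w_{2}(av)=(m+1)/2$; the boundary value $a = 2^{(m-1)/2}$, for which $av = 2^m-2^{(m-1)/2}$ is a single run of $(m+1)/2$ ones, gives the same weight. Since now $(m+1)/2\equiv 0\pmod 3$, this places $\{av:1\le a\le 2^{(m-1)/2}\}$ in $T_{(0,m)}$. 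The third assertion is identical with $v = 2^{(m-1)/2}-1$, the $v$ of part~(1) of Lemma~\ref{lemma-m61}: the same expansion yields $w_{2}(av)=(m-1)/2$, which now satisfies $(m-1)/2\equiv 2\pmod 3$, so these elements lie in $T_{(2,m)}$.

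The only genuinely new computation is the second assertion, where $v = 2^{(m+1)/2}-1$ is multiplied by the shifted range $a = 2^{m-1}+2^{(m-1)/2}+1+t$ with $0\le t\le 2^{(m-3)/2}-1$. Here I would expand $av$ and reduce modulo $n = 2^m-1$ using $2^m\equiv 1$ and $2^{(3m-1)/2}\equiv 2^{(m-1)/2}$, observe that the intermediate value is nonpositive, and add $n$ to land in $[0,n)$, obtaining
$$av \bmod n = (t+1)2^{(m+1)/2}+2^{m-1}-1-t.$$
The final step is a bit-by-bit accounting: subtracting $t$ merely complements the low bits of the all-ones block $2^{m-1}-1$, while $(t+1)2^{(m+1)/2}$ and $2^{m-1}$ occupy strictly higher positions. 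One checks that the three blocks occupy pairwise disjoint bit positions (the separating gaps sitting just above $(m-1)/2$ and just below $m-1$), so the weight is additive and equals $(m+3)/2\equiv 1\pmod 3$, placing $av\bmod n$ in $T_{(1,m)}$.

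I expect this last reduction to be the main obstacle: unlike the other two parts, the shift into the upper half of $\Bbb Z_n$ forces a modular correction that turns a negative intermediate value into $av\bmod n$, and one must control the carries carefully enough to confirm that no two of the contributing blocks share a bit position. Once the disjointness is verified the weight count is routine, and the congruence $(m+3)/2\equiv 1\pmod 3$ finishes the proof.
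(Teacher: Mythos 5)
Your parts (1) and (3) are correct and are precisely what the paper intends (the paper omits this proof, citing the similarity to Lemma~\ref{lemma-m61}): the coprimality argument via Lemma~\ref{lemma-gcd2}, the reduction to the odd part $\bar a$, and the two weight computations giving $(m+1)/2\equiv 0$ and $(m-1)/2\equiv 2\pmod 3$ all go through verbatim. Part (2), however, contains a genuine error. Your reduction formula $av \bmod n = (t+1)2^{(m+1)/2}+2^{m-1}-1-t$ is correct, but the weight accounting built on it is not: the blocks you declare pairwise disjoint actually overlap. The summand $2^{m-1}-1-t$ has ones in \emph{every} position from $(m-3)/2$ up to $m-2$ (subtracting $t<2^{(m-3)/2}$ only complements positions $0$ through $(m-5)/2$), while $(t+1)2^{(m+1)/2}$ occupies positions from $(m+1)/2$ up to possibly $m-1$; these ranges intersect in positions $(m+1)/2,\dots,m-2$ for every $t$ except $t=2^{(m-3)/2}-1$. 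Concretely, for $m=5$, $t=0$ one gets $av\bmod n = 8+15=23$: the two blocks $(01000)_2$ and $(01111)_2$ share bit $3$, a carry occurs, and the block weights sum to $5$ while $w_2(23)=4$. So ``the weight is additive'' fails as stated; your final count $(m+3)/2$ is right, but not for the reason you give, and the step ``one checks that the three blocks occupy pairwise disjoint bit positions'' would collapse if actually checked.

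The repair is to group the terms the other way: attach $-1-t$ to $(t+1)2^{(m+1)/2}$ rather than to $2^{m-1}$, which yields $av \bmod n = (t+1)\bigl(2^{(m+1)/2}-1\bigr)+2^{m-1}=(t+1)v+2^{m-1}$. Since $1\le t+1\le 2^{(m-3)/2}$, one has $(t+1)v\le 2^{m-1}-2^{(m-3)/2}<2^{m-1}$, so the top bit really is disjoint from $(t+1)v$ and $w_2(av\bmod n)=w_2\bigl((t+1)v\bigr)+1$. Now $w_2\bigl((t+1)v\bigr)=(m+1)/2$ is exactly the computation you already carried out in part (1), applied to the odd part of $t+1$ (whose bits, note, are what govern the block structure --- not the bits of $t$ as in your write-up). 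Hence $w_2(av\bmod n)=(m+3)/2\equiv 1\pmod 3$, placing $av\bmod n$ in $T_{(1,m)}$, and with this regrouping your proof is complete and matches the technique of Lemma~\ref{lemma-m61}.
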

	\begin{proof}
		The proof is very similar to that of Lemma \ref{lemma-m61} and omitted here.
	\end{proof}

\subsubsection{The even $m$ case}
	
	\begin{lemma}\label{lemma-m401}
		Let $m\equiv 0 \pmod{4} \geq 4$  and $v=2^{(m-2)/2} - 1$. Then $\gcd(v,n)=1$ and
		$$
			\{av : 1 \leq a \leq 2^{(m-2)/2}\} \subseteq\begin{cases}
				  T_{(2,m)}, & \text {if } m\equiv 0 \pmod{6}; \\
			      T_{(0,m)}, & \text {if } m\equiv 2 \pmod{6};  \\
				  T_{(1,m)}, & \text {if } m\equiv 4 \pmod{6}.
			\end{cases}
		$$
	\end{lemma}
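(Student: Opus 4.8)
The plan is to follow the template already established in the proof of Lemma~\ref{lemma-m61}, specialised to the single shift value $v=2^{(m-2)/2}-1$. Write $s=(m-2)/2$ for brevity; since $m\equiv 0\pmod 4$ the integer $m/2$ is even, so $s=m/2-1$ is \emph{odd}. The whole argument will reduce to showing that every element of $\{av:1\le a\le 2^{s}\}$ has $2$-adic weight exactly $s$, after which I simply read off $s\bmod 3$ in each residue class of $m$ modulo $6$. For the coprimality, Lemma~\ref{lemma-gcd2} gives $\gcd(v,n)=2^{\gcd(s,m)}-1$; any common divisor of $s=m/2-1$ and $m$ divides $m-2(m/2-1)=2$, and since $s$ is odd this $\gcd$ equals $1$, so $\gcd(v,n)=1$.

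Next I would reduce to odd multipliers. For $1\le a\le 2^{s}$ write $a=2^{\ell}\bar a$ with $\bar a$ odd, so $1\le\bar a\le 2^{s}-1$. Because $av\le 2^{s}(2^{s}-1)=2^{m-2}-2^{s}<n$, no reduction modulo $n$ occurs, and a bit shift by $\ell$ preserves the $2$-adic weight; hence $w_2(av)=w_2(\bar a v)$, and it suffices to compute $w_2(\bar a v)$ for odd $\bar a$. Writing $\bar a=\sum_{i=0}^{s-1}a_i2^{i}$ with $a_0=1$, I would expand, exactly as in Lemma~\ref{lemma-m61},
\[
\bar a v=\bar a\,2^{s}-\bar a=\sum_{i=1}^{s-1}a_i2^{i+s}+\sum_{i=0}^{s-1}(1-a_i)2^{i}+1.
\]

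The key step, and the only one requiring genuine care, is to check that the right-hand side is already a valid (carry-free) $2$-adic expansion. The high block $\sum_{i=1}^{s-1}a_i2^{i+s}$ occupies bit positions $s+1$ through $2s-1$ and contributes $w_2(\bar a)-1$ ones, the $i=0$ term having dropped out because $a_0=1$. The low block $\sum_{i=0}^{s-1}(1-a_i)2^{i}+1$ occupies positions $0$ through $s-1$; the trailing $+1$ fills position $0$ since $1-a_0=0$, so it contributes $1+\sum_{i=1}^{s-1}(1-a_i)=s+1-w_2(\bar a)$ ones. Position $s$ is empty, so the two blocks are disjoint and
\[
w_2(\bar a v)=\bigl(w_2(\bar a)-1\bigr)+\bigl(s+1-w_2(\bar a)\bigr)=s=\frac{m-2}{2}.
\]

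Finally I would translate $s\bmod 3$ into the claimed defining sets. Under $m\equiv 0\pmod 4$ the relevant cases are $m\equiv 0,8,4\pmod{12}$, i.e.\ $m\equiv 0,2,4\pmod 6$, giving $s\equiv 2,0,1\pmod 3$ respectively; since $w_2(av)\equiv s\pmod 3$ for every $a$ in the range, the set lies in $T_{(2,m)}$, $T_{(0,m)}$, and $T_{(1,m)}$ respectively, as stated (the case $s=1$, i.e.\ $m=4$, is covered as well, the high block then being empty). I expect the main obstacle to be precisely the carry-free verification in the third paragraph—confirming the disjointness of the two blocks and that the $+1$ merely occupies the vacated position $0$; everything else is bookkeeping strictly parallel to Lemma~\ref{lemma-m61}.
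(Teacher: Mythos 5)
Your proof is correct and follows essentially the same route as the paper's: the identical application of Lemma~\ref{lemma-gcd2} for coprimality, the same reduction to odd $\bar a$, the same carry-free expansion of $\bar a v$, and the same weight count giving $w_2(av)=(m-2)/2$, translated into residues modulo $3$. The only (harmless) differences are that you absorb the case $a=2^{(m-2)/2}$ into the general reduction rather than treating it separately, and you make explicit two points the paper leaves implicit, namely that $av<n$ so no reduction modulo $n$ occurs and that the two bit blocks are disjoint.
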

	\begin{proof}
		It follows from Lemma \ref{lemma-gcd2} that
		$$ \gcd(v,n)=2^{\gcd((m-2)/2,m)}-1 =2^{\gcd((m-2)/2,(m+2)/2)}-1 =2^{\gcd((m-2)/2,2)}-1= 1.$$
		We begin to prove the second conclusion. When $a =2^{(m-2)/2}$, we have
		$$w_{2}(av) =w_{2}(v)=\frac{m-2}{2}\equiv \begin{cases}
			2 \pmod{3}, & \text{if } m \equiv 0 \pmod{6}; \\
			0 \pmod{3}, & \text{if } m\equiv 2 \pmod{6}; \\
			1 \pmod{3}, & \text{if }  m\equiv 4 \pmod{6}.
		\end{cases}$$
		Now assume that $1 \leq a \leq 2^{(m-2)/2} - 1$. Let $a=2^{l}\bar{a}$, where $\bar{a}$ is odd and $l \ge 0$ is an integer. Let the 2-adic expansion of $\bar{a}$ be given by
		$$ \bar{a} = \sum\limits_{i=0}^{(m-4)/2}a_{i}2^{i},$$
		where $a_{0}=1$. We have
		\begin{align*}
			\bar{a}v =\sum\limits_{i=1}^{(m-4)/2}a_{i}2^{i+(m-2)/2} + \sum\limits_{i=0}^{(m-4)/2}(1-a_{i})2^{i} +1.
		\end{align*}
		It then follows that
		\begin{align*} w_{2}(av)&=w_{2}(\bar{a}v) =  w_{2}(\bar{a})-1 + 1+ \frac{m-2}{2}-w_{2}(\bar{a})\\ &= \frac{m-2}{2}\equiv\begin{cases}
			2 \pmod{3}, & \text{if } m\equiv 0 \pmod{6}; \\
			0 \pmod{3}, & \text{if } m\equiv 2 \pmod{6};  \\
			1 \pmod{3}, & \text{if } m\equiv 4 \pmod{6}. \end{cases}\end{align*}
		This completes the proof.		
	\end{proof}

	\begin{lemma}\label{lemma-m421}
		Let $m\equiv 2 \pmod{4} \geq 4$  and $v=2^{(m-4)/2} - 1$. Then $\gcd(v,n)=1$ and
		\begin{align*}
			\{av : 1 \leq a \leq 2^{(m-4)/2} \} \subseteq \begin{cases}
				 T_{(1,m)}, & \text{if } m\equiv 0 \pmod{6}; \\
				 T_{(2,m)}, & \text{if } m\equiv 2 \pmod{6}; \\
			     T_{(0,m)}, & \text{if } m\equiv 4 \pmod{6}.
			\end{cases}
		\end{align*}
	\end{lemma}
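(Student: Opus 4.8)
The plan is to mirror exactly the structure of the proof of Lemma \ref{lemma-m401}, since the statement of Lemma \ref{lemma-m421} has the identical shape: a fixed $v$ of the form $2^{(m-4)/2}-1$, a coprimality claim, and a single arithmetic progression $\{av : 1 \le a \le 2^{(m-4)/2}\}$ whose residue class of $2$-weight modulo $3$ is read off from $m \bmod 6$. First I would dispose of the coprimality claim $\gcd(v,n)=1$ by invoking Lemma \ref{lemma-gcd2}: writing $v = 2^{(m-4)/2}-1$ and $n = 2^m-1$, we get $\gcd(v,n) = 2^{\gcd((m-4)/2,\,m)}-1$. Since $m \equiv 2 \pmod 4$, the integer $(m-4)/2 = (m-2)/2 - 1$ is odd, and $\gcd((m-4)/2, m) = \gcd((m-4)/2, m - (m-4)) = \gcd((m-4)/2, 4)$, which equals $1$ because $(m-4)/2$ is odd. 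Hence $\gcd(v,n)=1$.

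For the membership claim, the core computation is to show that $w_2(av) \equiv (m-4)/2 \pmod 3$ (independent of $a$) for every $a$ in the stated range, and then to translate $(m-4)/2 \bmod 3$ into the three cases. I would first handle the endpoint $a = 2^{(m-4)/2}$ separately, where $av = 2^{(m-4)/2} v$ is just a shift of $v$, so $w_2(av) = w_2(v) = (m-4)/2$. For the remaining range $1 \le a \le 2^{(m-4)/2}-1$, I would write $a = 2^l \bar a$ with $\bar a$ odd, so that $w_2(av) = w_2(\bar a v)$, and expand $\bar a = \sum_{i=0}^{(m-6)/2} a_i 2^i$ with $a_0 = 1$. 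The key identity is the "no-carry" decomposition
\begin{align*}
\bar a v = \sum_{i=1}^{(m-6)/2} a_i 2^{i+(m-4)/2} + \sum_{i=0}^{(m-6)/2}(1-a_i)2^i + 1,
\end{align*}
exactly as in Lemma \ref{lemma-m401}; summing the digit contributions gives $w_2(\bar a v) = (w_2(\bar a)-1) + 1 + \bigl(\tfrac{m-4}{2} - w_2(\bar a)\bigr) = \tfrac{m-4}{2}$, so the weight is constant. Finally I would record that $(m-4)/2 \equiv 1, 2, 0 \pmod 3$ according as $m \equiv 0, 2, 4 \pmod 6$, which matches the three target defining sets $T_{(1,m)}, T_{(2,m)}, T_{(0,m)}$.

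The main obstacle, and the only place demanding genuine care, is verifying that the displayed expansion of $\bar a v$ involves no binary carries, so that the $2$-weight really is the naive digit count. This requires checking that the top block of bits $\sum_{i \ge 1} a_i 2^{i+(m-4)/2}$ and the bottom block $\sum_i (1-a_i)2^i + 1$ occupy disjoint bit positions and never reach the bit $2^{m-1}$ (so that no reduction modulo $n$ is triggered); here the constraint $\bar a \le 2^{(m-4)/2}-1$, hence $\bar a$ having at most $(m-4)/2$ bits, is exactly what guarantees separation. Since the index ranges differ from Lemma \ref{lemma-m401} only by the shift from $(m-2)/2$ to $(m-4)/2$, once the no-carry structure is confirmed the weight bookkeeping is routine and the conclusion follows in the same way. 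I would therefore present the coprimality step in full, give the endpoint case, and then carry out the expansion, emphasizing the disjointness of the two bit-blocks rather than re-deriving the elementary summation.
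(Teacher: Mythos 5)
Your proposal is correct and follows essentially the same route as the paper's own proof: coprimality via Lemma \ref{lemma-gcd2}, the endpoint $a=2^{(m-4)/2}$ handled by shift-invariance of the $2$-weight, and the no-carry expansion of $\bar{a}v$ after factoring $a=2^{l}\bar{a}$ with $\bar{a}$ odd. In fact your displayed expansion with high-block shift $(m-4)/2$ is the correct one---the paper's proof contains a typo, writing $2^{i+(m-2)/2}$ for the high block---and your explicit verification that the two bit-blocks are disjoint and never reach position $m-1$ supplies exactly what the paper compresses into ``one can check.''
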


	\begin{proof}
		It follows from Lemma \ref{lemma-gcd2} that
		$$ \gcd(v,n)=2^{\gcd((m-4)/2,m)}-1 =2^{\gcd((m-4)/2,(m+4)/2)}-1= 1.$$
		When $a =2^{(m-4)/2}$, we have
		$$w_{2}(av) =w_{2}(v)=\frac{m-4}{2}\equiv \begin{cases}
			1 \pmod{3}, & \text{if } m\equiv 0 \pmod{6};  \\
			2 \pmod{3}, & \text{if } m\equiv 2 \pmod{6};  \\
			0 \pmod{3}, & \text{if } m\equiv 4 \pmod{6}.
		\end{cases}$$
		Next, we assume that $1 \leq a \leq 2^{(m-4)/2} - 1$. Let $a=2^{l}\bar{a}$, where $\bar{a}$ is odd and $l \ge 0$ is an integer. Let the 2-adic expansion of $\bar{a}$ be given by
		$$ \bar{a} = \sum\limits_{i=0}^{(m-6)/2}a_{i}2^{i},$$
		where $a_{0}=1$. One can check that
		\begin{align*}
			\bar{a}v =\sum\limits_{i=1}^{(m-6)/2}a_{i}2^{i+(m-2)/2} + \sum\limits_{i=0}^{(m-6)/2}(1-a_{i})2^{i} +1.
		\end{align*}
		It then follows that
		$$ w_{2}(av)=w_{2}(\bar{a}v) = \frac{m-4}{2}\equiv\begin{cases}
			1 \pmod{3}, & \text{if } m\equiv 0 \pmod{6}; \\
			2 \pmod{3}, & \text{if } m\equiv 2 \pmod{6}; \\
			0 \pmod{3}, & \text{if } m\equiv 4 \pmod{6}. \end{cases}$$
		Summarizing the discussions above completes proof.		
	\end{proof}

\subsection{Parameters of the binary codes $\mathcal C_{(i,m)}$}\label{sec-sqpara}

In this subsection, we investigate the dimensions and   minimum distances of the binary cyclic codes $\mathcal C_{(i,m)}$ for $i=0, 1, 2$.

Let $\omega=e^{\frac {2 \pi \sqrt{-1}} 3} \in \Bbb C$ be a $3$-th primitive root of unity, i.e., $\omega^2+\omega+1=0$. Denote
\begin{equation} \label{equ-s012}s_0=\sum_{\substack{0 \leq i \leq m\\ i \equiv 0\pmod{3}}}\binom{m}{i}, \ s_1=\sum_{\substack{0 \leq i \leq m\\ i \equiv 1\pmod{3}}}\binom{m}{i}, \ \text{ and }
s_2=\sum_{\substack{0 \leq i \leq m\\ i \equiv 2 \pmod{3}}}\binom{m}{i}.\end{equation}
Then we have
$$(1+\omega)^m=s_0+s_1 \omega+s_2 \omega^2, \ \ (1+\omega^2)^m=s_0+s_1 \omega^2+s_2 \omega.$$
It follows from $(1+\omega)^m(1+\omega^2)^m=1$ that
\begin{equation} \label{equ-srelation}
s_0^2+s_1^2+s_2^2-s_0s_1-s_0s_2-s_1s_2=1.
\end{equation}

\subsubsection{Parameters of  $\mathcal C_{(i,m)}$ when $m \ge 5$ is odd}

	When $m \equiv 1 \pmod {6} \geq 7$ is odd, the parameters of  $\mathcal C_{(i,m)}$ for $i=0, 1, 2$ are investigated in the following theorem.

	\begin{theorem}\label{them-4.1}
		Let $m \equiv 1 \pmod{6} \geq 7$ be an odd integer. Then the codes $\mathcal C_{(0,m)}$ and $\mathcal C_{(1,m)}$ have parameters
$$[2^m-1, \ (2^{m+1}-1)/3, \ d \geq 2^{(m-1)/2}+ 3],$$  and
		the code $\mathcal C_{(2,m)}$ has parameters $$[2^m-1, \ (2^{m+1}-1)/3, \ d \geq 2^{(m-3)/2} +2].$$
	\end{theorem}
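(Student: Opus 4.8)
The plan is to establish the two ingredients of the claimed parameters separately: the dimension $k = (2^{m+1}-1)/3$ for all three codes, and then the minimum-distance lower bounds via the BCH bound, using the consecutive-integer sets provided by Lemma~\ref{lemma-m61}.

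For the \textbf{dimension}, recall that $\dim \mathcal C_{(i,m)} = n - |T_{(i,m)}|$, where $T_{(i,m)}$ counts the nonzero $j \in \{1,\dots,n-1\}$ with $w_2(j) \equiv i \pmod 3$. First I would count, over all $2^m$ integers $0 \le j \le 2^m-1$, how many have binary weight in each residue class mod $3$; these counts are exactly $s_0, s_1, s_2$ defined in~\eqref{equ-s012}, since $w_2(j)$ ranges over $0,\dots,m$ and there are $\binom{m}{w}$ integers of weight $w$. The value $j=0$ (weight $0$) lies in the class $i\equiv 0$, so after excluding it we get $|T_{(0,m)}| = s_0 - 1$ and $|T_{(i,m)}| = s_i$ for $i=1,2$. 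The key computation is then to evaluate $s_0, s_1, s_2$ when $m \equiv 1 \pmod 6$ using the generating-function identities $(1+\omega)^m = s_0 + s_1\omega + s_2\omega^2$ and its conjugate, together with $\omega^2+\omega+1=0$. Writing $1+\omega = -\omega^2 = e^{\sqrt{-1}\pi/3}$, one has $(1+\omega)^m = e^{\sqrt{-1}m\pi/3}$, and for $m \equiv 1 \pmod 6$ this is $e^{\sqrt{-1}\pi/3} = 1+\omega$; comparing coefficients (using $2^m = s_0+s_1+s_2$ as the third equation) should force $s_1 - s_0 = 1$ and $s_2 = s_0$, whence $s_0 = s_2 = (2^m-1)/3$ and $s_1 = (2^m+2)/3$. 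A direct check then gives $|T_{(i,m)}| = (2^m-1)/3$ for each $i$, so $k = n - (2^m-1)/3 = (2^{m+1}-1)/3$, matching the claim for all three codes uniformly.

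For the \textbf{minimum distance}, I would invoke the BCH bound: if the defining set contains $\delta-1$ consecutive integers (in the cyclic sense, i.e., consecutive powers $\beta^a, \beta^{a+1}, \dots$), then $d \ge \delta$. The trick supplied by Lemma~\ref{lemma-m61} is that since $\gcd(v,n)=1$, the map $a \mapsto av \bmod n$ is a bijection on $\Bbb Z_n$, and the set $\{av : 1 \le a \le L\}$ is the image of a block of $L$ consecutive integers; as a set of \emph{exponents} this is equally a set of $L$ consecutive integers \emph{after reindexing by $v$}. Concretely, $\{\beta^{av} : 1 \le a \le L\} = \{(\beta^v)^a : 1 \le a \le L\}$, and since $\beta^v$ is again a primitive $n$-th root of unity, these are $L$ consecutive powers of the root $\beta^v$, giving the BCH bound $d \ge L+1$ relative to that root. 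For $\mathcal C_{(0,m)}$ and $\mathcal C_{(1,m)}$, Lemma~\ref{lemma-m61}(1)--(2) supplies $L = 2^{(m-1)/2}+2$ consecutive values, yielding $d \ge 2^{(m-1)/2}+3$. For $\mathcal C_{(2,m)}$, part~(3) supplies a block of $a$ running over $2^{(m-3)/2}+1$ consecutive integers (from $2^{m-1}+2^{(m-1)/2}+1$ to $2^{m-1}+2^{(m-1)/2}+2^{(m-3)/2}+1$, which is $2^{(m-3)/2}+1$ values), so $d \ge 2^{(m-3)/2}+2$.

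The \textbf{main obstacle}, and the step I would be most careful about, is the precise evaluation of $s_0, s_1, s_2$ in the congruence class $m \equiv 1 \pmod 6$ and its correct bookkeeping into the three defining-set sizes. The generating-function identity gives a complex equation that must be split into real and imaginary parts and combined with $s_0+s_1+s_2 = 2^m$ and the quadratic relation~\eqref{equ-srelation}; a sign error in $(1+\omega)^m$ or a miscount of the $j=0$ term would shift one dimension and break the uniform answer. The BCH step by contrast is routine once one observes the $\gcd(v,n)=1$ reindexing, which is precisely why Lemma~\ref{lemma-m61} establishes coprimality before exhibiting the consecutive block.
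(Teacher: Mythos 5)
Your BCH half is correct and is the paper's own argument in slightly cleaner form: since $\gcd(v,n)=1$, the roots $\alpha^{av}=(\alpha^v)^a$ for $a$ in a block of $L$ consecutive integers are $L$ consecutive powers of the primitive $n$-th root of unity $\alpha^v$, so the BCH bound gives $d\ge L+1$; your counts $L=2^{(m-1)/2}+2$ for $\mathcal C_{(0,m)},\mathcal C_{(1,m)}$ and $L=2^{(m-3)/2}+1$ for $\mathcal C_{(2,m)}$ from Lemma~\ref{lemma-m61} are right. The dimension half, however, fails as written, in three places. (i) The coefficient comparison is wrong: substituting $\omega^2=-1-\omega$ into $s_0+s_1\omega+s_2\omega^2=(1+\omega)^m=1+\omega$ gives $(s_0-s_2)+(s_1-s_2)\omega=1+\omega$, hence $s_0-s_2=s_1-s_2=1$ and, with $s_0+s_1+s_2=2^m$, one gets $s_0=s_1=(2^m+1)/3$ and $s_2=(2^m-2)/3$. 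Your claimed values $s_0=s_2=(2^m-1)/3$ and $s_1=(2^m+2)/3$ cannot be correct because they are not even integers: $2^m\equiv 2\pmod{3}$ for odd $m$ (for $m=7$ the true values are $s_0=s_1=43$, $s_2=42$). (ii) The bookkeeping omits one term: $T_{(i,m)}$ ranges over $1\le j\le n-1$, so besides $j=0$ you must also delete $j=n=2^m-1$, whose weight is $m\equiv 1\pmod{3}$; hence $|T_{(1,m)}|=s_1-1$, not $s_1$ (this is exactly the paper's bookkeeping $s_0=|T_{(0,m)}|+1$, $s_1=|T_{(1,m)}|+1$, $s_2=|T_{(2,m)}|$). (iii) The conclusion is internally inconsistent: your own values would give $|T_{(0,m)}|=(2^m-4)/3$, $|T_{(1,m)}|=(2^m+2)/3$, $|T_{(2,m)}|=(2^m-1)/3$, which are not equal; and even granting $|T_{(i,m)}|=(2^m-1)/3$ uniformly, the dimension would be $n-(2^m-1)/3=(2^{m+1}-2)/3$, not $(2^{m+1}-1)/3$.

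Once (i) and (ii) are repaired, your route does close: $|T_{(i,m)}|=(2^m-2)/3$ for all three $i$, hence $k=(2^m-1)-(2^m-2)/3=(2^{m+1}-1)/3$. It is worth noting that the repaired argument is genuinely different from the paper's: the paper never evaluates $s_0,s_1,s_2$ individually, but instead uses the symmetry $w_2(i)+w_2(n-i)=m$ to get $|T_{(0,m)}|=|T_{(1,m)}|$, then the quadratic relation \eqref{equ-srelation} to conclude $(|T_{(0,m)}|+1-|T_{(2,m)}|)^2=1$, and finally rules out the case $|T_{(2,m)}|=|T_{(0,m)}|+2$ by an integrality argument. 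Your direct evaluation of $(1+\omega)^m=e^{\sqrt{-1}\,m\pi/3}$ eliminates that case analysis and would be the shorter proof---but only with the linear algebra over the basis $\{1,\omega\}$ and the two excluded indices $j=0$ and $j=2^m-1$ handled correctly. The non-integrality of your intermediate values was the red flag to catch.
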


	\begin{proof}
		Note that $m \equiv 1 \pmod{6}$ is odd and  $w_{2}(i)+ w_{2}(n-i) = m$ for each $i$  with $ 1 \leq i \leq  n-1$. Hence $i \in  T_{(0,m)}$ if and only if $n-i \in T_{(1,m)}$, which means that
		$$|T_{(0,m)}|= |T_{(1,m)}|.$$
	In this case, by \eqref{equ-s012}, we have
$$s_0=|T_{(0,m)}|+1, \ s_1=|T_{(1,m)}|+1, \ s_2=|T_{(2,m)}|.$$
One can compute from \eqref{equ-srelation} that
$$(|T_{(0,m)}|+1-|T_{(2,m)}|)^2=1.$$
Then $$|T_{(0,m)}|= |T_{(2,m)}| \ \text{ or } \ |T_{(2,m)}|= |T_{(0,m)}|+2.$$
We assert that $|T_{(0,m)}|= |T_{(2,m)}|$ holds, otherwise, we have $|T_{(2,m)}|= |T_{(0,m)}|+2$ and $3|T_{(0,m)}|+2=2^m-2$, which leads to a contradiction since $|T_{(0,m)}|=(2^m-4)/3$ is not a integer.
 Consequently, $$|T_{(0,m)}| =|T_{(1,m)}| = |T_{(2,m)}| = (2^{m}-2)/3$$ and
$$\dim(\mathcal C_{(0,m)})=\dim(\mathcal C_{(1,m)})=\dim(\mathcal C_{(2,m)})=n-(2^{m}-2)/3=(2^{m+1}-1)/3.$$

  Denote $v=2^{\frac {m-1} 2}-1$. It follows from Lemma \ref{lemma-m61} that $\gcd(v,n) =1$. Let $ \overline{v}$ be the integer satisfying $v \overline{v} \equiv 1 \pmod n$.
  Write $\gamma = \alpha^{\overline{v}}$. It is deduced from Lemma \ref{lemma-m61} that the defining set of $\mathcal C_{(0,m)}$ with respect to $\gamma$ contains the set $\left\{ 1,2,...,2^{(m-1)/2}+2 \right\}$. The lower bound on the minimum distance of $\mathcal C_{(0,m)}$ then follows from the BCH bound on cyclic codes. The desired conclusion on the minimum distances of $\mathcal C_{(1,m)}$ and $\mathcal C_{(2,m)}$ can be similarly obtained. This completes the proof.
	\end{proof}

When $m \equiv 3 \pmod {6} \geq 9$ is odd, the parameters of  $\mathcal C_{(i,m)}$ for $i=0, 1, 2$ are treated in the following theorem.

	\begin{theorem}\label{them-4.2}
		Let $m \equiv 3 \pmod{6} \geq 9$ be an odd integer. Then the codes $\mathcal C_{(1,m)}$ and  $\mathcal C_{(2,m)}$ have  parameters
$$[2^m-1, \ (2^{m+1}-4)/3, \ d \geq 2^{(m-1)/2} +1],$$ and
		the code $\mathcal C_{(0,m)}$ has parameters $$[2^m-1, \ (2^{m+1}+5)/3, \ d \geq 2^{(m-3)/2} +1].$$
	\end{theorem}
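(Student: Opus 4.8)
The plan is to follow exactly the same two-part structure used in the proof of Theorem \ref{them-4.1}: first determine the dimensions of the three codes by a counting argument based on the relation \eqref{equ-srelation}, and then invoke Lemma \ref{lemma-m63} together with the BCH bound to extract the lower bounds on the minimum distances. The key difference from the $m\equiv 1\pmod 6$ case is the symmetry type. Here $m\equiv 3\pmod 6$ is still odd, so $w_2(i)+w_2(n-i)=m\equiv 0\pmod 3$ for each $1\le i\le n-1$; this means $i\in T_{(0,m)}$ if and only if $n-i\in T_{(0,m)}$, while $i\in T_{(1,m)}$ if and only if $n-i\in T_{(2,m)}$. The first step is therefore to record $|T_{(1,m)}|=|T_{(2,m)}|$ from this $i\mapsto n-i$ involution.

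Next I would translate the quantities $s_0,s_1,s_2$ of \eqref{equ-s012} into the cardinalities $|T_{(i,m)}|$. Since $m\equiv 3\pmod 6$, we have $w_2(0)=0\equiv 0$ and $w_2(n)=w_2(2^m-1)=m\equiv 0\pmod 3$, so both the index $0$ and the ``full weight'' term $\binom{m}{m}$ attach to the residue-$0$ class. Carefully accounting for which of the $s_i$ receive the $+1$ contributions from these boundary indices (the index $i=0$ is excluded from every $T_{(i,m)}$ by definition, since those run over $1\le j\le n-1$), I expect to obtain relations of the shape $s_0=|T_{(0,m)}|+c_0$, $s_1=|T_{(1,m)}|+c_1$, $s_2=|T_{(2,m)}|+c_2$ for explicit small constants $c_0,c_1,c_2$. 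Substituting these together with $|T_{(1,m)}|=|T_{(2,m)}|$ into \eqref{equ-srelation} yields a quadratic in the remaining unknown, which I solve and then rule out the spurious root by the same integrality/parity argument used in Theorem \ref{them-4.1} (one candidate value forces a non-integer, hence is discarded). This should pin down $|T_{(0,m)}|=(2^m-7)/3$ and $|T_{(1,m)}|=|T_{(2,m)}|=(2^m-4)/3$, giving $\dim\mathcal C_{(0,m)}=n-|T_{(0,m)}|=(2^{m+1}+5)/3$ and $\dim\mathcal C_{(1,m)}=\dim\mathcal C_{(2,m)}=(2^{m+1}-4)/3$, matching the claimed parameters.

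For the distance bounds I would proceed as in Theorem \ref{them-4.1}. For each $i$, Lemma \ref{lemma-m63} supplies a $v$ with $\gcd(v,n)=1$ such that a block of consecutive multiples $\{av\bmod n\}$ lies in $T_{(i,m)}$. Letting $\overline v$ be the inverse of $v$ modulo $n$ and setting $\gamma=\alpha^{\overline v}$, the defining set with respect to the primitive $n$-th root of unity $\gamma$ contains a run of consecutive integers (of length $2^{(m-1)/2}$ for $i=1,2$, and a block of size $2^{(m-3)/2}$ shifted but still consecutive for $i=0$). The BCH bound then gives $d\ge 2^{(m-1)/2}+1$ for $\mathcal C_{(1,m)}$ and $\mathcal C_{(2,m)}$, and $d\ge 2^{(m-3)/2}+1$ for $\mathcal C_{(0,m)}$. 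One technical point to verify is that the consecutive run really has the stated length: Lemma \ref{lemma-m63} states $1\le a\le 2^{(m-1)/2}$ (respectively the shifted range for $i=0$), so after multiplying by $\overline v$ the images $\gamma^{av}=\gamma^{a\overline v v}$ correspond to the consecutive exponents $a$ themselves, and the count of consecutive integers is $2^{(m-1)/2}$, i.e. $\delta-1$ with $\delta=2^{(m-1)/2}+1$.

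The main obstacle I anticipate is the bookkeeping in the dimension computation rather than the distance bound. Getting the boundary constants $c_0,c_1,c_2$ right — precisely how the excluded index $0$ and the top term $\binom{m}{m}$ (corresponding to $w_2(n)=m\equiv 0$) distribute among the $s_i$, and whether the involution symmetry is $T_{(0)}\leftrightarrow T_{(0)}$, $T_{(1)}\leftrightarrow T_{(2)}$ as opposed to the $T_{(0)}\leftrightarrow T_{(1)}$ symmetry seen when $m\equiv 1\pmod 6$ — is where a sign or off-by-one slip would propagate into the wrong dimension. Once those constants are fixed, \eqref{equ-srelation} collapses to a quadratic whose integral root is forced, exactly paralleling the earlier theorem, so the remainder is routine.
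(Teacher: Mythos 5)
Your overall strategy is exactly the paper's: use the involution $i\mapsto n-i$ to get $|T_{(1,m)}|=|T_{(2,m)}|$, attach the two boundary indices ($0$ and $n=2^m-1$, both of weight $\equiv 0 \pmod 3$) to the residue-$0$ class so that $s_0=|T_{(0,m)}|+2$, $s_1=|T_{(1,m)}|$, $s_2=|T_{(2,m)}|$, collapse \eqref{equ-srelation} to a square, discard the spurious root by integrality, and then get the distance bounds from Lemma \ref{lemma-m63} plus the BCH bound via $\gamma=\alpha^{\overline v}$. The symmetry type, the boundary constants, and the entire BCH portion (runs of length $2^{(m-1)/2}$ for $i=1,2$ and $2^{(m-3)/2}$ for $i=0$) are all correct and match the paper.

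However, the cardinalities you announce are wrong, and wrong in a way that makes the written argument internally inconsistent. With $t_0=|T_{(0,m)}|$ and $t=|T_{(1,m)}|=|T_{(2,m)}|$, substituting $s_0=t_0+2$, $s_1=s_2=t$ into \eqref{equ-srelation} gives $\left((t_0+2)-t\right)^2=1$, so $t=t_0+1$ or $t=t_0+3$; combined with $t_0+2t=2^m-2$, the first case forces $t_0=(2^m-4)/3$, which is not an integer since $2^m\equiv 2\pmod 3$ for odd $m$, and the surviving case gives $t_0=(2^m-8)/3$ and $t=(2^m+1)/3$. Your values $|T_{(0,m)}|=(2^m-7)/3$ and $|T_{(1,m)}|=|T_{(2,m)}|=(2^m-4)/3$ fail every sanity check: neither is an integer (again because $2^m\equiv 2 \pmod 3$), their sum is $2^m-5\neq 2^m-2$, and they do not produce the dimensions you then claim — e.g. $n-(2^m-7)/3=(2^{m+1}+4)/3$, not $(2^{m+1}+5)/3$, and $n-(2^m-4)/3=(2^{m+1}+1)/3$, not $(2^{m+1}-4)/3$. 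This is precisely the bookkeeping slip you flagged as the danger point; with the corrected cardinalities $(2^m-8)/3$ and $(2^m+1)/3$ the dimensions $(2^{m+1}+5)/3$ and $(2^{m+1}-4)/3$ follow as in the theorem, and the rest of your argument goes through unchanged.
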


	\begin{proof}
		Note that $m \equiv 3 \pmod{6}$ and  $w_{2}(i) + w_{2}(n-i)= m$ for each $i$ with $ 1 \leq i \leq  n-1$. Hence $i \in  T_{(1,m)}$ if and only if $n-i \in  T_{(2,m)}$.
Then $$|T_{(1,m)}|= |T_{(2,m)}|.$$ In this case, by \eqref{equ-s012}, we have
$$s_0=|T_{(0,m)}|+2, \ s_1=|T_{(1,m)}|, \ s_2=|T_{(2,m)}|.$$
One can similarly obtain from \eqref{equ-srelation} that
$|T_{(1,m)}|=|T_{(0,m)}|+3$.
Thus, we have $$|T_{(0,m)}| =(2^{m}-8)/3, \
|T_{(1,m)}| = |T_{(2,m)}| = (2^{m}+1)/3.$$
The desired conclusion on the dimensions then follows.
The lower bounds on the minimum distances can be derived by employing Lemma \ref{lemma-m63}. The proof is very similar to that of Theorem \ref{them-4.1} and omitted.
	\end{proof}

When $m \equiv 5 \pmod {6} \geq 5$ is odd, the parameters of  $\mathcal C_{(i,m)}$ for $i=0, 1, 2$ are investigated in the following theorem.

	\begin{theorem}\label{them-4.3}
		Let $m \equiv 5 \pmod{6} \geq 5$ be an odd integer. Then the codes $\mathcal C_{(0,m)}$ and $\mathcal C_{(2,m)}$ have  parameters
$$[2^m-1, \ (2^{m+1}-1)/3, \ d \ge 2^{(m-1)/2} +1],$$ and
		the code $\mathcal C_{(1,m)}$ has parameters $$[2^m-1, \ (2^{m+1}-1)/3, \ d \ge 2^{(m-3)/2} +1].$$
	\end{theorem}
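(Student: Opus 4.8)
The plan is to mirror the structure of the proofs of Theorems \ref{them-4.1} and \ref{them-4.2}, since Theorem \ref{them-4.3} is the analogous statement for the case $m \equiv 5 \pmod 6$. The argument naturally splits into two parts: first determine the dimensions of $\mathcal C_{(0,m)}$, $\mathcal C_{(1,m)}$, $\mathcal C_{(2,m)}$, and then establish the lower bounds on their minimum distances using the consecutive-run information already packaged in Lemma \ref{lemma-m65}.

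For the dimensions, I would start from the symmetry $w_2(i) + w_2(n-i) = m$ for $1 \le i \le n-1$. Since $m \equiv 5 \equiv 2 \pmod 3$, the map $i \mapsto n-i$ sends a coordinate of weight-class $i$ to one of weight-class $2 - i \pmod 3$; in particular it interchanges $T_{(0,m)}$ and $T_{(2,m)}$ and fixes $T_{(1,m)}$ setwise, so $|T_{(0,m)}| = |T_{(2,m)}|$. Next I would match the $T$-sets to the binomial sums $s_0, s_1, s_2$ of \eqref{equ-s012} by accounting for the weight of $0$ (which has $w_2(0)=0$) and the weight of $n = 2^m-1$ (which has $w_2(n)=m\equiv 2\pmod 3$); this gives $s_0 = |T_{(0,m)}| + 1$, $s_2 = |T_{(2,m)}| + 1$, and $s_1 = |T_{(1,m)}|$. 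Substituting into the relation \eqref{equ-srelation} and using $|T_{(0,m)}| = |T_{(2,m)}|$ reduces it to a small quadratic in $(|T_{(1,m)}| - |T_{(0,m)}|)$; as in the earlier proofs, the integrality constraint $\sum_i |T_{(i,m)}| = 2^m - 2$ rules out the spurious root and forces $|T_{(0,m)}| = |T_{(1,m)}| = |T_{(2,m)}| = (2^m - 2)/3$. The dimensions $\dim \mathcal C_{(i,m)} = n - |T_{(i,m)}| = (2^{m+1}-1)/3$ then follow immediately.

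For the minimum distances, the mechanism is exactly the multiplier/BCH trick from Theorem \ref{them-4.1}. For $\mathcal C_{(0,m)}$ and $\mathcal C_{(2,m)}$, Lemma \ref{lemma-m65} supplies a unit $v$ with $\gcd(v,n)=1$ and a run $\{av : 1 \le a \le 2^{(m-1)/2}\}$ contained in the respective defining set. Letting $\bar v$ be the inverse of $v$ modulo $n$ and setting $\gamma = \alpha^{\bar v}$, the defining set of the code with respect to the primitive $n$-th root of unity $\gamma$ contains the $2^{(m-1)/2}$ consecutive integers $\{1, 2, \ldots, 2^{(m-1)/2}\}$, so the BCH bound yields $d \ge 2^{(m-1)/2} + 1$. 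For $\mathcal C_{(1,m)}$, the run furnished by part (2) of Lemma \ref{lemma-m65} is a translated block $\{av \bmod n : 2^{m-1}+2^{(m-1)/2}+1 \le a \le 2^{m-1}+2^{(m-1)/2}+2^{(m-3)/2}\}$ of length $2^{(m-3)/2}$; after multiplying by $\bar v$ this becomes $2^{(m-3)/2}$ consecutive integers, and the BCH bound gives $d \ge 2^{(m-3)/2} + 1$. This accounts for the asymmetry in the two stated bounds.

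The only genuinely delicate point is the dimension computation, specifically eliminating the extraneous solution of the quadratic coming from \eqref{equ-srelation}; everything else is a transcription of the established pattern, so I would simply remark that the minimum-distance argument is parallel to that of Theorem \ref{them-4.1} and omit the routine BCH verification. I expect no new obstacle beyond verifying that the $s_i$-to-$T_i$ correspondence is set up with the correct residue bookkeeping for $m \equiv 5 \pmod 6$, which differs from the $m \equiv 1$ and $m \equiv 3$ cases only in which of $s_0, s_1, s_2$ absorbs the $+1$ contributions from $0$ and $n$.
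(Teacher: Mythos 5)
Your proposal is correct and follows essentially the same route as the paper: the paper's (abbreviated) proof likewise uses the symmetry $i \mapsto n-i$ to get $|T_{(0,m)}|=|T_{(2,m)}|$, the assignment $s_0=|T_{(0,m)}|+1$, $s_1=|T_{(1,m)}|$, $s_2=|T_{(2,m)}|+1$ plugged into \eqref{equ-srelation} with the integrality argument to pin down all three sizes as $(2^m-2)/3$, and then the multiplier-plus-BCH argument of Theorem \ref{them-4.1} applied to the runs in Lemma \ref{lemma-m65}. Your explicit accounting for the shorter run of length $2^{(m-3)/2}$ in part (2) of Lemma \ref{lemma-m65}, which produces the weaker bound for $\mathcal C_{(1,m)}$, is exactly the detail the paper leaves implicit.
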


	\begin{proof}
		Note that $m \equiv 5 \pmod{6}$. Then one similarly has
$$|T_{(2,m)}| = |T_{(0,m)}|$$ and
$$s_0=|T_{(0,m)}|+1, \ s_1=|T_{(1,m)}|, \ s_2=|T_{(2,m)}|+1.$$
The remainder of the proof is very similar to that of Theorem \ref{them-4.1} with the help of Lemma \ref{lemma-m65}, and we omit the details.
	\end{proof}

\subsubsection{Parameters of  $\mathcal C_{(i,m)}$ when $m \ge 4$ is even}

Assume that $m \geq 4$ is an even integer. We investigate the parameters of  $\mathcal C_{(i,m)}$ in the two cases:
$m \equiv 0 \pmod{4}$ and $m \equiv 2 \pmod{4}$.
When $m \equiv 0 \pmod{4} \geq 4$ is even, the parameters of  $\mathcal C_{(i,m)}$ for $i=0, 1, 2$ are studied in the following theorem.

    \begin{theorem}\label{them-4.5}
    Let $  m \equiv 0 \pmod{4} \geq 4$ be an even integer. Then we have the following.
    	\begin{enumerate}
    		\item If $m \equiv 0 \pmod{6} \geq 6$, then $\mathcal C_{(1,m)}$ and $\mathcal C_{(2,m)}$ have parameters $$[2^m-1, \ (2^{m+1}-2)/3, \ d   \geq 2^{(m-2)/2} +1],$$
    and $\mathcal C_{(0,m)}$ has parameters $$[2^m-1, \ (2^{m+1}+1)/3, \ d \geq 2^{(m-2)/2} +1].$$
    		\item If $m \equiv 2 \pmod{6} \geq 8$, then $\mathcal C_{(0,m)}$ and $\mathcal C_{(2,m)}$ have parameters $$[2^m-1, \ (2^{m+1}+1)/3, \ d  \geq 2^{(m-2)/2} +1],$$
    and $\mathcal C_{(1,m)}$ has parameters $$[2^m-1, \  (2^{m+1}-5)/3, \ d \geq 2^{(m-2)/2} +1].$$
    		\item If $m \equiv 4 \pmod{6} \geq 4$, then $\mathcal C_{(0,m)}$ and $\mathcal C_{(1,m)}$ have parameters $$[2^m-1, \  (2^{m+1}+1)/3, \  d \geq 2^{(m-2)/2} +1],$$
    and $\mathcal C_{(2,m)}$ has parameters $$[2^m-1, \ (2^{m+1}-5)/3, \ d \geq 2^{(m-2)/2} +1].$$
    	\end{enumerate}	
    \end{theorem}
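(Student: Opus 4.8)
The plan is to determine the three dimensions from the combinatorial identity \eqref{equ-srelation} and then to lower-bound the three minimum distances by exhibiting, for each code, a run of consecutive exponents in its defining set and invoking the BCH bound, exactly in the spirit of the proofs of Theorems \ref{them-4.1}--\ref{them-4.3}.

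For the dimensions I would first use $w_2(j)+w_2(n-j)=m$ for $1\le j\le n-1$ to identify which two of $T_{(0,m)},T_{(1,m)},T_{(2,m)}$ have equal size: the complementation $j\mapsto n-j$ sends weight class $i$ to class $m-i\bmod 3$, so it swaps $T_{(1,m)},T_{(2,m)}$ (fixing $T_{(0,m)}$) when $m\equiv 0\pmod 6$, swaps $T_{(0,m)},T_{(2,m)}$ when $m\equiv 2\pmod 6$, and swaps $T_{(0,m)},T_{(1,m)}$ when $m\equiv 4\pmod 6$. Next I would express $s_0,s_1,s_2$ of \eqref{equ-s012} in terms of the $|T_{(i,m)}|$, correcting only for the two exponents $j=0$ (weight $0$) and $j=n$ (weight $m$) lying outside the defining sets, the residue $m\bmod 3$ deciding into which $s_i$ each correction falls. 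Substituting into \eqref{equ-srelation} together with $s_0+s_1+s_2=2^m$ collapses to a relation of the form $(s_i-s_j)^2=1$; since $m$ is even we have $2^m\equiv 1\pmod 3$, so integrality of $|T_{(i,m)}|$ forces the sign and yields $|T_{(i,m)}|\in\{(2^m-4)/3,(2^m-1)/3,(2^m+2)/3\}$ as dictated by the sub-case. The dimensions $2^m-1-|T_{(i,m)}|$ are then exactly those in the statement.

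For the minimum distances, Lemma \ref{lemma-m401} supplies, for one of the three codes in each sub-case, the inclusion $\{av:1\le a\le 2^{(m-2)/2}\}\subseteq T_{(i,m)}$ with $\gcd(v,n)=1$. Passing to the primitive $n$-th root of unity $\gamma=\alpha^{v}$, these elements are exactly the powers $\gamma^{1},\dots,\gamma^{2^{(m-2)/2}}$, so the defining set of that code with respect to $\gamma$ contains $2^{(m-2)/2}$ consecutive integers and the BCH bound gives $d\ge 2^{(m-2)/2}+1$, as in Theorem \ref{them-4.1}. For the second code of each grouped pair I would invoke the reversal isometry $c(x)\mapsto x^{n-1}c(x^{-1})$, which sends $\mathcal C_{(i,m)}$ to the code with defining set $-T_{(i,m)}=T_{(m-i\bmod 3,m)}$ and preserves Hamming weights; in each sub-case this reversal carries the code covered by Lemma \ref{lemma-m401} onto its partner, so the two share the same minimum distance.

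The main obstacle is the remaining code in each sub-case -- namely $\mathcal C_{(0,m)}$ when $m\equiv 0\pmod 6$, $\mathcal C_{(1,m)}$ when $m\equiv 2\pmod 6$, and $\mathcal C_{(2,m)}$ when $m\equiv 4\pmod 6$ -- which is precisely the code fixed by the reversal map, its weight class being the self-paired residue (the $i$ with $2i\equiv m\pmod 3$). Since complementation fixes this class, no multiplier transports the run of Lemma \ref{lemma-m401} into its defining set, and a direct construction is unavoidable. I would produce one by a computation parallel to the third part of Lemma \ref{lemma-m61}: keeping $v=2^{(m-2)/2}-1$ but translating the interval of multipliers to one near $2^{m-1}$, I would show that $av\bmod n$ again has constant $2$-weight, now in the required residue class, across an interval of $a$, after which the same BCH argument applies. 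The delicate point, and the crux of the whole theorem, is to control the carries in $av\bmod n$ finely enough to keep the weight constant over an interval of the full length $2^{(m-2)/2}$: unlike the odd-$m$ situation of Lemma \ref{lemma-m61}, where the analogous shifted interval was shorter and produced a weaker estimate, here the interval must attain length $2^{(m-2)/2}$ so that this third code reaches the same bound $d\ge 2^{(m-2)/2}+1$ as the other two.
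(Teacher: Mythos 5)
Your determination of the dimensions is correct and is exactly the paper's argument: the pairing $j\mapsto n-j$ identifying which two of the $|T_{(i,m)}|$ coincide, the bookkeeping between the $s_i$ of \eqref{equ-s012} and the $|T_{(i,m)}|$, and the identity \eqref{equ-srelation} forcing the sizes. Your treatment of the code singled out by Lemma \ref{lemma-m401} also matches the paper, and your reversal argument is a valid way to transfer the BCH bound to its partner, since $-T_{(i,m)}=T_{((m-i)\bmod 3,\,m)}$ and coordinate reversal preserves weights. In fact the paper's own proof is silent on this point --- it invokes only Lemma \ref{lemma-m401}, which covers a single code per residue class of $m$ modulo $6$ --- so the difficulty you flag for the remaining codes is real and is glossed over by the paper itself.

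The genuine gap is your plan for the third, self-paired code, and as stated it would fail. With $v=2^k-1$ and $k=(m-2)/2$, the carry-free mechanism behind Lemma \ref{lemma-m401} and Lemma \ref{lemma-m61}(3) produces constant-weight runs only of the following shape: if the base multiple is $a_0v\equiv 2^j-1\pmod n$ with $j\le k$, then $(a_0+t)v\equiv t2^k+(2^j-1-t)\pmod n$ has weight exactly $j$ for $0\le t<\min(2^j,2^{m-k})$; runs started at any other base point lose constancy immediately (for instance base $2^{k-1}-2$ gives weight $k-2$ at $t=0$ but $k-1$ at $t=1$), and $j>k$ is ruled out by carries between the two blocks. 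Now $k\equiv m/2-1\pmod 3$ and $k-1\equiv m/2+1\pmod 3$, so the largest admissible $j$ lying in the self-paired class $m/2\bmod 3$ is $j=(m-6)/2$: keeping $v=2^{(m-2)/2}-1$, no translation of the multiplier interval can do better than $d\ge 2^{(m-6)/2}+1$, a factor of four short of the claimed $2^{(m-2)/2}+1$. The repair is to change the multiplier rather than the interval: take $v=2^{(m+2)/2}-1$, which is a unit because $\gcd((m+2)/2,m)=1$ when $4\mid m$ (Lemma \ref{lemma-gcd2}), and consider the run $(2^{m/2}-1)+t\bigl(2^{(m+2)/2}-1\bigr)=t2^{(m+2)/2}+\bigl(2^{m/2}-1-t\bigr)$ for $0\le t\le 2^{(m-2)/2}-1$. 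Each of these integers is less than $n$ and has weight exactly $m/2$, and $m/2\bmod 3$ is precisely the self-paired residue in every sub-case; hence the defining set of the third code with respect to $\gamma=\alpha^{v}$ contains the $2^{(m-2)/2}$ consecutive integers $a_0,a_0+1,\ldots$ with $a_0\equiv(2^{m/2}-1)v^{-1}\pmod n$, and the BCH bound yields $d\ge 2^{(m-2)/2}+1$. With this substitution (the same $v$ with the plain interval $1\le a\le 2^{(m-2)/2}$ also covers the second code, weight $(m+2)/2$, if you prefer not to use reversal), your outline becomes a complete proof.
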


    \begin{proof}
    	When $m \equiv 0 \pmod{6}$, we similarly have
$|T_{(1,m)}| = |T_{(2,m)}|$ and
$$s_0=|T_{(0,m)}|+2, \ s_1=|T_{(1,m)}|, \ s_2=|T_{(2,m)}|.$$
It is deduced from \eqref{equ-srelation} that   $$|T_{(0,m)}| =(2^{m}-4)/3, \
|T_{(1,m)}| = |T_{(2,m)}| = (2^{m}-1)/3.$$

    	When $m \equiv 2 \pmod{6} \geq 8$, we similarly have
    	$|T_{(0,m)}| = |T_{(2,m)}|$ and $$s_0=|T_{(0,m)}|+1, \ s_1=|T_{(1,m)}|, \ s_2=|T_{(2,m)}|+1.$$
It is deduced from \eqref{equ-srelation} that  $$|T_{(0,m)}|= |T_{(2,m)}| =(2^{m}-4)/3, \
|T_{(1,m)}|  = (2^{m}+2)/3.$$

    	When $m \equiv 4 \pmod{6} \geq 4$,  we similarly have
    	$|T_{(0,m)}| = |T_{(1,m)}|$  and $$s_0=|T_{(0,m)}|+1, \ s_1=|T_{(1,m)}|+1, \ s_2=|T_{(2,m)}|.$$
It is deduced from \eqref{equ-srelation} that  $$|T_{(0,m)}|= |T_{(1,m)}| =(2^{m}-4)/3, \
|T_{(2,m)}|  = (2^{m}+2)/3.$$
     The desired conclusion on the dimension of $\mathcal C_{(i,m)}$ then follows.

    For $v=2^{(m-2)/2} - 1$, it follows from Lemma \ref{lemma-m401} that $\gcd(v,n) =1$ if $m \equiv 0 \pmod{4}$.
     Let $ \overline{v}$ be the multiplicative inverse of $v$ modulo $n$  and let $\gamma = \alpha^{\overline{v}}$. It then follows again from Lemma \ref{lemma-m401} that the defining set of $\mathcal C_{(i,m)}$ with respect to $\gamma$ contains the set $\left\{ 1,2,...,2^{(m-2)/2} \right\}$. The desired lower bound on $d$ then follows from the BCH bound on cyclic codes.
    \end{proof}

When $m \equiv 2 \pmod{4} \geq 4$ is even, the parameters of  $\mathcal C_{(i,m)}$ for $i=0, 1, 2$ are investigated in the following theorem.

    \begin{theorem}\label{them-4.6}
     Let $  m \equiv 2 \pmod{4} \geq 4$ be an even integer. Then we have the following.
     \begin{enumerate}
    	\item If $m \equiv 0 \pmod{6} \geq 6$, then $\mathcal C_{(1,m)}$ and $\mathcal C_{(2,m)}$ have parameters $$[2^m-1, \ (2^{m+1}-2)/3, \ d \geq 2^{(m-4)/2} +1],$$
    and $\mathcal C_{(0,m)}$ has parameters $$[2^m-1, \  (2^{m+1}+1)/3, \ d \geq 2^{(m-4)/2} +1].$$
    	\item If $m \equiv 2 \pmod{6} \geq 8$, then $\mathcal C_{(0,m)}$ and $\mathcal C_{(2,m)}$ have  parameters $$[2^m-1, \  (2^{m+1}+1)/3, \  d \geq 2^{(m-4)/2} +1],$$
    and $\mathcal C_{(1,m)}$ has parameters $$[2^m-1, \   (2^{m+1}-5)/3, \ d \geq 2^{(m-4)/2} +1].$$
    	\item If $m \equiv 4 \pmod{6} \geq 4$, then $\mathcal C_{(0,m)}$ and $\mathcal C_{(1,m)}$ have  parameters $$[2^m-1, \  (2^{m+1}+1)/3, \  d \geq 2^{(m-4)/2} +1],$$
   and $\mathcal C_{(2,m)}$ has parameters $$[2^m-1, \ (2^{m+1}-5)/3, \ d \geq 2^{(m-4)/2} +1].$$
     \end{enumerate}
 	    \end{theorem}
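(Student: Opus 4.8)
The plan is to follow the proof of Theorem \ref{them-4.5} almost verbatim, since passing from $m\equiv 0\pmod 4$ to $m\equiv 2\pmod 4$ changes only the multiplier used in the minimum-distance estimate, not the dimensions. The first observation is that each cardinality $|T_{(i,m)}|$ depends only on $m\bmod 6$, through the quantities $s_0,s_1,s_2$ in \eqref{equ-s012} and the identity \eqref{equ-srelation}, and not on the residue of $m$ modulo $4$. Consequently the dimension part of this theorem is word-for-word the same as in Theorem \ref{them-4.5}, and I would simply reproduce that computation.

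Concretely, for the dimensions I would split into the three subcases $m\equiv 0,2,4\pmod 6$. In each subcase the relation $w_2(i)+w_2(n-i)=m$ for $1\le i\le n-1$ forces exactly one equality among $|T_{(0,m)}|,|T_{(1,m)}|,|T_{(2,m)}|$ (namely $|T_{(1,m)}|=|T_{(2,m)}|$ when $m\equiv 0$, $|T_{(0,m)}|=|T_{(2,m)}|$ when $m\equiv 2$, and $|T_{(0,m)}|=|T_{(1,m)}|$ when $m\equiv 4$). Reading off $s_0,s_1,s_2$ from \eqref{equ-s012}, with the $+1$ or $+2$ corrections coming from the excluded index $0$, and substituting into \eqref{equ-srelation} then pins down all three cardinalities: $(2^m-4)/3$ for the two equal sets and $(2^m-1)/3$ or $(2^m+2)/3$ for the remaining one, exactly as in Theorem \ref{them-4.5}. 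Subtracting from $n=2^m-1$ yields the stated dimensions $(2^{m+1}+1)/3$, $(2^{m+1}-2)/3$, or $(2^{m+1}-5)/3$.

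For the minimum distance I would invoke Lemma \ref{lemma-m421} with $v=2^{(m-4)/2}-1$, which supplies $\gcd(v,n)=1$ together with the arithmetic progression $\{av:1\le a\le 2^{(m-4)/2}\}\subseteq T_{(i,m)}$, where $i$ is $1,2,0$ according as $m\equiv 0,2,4\pmod 6$. Since $v$ is a unit modulo $n$, reindexing the root-of-unity exponents by $v$ turns this progression into the block of consecutive integers $\{1,2,\dots,2^{(m-4)/2}\}$ lying in the defining set of $\mathcal C_{(i,m)}$ with respect to a suitable primitive $n$-th root of unity, and the BCH bound then gives $d\ge 2^{(m-4)/2}+1$ for that code. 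The involution $i\mapsto n-i$, which (because $w_2(i)+w_2(n-i)=m$) sends $T_{(a,m)}$ to $T_{(m-a\bmod 3,\,m)}$ and is realized by a coordinate permutation and hence preserves the minimum distance, then transfers this bound to a second code in each subcase. One checks that the two codes reached this way are precisely the two codes of equal dimension listed in the theorem.

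The hard part will be the third code in each subcase, namely the one fixed by $i\mapsto n-i$; it is $\mathcal C_{(0,m)}$ for $m\equiv 0$, $\mathcal C_{(1,m)}$ for $m\equiv 2$, and $\mathcal C_{(2,m)}$ for $m\equiv 4\pmod 6$, and is exactly the code carrying the exceptional dimension. Neither Lemma \ref{lemma-m421} nor the symmetry reaches it, so to finish I would have to exhibit a further arithmetic progression inside its defining set. I expect this to follow the mechanism of the third parts of Lemmas \ref{lemma-m61}--\ref{lemma-m65}, where a shifted range of multipliers such as $a=2^{m-1}+2^{(m-1)/2}+1+t$ is used and reduction modulo $n=2^m-1$ alters the constant $2$-adic weight by a fixed amount so as to land the progression in the required residue class modulo $3$. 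The delicate computational step is the $2$-adic weight bookkeeping for $v=2^{(m-4)/2}-1$ together with checking that the shifted multiples stay distinct modulo $n$, and in particular that this third progression still has full length $2^{(m-4)/2}$: in the odd-$m$ theorems the analogous third progression is strictly shorter and yields a weaker bound, so the uniform bound $d\ge 2^{(m-4)/2}+1$ claimed here for all three codes is exactly the point that demands the most care. Once that progression is in place, the BCH bound disposes of the third code and all three minimum-distance bounds agree.
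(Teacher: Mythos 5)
Your overall route is the paper's route: the dimension computation via \eqref{equ-s012}, \eqref{equ-srelation} and the pairing $w_2(i)+w_2(n-i)=m$ is word-for-word what the proof of Theorem \ref{them-4.5} does (and, as you note, it depends only on $m \bmod 6$, so it carries over unchanged to $m\equiv 2\pmod 4$); and the minimum-distance bound for the code singled out by Lemma \ref{lemma-m421} --- reindex by the unit $v=2^{(m-4)/2}-1$ so that $\{1,2,\dots,2^{(m-4)/2}\}$ sits in the defining set with respect to a new primitive $n$-th root of unity, then apply the BCH bound --- is exactly the paper's argument. Your additional step, transferring the bound to a second code through the map $i\mapsto n-i$ (which carries $T_{(a,m)}$ onto $T_{((m-a)\bmod 3,\,m)}$ and induces a coordinate reversal, hence an equivalence of cyclic codes preserving the minimum distance), is correct, and you identify correctly that the two codes reached this way are precisely the two codes of equal dimension in each item of the theorem.

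The gap is the one you flag yourself: the third code in each congruence class --- $\mathcal C_{(0,m)}$ for $m\equiv 0$, $\mathcal C_{(1,m)}$ for $m\equiv 2$, $\mathcal C_{(2,m)}$ for $m\equiv 4 \pmod 6$, i.e.\ the self-reversal code carrying the exceptional dimension --- is touched neither by Lemma \ref{lemma-m421} nor by the reversal symmetry, and you only conjecture, without carrying out the $2$-adic weight bookkeeping, that a progression of full length $2^{(m-4)/2}$ lies in its defining set. Until such a progression is exhibited, the claimed bound $d\ge 2^{(m-4)/2}+1$ for that third code is unproven, so your proposal does not establish the full statement. You should know, however, that the paper has the very same hole: Lemmas \ref{lemma-m401} and \ref{lemma-m421} each supply consecutive roots for only one of the three codes, yet the proof of Theorem \ref{them-4.5} applies their conclusion to ``the defining set of $\mathcal C_{(i,m)}$'' with $i$ unspecified, and Theorem \ref{them-4.6} is then dispatched as ``very similar''; no analogue of the third parts of Lemmas \ref{lemma-m61}--\ref{lemma-m65} (which, in the odd-$m$ case, handle the self-reversal code and in fact yield a strictly weaker bound there) is ever provided for even $m$. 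So your write-up is a more honest account of what is actually proved than the paper's own proof, but as it stands it shares the paper's missing step rather than closing it.
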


    \begin{proof}
    	The proof is very similar to that of Theorem \ref{them-4.5} and omitted here.
    \end{proof}

\subsection{Parameters of the dual codes $\mathcal C_{(i,m)}^\bot$}\label{sec-sqparadual}

In this subsection, we investigate the parameters of  the dual codes $\mathcal C_{(i,m)}^\bot$ for $i=0, 1, 2$. Their dimensions are explicitly determined and low bounds on the minimum distances of these codes are developed.

	\subsubsection{Parameters of  $\mathcal C_{(i,m)}^\bot$ when $m \ge 5$ is odd}

To develop lower bounds on the minimum distances of $\mathcal C_{(i,m)}^\bot$ for odd $m \ge 5$, we need the following three lemmas,
which can be similarly proved by using the same techniques given in Section \ref{sec-auxi}.
Below we only state the lemmas and omit their proofs.

\begin{lemma}\label{lemma-m61d}
		Let $m \equiv 1 \pmod{6} \geq 7$. Then we have the following.
        \begin{enumerate}
          \item If $v=2^{(m+1)/2} -1$, then $\gcd(v,n)=1$ and
        		$$\{av \bmod \ n: 0 \leq a \leq 2^{(m-1)/2} +2 \} \subseteq	\Bbb Z_n \setminus T_{(0,m)}.$$
          \item If $v=2^{(m-1)/2} -1$, then $\gcd(v,n)=1$ and $$\{av : 0 \leq a \leq 2^{(m-1)/2} +2 \} \subseteq \Bbb Z_n \setminus T_{(1,m)}.$$
          \item If $v=2^{(m+1)/2} -1$, then $\gcd(v,n)=1$ and $$\{av  \bmod \ n: 2^{m}-2^{(m-1)/2} -5 \leq a \leq n-1, \text{ or } 0 \leq a \leq 2^{(m-1)/2} + 4 \} \subseteq \Bbb Z_n \setminus T_{(2,m)}.$$
        \end{enumerate}		
	\end{lemma}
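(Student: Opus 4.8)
\emph{Proof proposal.} The plan is to mimic the proof of Lemma~\ref{lemma-m61}, exploiting the fact that the values of $v$ appearing here coincide with those used there, so that the $2$-adic weight computations carry over essentially verbatim and only the final residue comparison changes. The starting point is the elementary identity $w_{2}(j) + w_{2}(n-j) = m$, valid for all $1 \leq j \leq n-1$ because $n=2^m-1$ is the all-ones word, together with the description
$$\Bbb Z_n \setminus T_{(i,m)} = \{0\} \cup \{1 \leq j \leq n-1 : w_{2}(j) \not\equiv i \pmod{3}\}.$$
Thus, to place an element $av \bmod n$ in $\Bbb Z_n \setminus T_{(i,m)}$ it suffices either to have $av \equiv 0 \pmod{n}$ (which, since $\gcd(v,n)=1$, occurs only for $a \equiv 0$, and $0$ always lies in the complement) or to verify that $w_{2}(av \bmod n) \not\equiv i \pmod 3$.

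First I would dispose of the coprimality claims exactly as before: Lemma~\ref{lemma-gcd2} gives $\gcd(2^{(m+1)/2}-1,n)=2^{\gcd((m+1)/2,m)}-1=1$ and $\gcd(2^{(m-1)/2}-1,n)=2^{\gcd((m-1)/2,m)}-1=1$, using that $m$ is odd. For part~(1) the choice $v=2^{(m+1)/2}-1$ is the same $v$ as in part~(2) of Lemma~\ref{lemma-m61}, so the identical expansion argument yields $w_{2}(av \bmod n)=(m+1)/2\equiv 1\pmod 3$ for $1\le a\le 2^{(m-1)/2}+2$, after separately checking the boundary values $a\in\{2^{(m-1)/2},2^{(m-1)/2}+1,2^{(m-1)/2}+2\}$ and $a=0$; since $1\not\equiv 0\pmod 3$, all these lie in $\Bbb Z_n\setminus T_{(0,m)}$. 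Part~(2) is symmetric, using $v=2^{(m-1)/2}-1$ (the $v$ of part~(1) of Lemma~\ref{lemma-m61}), where the weight equals $(m-1)/2\equiv 0\pmod 3\not\equiv 1$, placing the products in $\Bbb Z_n\setminus T_{(1,m)}$.

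The main work is part~(3), where the index range splits into two intervals and the reduction modulo $n$ becomes essential. For the lower interval $0\le a\le 2^{(m-1)/2}+4$ I would reuse the weight computation for $v=2^{(m+1)/2}-1$ to obtain $w_{2}(av\bmod n)=(m+1)/2\equiv 1\pmod 3\not\equiv 2$, checking by hand the few boundary values $a$ near $2^{(m-1)/2}$. For the upper interval I would substitute $a=n-b$, which sends $2^m-2^{(m-1)/2}-5\le a\le n-1$ onto $1\le b\le 2^{(m-1)/2}+4$; then $av\equiv -bv\pmod n$, and the complementation identity gives $w_{2}(av\bmod n)=m-w_{2}(bv\bmod n)$. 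Since $w_{2}(bv\bmod n)=(m+1)/2$ for these $b$, we get $w_{2}(av\bmod n)=(m-1)/2\equiv 0\pmod 3\not\equiv 2$, so both intervals land in $\Bbb Z_n\setminus T_{(2,m)}$.

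I expect the principal obstacle to be the bookkeeping in part~(3): one must confirm that the map $a\mapsto n-a$ sends the upper interval exactly onto the small initial range of $b$, track the handful of boundary values at which the generic expansion formula needs a separate carry analysis, and ensure that no product reduces to $0$ modulo $n$. The residue arithmetic $(m\pm 1)/2\bmod 3$ is immediate once $m\equiv 1\pmod 6$ is used, so the only genuine care lies in the carry behaviour of the $2$-adic expansions at the ends of each interval, precisely as in Lemma~\ref{lemma-m61}.
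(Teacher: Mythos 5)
Your proposal follows essentially the same route the paper intends: the paper omits the proof of this lemma, saying only that it follows by the techniques of Section~\ref{sec-auxi}, and your plan (reuse the $2$-adic weight computations of Lemma~\ref{lemma-m61} for the same values of $v$, handle $a=0$ trivially, and fold the upper interval of part~(3) onto $1 \leq b \leq 2^{(m-1)/2}+4$ via $a = n-b$ and $w_2(av \bmod n) = m - w_2(bv \bmod n)$) is exactly that. One caveat on the boundary checks you defer: your generic claim that $w_2(av \bmod n) = (m+1)/2 \equiv 1 \pmod 3$ on the whole range $1 \leq a \leq 2^{(m-1)/2}+4$ is false at $a = 2^{(m-1)/2}+3$, where $av \equiv 2^{(m+3)/2} + 2^{(m-1)/2} - 2 \pmod n$ has weight $1 + \frac{m-3}{2} \equiv 0 \pmod 3$; the conclusion survives because $0 \not\equiv 2$, but the invariant you must propagate through the proof (and through the complementation step, where this $b$ gives weight $\equiv 1$) is ``$\not\equiv 2 \pmod 3$,'' not ``$\equiv 1 \pmod 3$.''
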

	
		\begin{lemma}\label{lemma-m63d}
		Let $m \equiv 3 \pmod{6} \geq 9$. Then we have the following.
        \begin{enumerate}
          \item If $v=2^{(m-1)/2} -1$, then $\gcd(v,n)=1$ and $$\{av  \bmod \ n: 2^{m}-2^{(m-1)/2} -5 \leq a \leq n-1, \text{ or } 0 \leq a \leq 2^{(m-1)/2} + 4 \} \subseteq \Bbb Z_n \setminus T_{(0,m)}.$$
          \item If $v=2^{(m+1)/2} -1$, then $\gcd(v,n)=1$ and $$\{av : 0 \leq a \leq 2^{(m-1)/2}  \} \subseteq \Bbb Z_n \setminus T_{(1,m)}.$$
          \item If $v=2^{(m-1)/2} -1$, then $\gcd(v,n)=1$ and $$\{av : 0 \leq a \leq 2^{(m-1)/2}  \} \subseteq \Bbb Z_n \setminus T_{(2,m)}.$$
        \end{enumerate}		
	\end{lemma}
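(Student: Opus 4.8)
The plan is to prove all three inclusions by computing the $2$-weight $w_2(av \bmod n)$ and checking that it avoids the forbidden residue class modulo $3$. Throughout I write $k=(m-1)/2$, so that $m=2k+1$ with $m\equiv 0\pmod 3$ and $k\equiv 1\pmod 3$. Lemma \ref{lemma-gcd2} yields $\gcd(v,n)=1$ in every case, exactly as in the $\gcd$ computations preceding this lemma, so the only real content is the weight bookkeeping; note also that $a=0$ always lies in $\Bbb Z_n\setminus T_{(i,m)}$ for free, since $0\notin T_{(i,m)}$.

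For parts (2) and (3) I would simply reuse the digit-sum identities already established for the primal Lemma \ref{lemma-m63}, observing that the two values of $v$ are interchanged. Indeed, in part (3) we have $v=2^{(m-1)/2}-1$, the same $v$ treated in Lemma \ref{lemma-m63}(2), where one shows $w_2(av\bmod n)=(m-1)/2\equiv 1\pmod 3$ for $1\le a\le 2^{(m-1)/2}$; since $1\not\equiv 2\pmod 3$ this forces $av\bmod n\notin T_{(2,m)}$. Likewise part (2) uses $v=2^{(m+1)/2}-1$, the $v$ of Lemma \ref{lemma-m63}(3), where $w_2(av\bmod n)=(m+1)/2\equiv 2\pmod 3$, so $av\bmod n\notin T_{(1,m)}$. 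Adjoining $a=0$ in each case completes these two parts, and the index ranges already match exactly.

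The substance is part (1), where I would split the index set into its ``low'' block $0\le a\le 2^{(m-1)/2}+4$ and its ``high'' block $2^m-2^{(m-1)/2}-5\le a\le n-1$. For the low block with $1\le a\le 2^{(m-1)/2}-1$, write $a=2^l\bar a$ with $\bar a$ odd and expand $\bar a v=\bar a 2^{(m-1)/2}-\bar a$ exactly as in the proof of Lemma \ref{lemma-m61}(1); this gives $w_2(av)=(m-1)/2\equiv 1\pmod 3$. The five boundary values $a=2^{(m-1)/2}+j$ for $0\le j\le 4$ I would dispatch by direct expansion, obtaining $av=2^{2k}-2^k,\,2^{2k}-1,\,2^{2k}+2^k-2,\,2^{2k}+2^{k+1}-3,\,2^{2k}+3\cdot 2^k-4$ respectively (all below $n$, so no reduction occurs), with $2$-weights $k,\,2k,\,k,\,k+1,\,k$, i.e. residues $1,2,1,2,1$ modulo $3$. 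In every case $w_2(av)\not\equiv 0\pmod 3$, so the whole low block lies in $\Bbb Z_n\setminus T_{(0,m)}$.

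For the high block I would change variables to $b=n-a$, so that $1\le b\le 2^{(m-1)/2}+4$ and $av\equiv -bv\pmod n$, whence $av\bmod n=n-(bv\bmod n)$ and, by the identity $w_2(i)+w_2(n-i)=m$, $w_2(av\bmod n)=m-w_2(bv)$. Since $b$ ranges over exactly the values analysed in the low block, $w_2(bv)\in\{(m-1)/2,\,m-1,\,(m+1)/2\}$, so $w_2(av\bmod n)\equiv m-\{1,2\}\equiv\{2,1\}\pmod 3$ using $m\equiv 0\pmod 3$; again this never equals $0$. Together with $a=0$, the high block is therefore contained in $\Bbb Z_n\setminus T_{(0,m)}$ as well. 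The main obstacle is purely clerical: keeping the carry-free $2$-adic expansions correct across the five boundary values and confirming that $av<n$ for $a\le 2^{(m-1)/2}+4$ so that no reduction modulo $n$ intervenes, after which the complementation identity transports everything cleanly from the low block to the high block.
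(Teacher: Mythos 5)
Your proposal is correct, and it is essentially the argument the paper intends: the paper omits the proof of this lemma, stating only that it follows from ``the same techniques given in Section \ref{sec-auxi}'', and your proof is exactly those techniques --- the carry-free $2$-adic expansions for $1 \le a \le 2^{(m-1)/2}-1$, direct expansion at the boundary values, and the complement identity $w_2(i)+w_2(n-i)=m$ (equivalently, reduction to Lemma \ref{lemma-m63} and the partition $\Bbb Z_n\setminus\{0\}=T_{(0,m)}\cup T_{(1,m)}\cup T_{(2,m)}$) to handle the remaining ranges. Your boundary computations in part (1) (weights $k,\,2k,\,k,\,k+1,\,k$ with residues $1,2,1,2,1 \pmod 3$) and the substitution $b=n-a$ for the high block both check out, so nothing is missing.
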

	
		\begin{lemma}\label{lemma-m65d}
		Let $m \equiv 5 \pmod{6} \geq 5$. Then we have the following.
        \begin{enumerate}
          \item If $v=2^{(m-1)/2} -1$, then $\gcd(v,n)=1$ and
        		$$\{av : 0 \leq a \leq 2^{(m-1)/2} +2 \} \subseteq \Bbb Z_n \setminus T_{(0,m)}.$$
          \item If $v=2^{(m-1)/2} -1$, then $\gcd(v,n)=1$ and $$\{av \bmod \ n: 2^{m}-2^{(m-1)/2} -1 \leq a \leq n-1, \text{ or } 0 \leq a \leq 2^{(m-1)/2} \} \subseteq \Bbb Z_n \setminus T_{(1,m)}.$$
          \item If $v=2^{(m+1)/2} -1$, then $\gcd(v,n)=1$ and $$\{av  \bmod \ n: 0 \leq a \leq 2^{(m-1)/2} + 2 \} \subseteq \Bbb Z_n \setminus T_{(2,m)}.$$
        \end{enumerate}		
	\end{lemma}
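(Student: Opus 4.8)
The three parts will all be established by the weight-counting technique used in the proof of Lemma \ref{lemma-m61}, the only new features being that here we must certify that the residue of $w_2(av \bmod n)$ modulo $3$ is \emph{not} equal to the index $i$, and that in statement (2) one of the two index blocks wraps around $n$. Throughout I would use two facts. First, multiplication by $2$ modulo $n=2^m-1$ is a cyclic shift of the $m$-bit expansion and hence preserves $w_2$, so writing $a=2^{l}\bar a$ with $\bar a$ odd gives $w_2(av \bmod n)=w_2(\bar a v \bmod n)$, which reduces every computation to odd multipliers. Second, $w_2(x)+w_2(n-x)=m$ for every $1\le x\le n-1$, exactly as already invoked in Section \ref{sec-sqpara}. (The value $a=0$ always contributes $0\in\Bbb Z_n\setminus T_{(i,m)}$, since $T_{(i,m)}$ excludes $0$ by definition.) Since $m\equiv 5\pmod 6$, the weight values that occur reduce as $\tfrac{m-1}{2}\equiv 2$, $\tfrac{m+1}{2}\equiv 0$, and $m-1\equiv 1\pmod 3$, and the whole point is that in each part these land away from the forbidden residue $i$.

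For statement (1), with $v=2^{(m-1)/2}-1$, I would reuse the expansion of $\bar a v$ from Lemma \ref{lemma-m61}(1): for $1\le a\le 2^{(m-1)/2}-1$ it yields $w_2(av \bmod n)=\tfrac{m-1}{2}\equiv 2\pmod 3$, so $av\in \Bbb Z_n\setminus T_{(0,m)}$. The three top endpoints are checked directly: $a=2^{(m-1)/2}$ gives $w_2=w_2(v)=\tfrac{m-1}{2}$; $a=2^{(m-1)/2}+1$ gives $av=2^{m-1}-1$ with $w_2=m-1\equiv 1$; and $a=2^{(m-1)/2}+2$ gives $av=2^{m-1}+2^{(m-1)/2}-2$ with $w_2=\tfrac{m-1}{2}$. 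All three residues differ from $0$. Statement (3) is analogous with $v=2^{(m+1)/2}-1$: the expansion from Lemma \ref{lemma-m61}(2) gives $w_2(\bar a v)=\tfrac{m+1}{2}\equiv 0\pmod 3\neq 2$ on the generic range, and for the top values I would reduce $av$ modulo $n$ using $2^m\equiv 1$; for instance $a=2^{(m-1)/2}+1$ gives $av\equiv 2^{(m-1)/2}$ of weight $1$, and $a=2^{(m-1)/2}+2$ gives $av\equiv 2^{(m+1)/2}+2^{(m-1)/2}-1$ of weight $\tfrac{m+1}{2}$, each avoiding residue $2$.

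Statement (2) requires the one extra idea, because its index set is a union of a low block $0\le a\le 2^{(m-1)/2}$ and a high block $2^{m}-2^{(m-1)/2}-1\le a\le n-1$. On the low block the computation is identical to (1), giving $w_2=\tfrac{m-1}{2}\equiv 2\pmod 3\neq 1$. For the high block I would substitute $a=n-b$ with $1\le b\le 2^{(m-1)/2}$, noting that $n-2^{(m-1)/2}=2^{m}-2^{(m-1)/2}-1$. Then $av\equiv -bv\pmod n$, so $av\bmod n=n-(bv\bmod n)$, where $bv\not\equiv 0$ because $\gcd(v,n)=1$. Hence $w_2(av\bmod n)=m-w_2(bv\bmod n)=m-\tfrac{m-1}{2}=\tfrac{m+1}{2}\equiv 0\pmod 3\neq 1$, using that $w_2(bv\bmod n)=\tfrac{m-1}{2}$ holds on the whole range $1\le b\le 2^{(m-1)/2}$ by the computation of (1) together with the shift fact at $b=2^{(m-1)/2}$.

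The routine obstacle, as in Lemma \ref{lemma-m61}, is bookkeeping: in each expansion one must confirm that the two partial sums occupy disjoint bit ranges so that no carries occur, that each product stays below $n$ (or is reduced correctly when it does not), and that the relevant gcd is $1$, which follows from Lemma \ref{lemma-gcd2} just as in the non-dual lemmas. The genuine conceptual step is the reflection $a\mapsto n-a$ in statement (2), which converts the wrap-around block into a low block through $w_2(x)+w_2(n-x)=m$; everything else is a direct transcription of the odd-$m$ arguments of Section \ref{sec-auxi}, with the residues recomputed for $m\equiv 5\pmod 6$.
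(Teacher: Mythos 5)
Your proof is correct and follows exactly the route the paper prescribes: the paper omits the proof of this lemma, stating only that it "can be similarly proved by using the same techniques given in Section \ref{sec-auxi}," and that is precisely what you carry out --- the expansions of Lemma \ref{lemma-m61} on the generic range $1 \le \bar a \le 2^{(m-1)/2}-1$, direct endpoint checks, and the residue computations $\tfrac{m-1}{2}\equiv 2$, $\tfrac{m+1}{2}\equiv 0$, $m-1\equiv 1 \pmod 3$ for $m \equiv 5 \pmod 6$, all of which check out (including the identity $3\cdot 2^{(m-1)/2} = 2^{(m+1)/2}+2^{(m-1)/2}$ behind your reduction of the $a=2^{(m-1)/2}+2$ case in statement (3)). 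The only point where you do something not literally present in Lemma \ref{lemma-m61} --- handling the wrap-around block in statement (2) by the substitution $a=n-b$ together with $w_2(x)+w_2(n-x)=m$, rather than by direct expansion and mod-$n$ reduction as in Lemma \ref{lemma-m61}(3) --- is an equivalent and arguably cleaner use of the same complement identity the paper itself invokes in Section \ref{sec-sqpara}, so it does not constitute a different method.
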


The dimensions and lower bounds on the minimum distances of the dual codes $\mathcal C_{(i,m)}^\bot$ for $i=0, 1, 2$ are documented in the following theorem.

	\begin{theorem}\label{them-5.4}
		Let $ m \geq 5$ be an odd integer.
\begin{enumerate}
  \item The code $\mathcal C_{(0,m)}^{\bot}$ has parameters $[2^m-1,  k^{\bot}, d^{\bot}]$,
		where $k^\bot$ and $d^\bot$ are given as follows:
 $$k^{\bot}=\begin{cases}
			   	    (2^{m}-2)/3, & \text{if } m \equiv 1 \text{ or } 5 \pmod{6} \geq 5;   \\
		    		(2^{m} -8)/3, & \text{if } m \equiv 3 \pmod{6} \geq 9;  \\
		    		 	\end{cases}$$
	    and $$ d^{\bot} \geq \begin{cases}
	    2^{(m-1)/2} +4, &  \text{if } m \equiv 1 \text{ or } 5 \pmod{6} \geq 5; \\
	   	2^{(m+1)/2} +10, &  \text{if } m \equiv 3 \pmod{6} \geq 9.
	   \end{cases}$$
  \item The code $\mathcal C_{(1,m)}^{\bot}$ has parameters $[2^m-1, k^{\bot}, d^{\bot}]$,
		where $k^\bot$ and $d^\bot$ are given as follows:
$$k^{\bot}=\begin{cases}
			   	    (2^{m}-2)/3, & \text{if } m \equiv 1 \text{ or } 5 \pmod{6} \geq 5;   \\
		    		(2^{m} + 1)/3, & \text{if } m \equiv 3 \pmod{6} \geq 9;  \\
		    		 	\end{cases}$$
	    and
$$ d^{\bot} \geq \begin{cases}
			2^{(m-1)/2} +4, &  \text{if } m \equiv 1  \pmod{6} \geq 7; \\
			2^{(m-1)/2} +2, &  \text{if } m \equiv 3 \pmod{6} \geq 9;\\
			2^{(m+1)/2} +2, &  \text{if } m \equiv 5 \pmod{6} \geq 5.
		\end{cases}$$
  \item The code $\mathcal C_{(2,m)}^{\bot}$ has parameters $[2^m-1,  k^{\bot}, d^{\bot}]$,
	    where $k^\bot$ and $d^\bot$ are given as follows:
$$k^{\bot}=\begin{cases}
			   	    (2^{m}-2)/3, & \text{if } m \equiv 1 \text{ or } 5 \pmod{6} \geq 5;   \\
		    		(2^{m} + 1)/3, & \text{if } m \equiv 3 \pmod{6} \geq 9;  \\
		    		 	\end{cases}$$
	    and
 $$ d^{\bot} \geq \begin{cases}
	    	2^{(m+1)/2} +10, &  \text{if } m \equiv 1 \pmod{6} \geq 7; \\
	    	2^{(m-1)/2} +2, &  \text{if } m \equiv 3 \pmod{6} \geq 9; \\
	    	2^{(m-1)/2} +4, &  \text{if } m \equiv 5  \pmod{6} \geq 5.
	    \end{cases} $$
\end{enumerate}	
	\end{theorem}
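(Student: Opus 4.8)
The plan is to treat the dimension $k^{\bot}$ and the distance bound $d^{\bot}$ separately.

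The dimensions are immediate from duality. Since $\dim(\mathcal{C}_{(i,m)}^{\bot}) = n - \dim(\mathcal{C}_{(i,m)})$ with $n = 2^m-1$, I would read off $\dim(\mathcal{C}_{(i,m)})$ from Theorems \ref{them-4.1}, \ref{them-4.2} and \ref{them-4.3} according to $m \bmod 6$ and subtract. For instance, when $m \equiv 1$ or $5 \pmod 6$ all three codes have dimension $(2^{m+1}-1)/3$, so $k^{\bot} = (2^m-1)-(2^{m+1}-1)/3 = (2^m-2)/3$; when $m \equiv 3 \pmod 6$ the dimensions $(2^{m+1}+5)/3$ and $(2^{m+1}-4)/3$ yield $k^{\bot} = (2^m-8)/3$ and $(2^m+1)/3$ respectively. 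This is routine arithmetic.

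For the distance bounds, the main tool is the standard description of the defining set of a dual cyclic code: with respect to $\alpha$, the code $\mathcal{C}_{(i,m)}^{\bot}$ has defining set $T_{(i,m)}^{\bot} = \{\, -j \bmod n : j \in \Bbb Z_n \setminus T_{(i,m)} \,\}$. I would then invoke Lemmas \ref{lemma-m61d}, \ref{lemma-m63d} and \ref{lemma-m65d}, each of which produces, for a suitable $v$ with $\gcd(v,n)=1$, a subset $\{\, av \bmod n : a \in S \,\} \subseteq \Bbb Z_n \setminus T_{(i,m)}$ where $S$ is a range (or a union of two ranges wrapping around $0$ and $n-1$). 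Exactly as in the proof of Theorem \ref{them-4.1}, I would pass to the primitive $n$-th root of unity $\gamma = \alpha^{v}$, under which $\alpha^{av} = \gamma^{a}$; this carries $\{av \bmod n : a \in S\}$ to the exponent set $S$, and after the negation built into $T_{(i,m)}^{\bot}$ the relevant exponents become $-S \bmod n$, a consecutive block of the same cardinality $|S|$. The BCH bound then gives $d^{\bot} \ge |S|+1$, and one checks case by case that $|S|$ matches the stated bounds: the ranges with $0 \le a \le 2^{(m-1)/2}+2$ give $d^{\bot} \ge 2^{(m-1)/2}+4$, those with $0 \le a \le 2^{(m-1)/2}$ give $d^{\bot} \ge 2^{(m-1)/2}+2$, and the two-interval cases amalgamate into $2^{(m+1)/2}+9$ consecutive exponents, giving $d^{\bot} \ge 2^{(m+1)/2}+10$.

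The part demanding the most care is this final counting. I would verify that in Lemmas \ref{lemma-m61d}(3), \ref{lemma-m63d}(1) and \ref{lemma-m65d}(2) the two stated intervals abut through the wraparound $n-1 \equiv -1$, $n \equiv 0$, so that their union is genuinely one block of $2^{(m+1)/2}+9$ consecutive residues rather than two separate runs; and that the inclusion of $a=0$ in all three lemmas (as opposed to $a \ge 1$ in Section \ref{sec-auxi}) is precisely what supplies the extra consecutive element beyond the primal bounds. All remaining ingredients, namely the choice of $v$ per residue class, the coprimality $\gcd(v,n)=1$, and the final appeal to the BCH bound, are already supplied by the lemmas and run parallel to the primal arguments of Section \ref{sec-sqpara}.
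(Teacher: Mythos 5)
Your proposal follows essentially the same route as the paper's own (terse) proof: the dimensions come from $k^\perp = n-k$ together with Theorems \ref{them-4.1}, \ref{them-4.2} and \ref{them-4.3} (the paper phrases this through $|T_{(0,m)}^{-1}|$, which is the same count), and the distance bounds come from the description of the defining set of $\mathcal C_{(i,m)}^\perp$ as the negative of $\Bbb Z_n \setminus T_{(i,m)}$, the rescaling to $\gamma$, Lemmas \ref{lemma-m61d}, \ref{lemma-m63d}, \ref{lemma-m65d}, and the BCH bound. Your two explicit observations --- that negation carries a run of consecutive residues to a run of the same length, and that the inclusion of $a=0$ (legitimate since $0 \notin T_{(i,m)}$) supplies the extra consecutive element beyond the primal bounds --- are exactly the points the paper leaves implicit.

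One concrete error in your counting, though: it is not true that all three two-interval (wrap-around) cases amalgamate into $2^{(m+1)/2}+9$ consecutive exponents. That count is right for Lemma \ref{lemma-m61d}(3) and Lemma \ref{lemma-m63d}(1), whose intervals $2^{m}-2^{(m-1)/2}-5 \le a \le n-1$ and $0 \le a \le 2^{(m-1)/2}+4$ have sizes $2^{(m-1)/2}+4$ and $2^{(m-1)/2}+5$, and it produces the two $2^{(m+1)/2}+10$ bounds in the theorem. But Lemma \ref{lemma-m65d}(2) has the shorter intervals $2^{m}-2^{(m-1)/2}-1 \le a \le n-1$ and $0 \le a \le 2^{(m-1)/2}$, which amalgamate into only $2^{(m-1)/2}+\bigl(2^{(m-1)/2}+1\bigr)=2^{(m+1)/2}+1$ consecutive residues; the BCH bound then gives $d^\perp \ge 2^{(m+1)/2}+2$, which is precisely what the theorem asserts for $\mathcal C_{(1,m)}^\perp$ when $m \equiv 5 \pmod{6}$. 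As written, your blanket statement would claim a $2^{(m+1)/2}+10$ bound for that case, which neither the lemma supports nor the theorem states. The case-by-case verification you yourself prescribe would catch this, so the repair is purely arithmetic, not methodological.
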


	\begin{proof}
We prove the desired conclusion only for the code $\mathcal C_{(0,m)}^{\bot}$, as the conclusion for $\mathcal C_{(1,m)}^{\bot}$ and $\mathcal C_{(2,m)}^{\bot}$ can be similarly proved.

Note that $w_{2}(i) + w_{2}(n-i)= m$ for each $i$ with $ 1 \leq i \leq  n-1$. Then
\begin{equation} \label{equ-T0m}
        T_{(0,m)}^{-1}=\begin{cases}
	    	T_{(1,m)}, &  \text{if } m \equiv 1  \pmod{6} \geq 7;\\
	    	T_{(0,m)}, &  \text{if } m \equiv 3 \pmod{6} \geq 9;\\
            T_{(2,m)}, &  \text{if } m \equiv 5  \pmod{6} \geq 5.
	    \end{cases} \end{equation}
Let $T_{(0,m)}^\bot$ be the defining set of $\mathcal C_{(0,m)}^{\bot}$. Thus we have
$$\dim(\mathcal C_{(0,m)}^{\bot})=n-|T_{(0,m)}^\bot|=n-|\Bbb Z_n \setminus T_{(0,m)}^{-1}|=|T_{(0,m)}^{-1}|.$$
The desired conclusion on the dimension of $\mathcal C_{(0,m)}^{\bot}$ then follows from \eqref{equ-T0m} and Theorems \ref{them-4.1}, \ref{them-4.2}, and \ref{them-4.3}.
The desired lower bound on the minimum distance of $\mathcal C_{(0,m)}^{\bot}$ is deduced from the BCH bound with the help of Lemmas \ref{lemma-m61d}, \ref{lemma-m63d}, and \ref{lemma-m65d}.
	\end{proof}

\begin{example}\label{exam-j91}
    Let $m=5$ and let $\alpha$ be a generator of  $\Bbb F_{2^5}^*$ with $\alpha^{5}+\alpha^{2}+1 = 0$.
    \begin{itemize}
      \item The codes $\mathcal C_{(0,5)}$  and $\mathcal C_{(0,5)}^{\bot}$  have parameters $[31, 21, 5]$ and $[31, 10, 12]$, respectively,  and are optimal according to
the Database \cite{G}.
      \item  The codes $\mathcal C_{(1,5)}$ and $\mathcal C_{(1,5)}^{\bot}$ have parameters $[31, 21, 5]$ and $[31, 10, 10]$, respectively, and the former code is optimal according to
the Database \cite{G}.
      \item The codes $\mathcal C_{(2,5)}$ and $\mathcal C_{(2,5)}^{\bot}$ have parameters $[31, 21, 5]$ and $[31, 10, 12]$, respectively,  and are optimal according to
the Database \cite{G}.
    \end{itemize}
    \end{example}

\subsubsection{Parameters of  $\mathcal C_{(i,m)}^\bot$ when $m \ge 4$ is even}

Assume that $m \geq 4$ is an even integer.
To develop lower bounds on the minimum distances of $\mathcal C_{(i,m)}^\bot$ in this case, the following two lemmas will be employed later. The two lemmas below can be similarly proved by using the same techniques given in Section \ref{sec-auxi}.
Below we only state the lemmas and omit their proofs.

\begin{lemma}\label{lemma-m61ed}
		Let $m \equiv 0 \pmod{4} \geq 4$. Then we have the following.
        \begin{enumerate}
          \item If $v=2^{(m-2)/2} -1$, then $\gcd(v,n)=1$ and
        		$$\begin{cases}
				\{av : 0 \leq a \leq 2^{(m-2)/2}  \} \subseteq \Bbb Z_n \setminus T_{(1,m)}, & \text{if } m\equiv 0 \pmod{6}; \\
				 \{av : 0 \leq a \leq 2^{(m-2)/2 } +4 \} \subseteq \Bbb Z_n \setminus T_{(2,m)}, & \text{if } m\equiv 2 \pmod{6};  \\
			     \{av : 0 \leq a \leq 2^{(m-2)/2} +4\} \subseteq \Bbb Z_n \setminus T_{(0,m)}, & \text{if } m\equiv 4 \pmod{6}.
			    \end{cases}$$
          \item If $v=2^{(m-2)/2} -1$, then $\gcd(v,n)=1$ and 	
                \begin{align*}&\begin{cases}
				\{av \bmod \ n : 2^{m}-2^{(m-2)/2} -3 \leq a \leq n-1, \text{ or } & \\ 0 \leq a \leq 2^{(m-2)/2} +2 \} \subseteq \Bbb Z_n \setminus T_{(0,m)}, & \text{if } m\equiv 0 \pmod{6}; \\
				 \{av \bmod \ n: 2^{m}-2^{(m-2)/2} -3 \leq a \leq n-1, \text{ or } & \\ 0 \leq a \leq 2^{(m-2)/2 } +2 \} \subseteq \Bbb Z_n \setminus T_{(1,m)}, & \text{if } m\equiv 2 \pmod{6};  \\
			     \{av \bmod \ n: 2^{m}-2^{(m-2)/2} -1 \leq a \leq n-1, \text{ or } & \\  0 \leq a \leq 2^{(m-2)/2} \} \subseteq \Bbb Z_n \setminus T_{(2,m)}, & \text{if } m\equiv 4 \pmod{6}.
			    \end{cases}\end{align*}
          \item If $v=2^{(m+2)/2} -1$, then $\gcd(v,n)=1$ and
            $$\begin{cases}
				\{av \bmod \ n : 0 \leq a \leq 2^{(m-2)/2}  \} \subseteq \Bbb Z_n \setminus T_{(2,m)}, & \text{if } m\equiv 0 \pmod{6}; \\
				 \{av \bmod \ n : 0 \leq a \leq 2^{(m-2)/2 } +4 \} \subseteq \Bbb Z_n \setminus T_{(0,m)}, & \text{if } m\equiv 2 \pmod{6};  \\
			     \{av \bmod \ n : 0 \leq a \leq 2^{(m-2)/2} +4\} \subseteq \Bbb Z_n \setminus T_{(1,m)}, & \text{if } m\equiv 4 \pmod{6}.
			    \end{cases}$$
        \end{enumerate}		
	\end{lemma}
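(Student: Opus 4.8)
The plan is to reduce every claimed containment to a statement about the $2$-weight of $av \bmod n$ modulo $3$, and then to recycle the weight computations already made in Section~\ref{sec-auxi} wherever the multiplier coincides, handling only a bounded number of boundary values by hand. First I would record that $x \in \mathbb{Z}_n \setminus T_{(j,m)}$ holds exactly when $x = 0$ or $w_2(x) \not\equiv j \pmod{3}$, since $T_{(j,m)}$ contains only indices $i$ with $1 \le i \le n-1$ and $w_2(i) \equiv j$. Thus the value $a=0$ (present because these are dual-code ranges) contributes $0$ and lies in $\mathbb{Z}_n \setminus T_{(j,m)}$ trivially, and for $a \neq 0$ it suffices to pin down $w_2(av \bmod n) \pmod{3}$ and check it differs from the excluded residue $j$. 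The coprimality $\gcd(v,n)=1$ for both $v = 2^{(m-2)/2}-1$ and $v = 2^{(m+2)/2}-1$ follows immediately from Lemma~\ref{lemma-gcd2}, exactly as in the proof of Lemma~\ref{lemma-m401}.

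For the bulk of the ranges in parts (1) and (2), which use the same multiplier $v = 2^{(m-2)/2}-1$ as Lemma~\ref{lemma-m401}, I would invoke that lemma directly: it already shows $w_2(av) \equiv 2,\,0,\,1 \pmod{3}$ on $1 \le a \le 2^{(m-2)/2}$ according as $m \equiv 0,\,2,\,4 \pmod{6}$, and in each branch this residue is distinct from the residue $j$ excluded in the corresponding case, so $\{av : 1 \le a \le 2^{(m-2)/2}\} \subseteq T_{(i,m)} \subseteq \mathbb{Z}_n \setminus T_{(j,m)}$. Only a bounded number (at most four) of extra values $a = 2^{(m-2)/2}+c$ remain; for these one checks directly that $av = 2^{m-2} + (c-1)2^{(m-2)/2} - c$ does not wrap around modulo $n$ and reads its weight off the $2$-adic expansion (e.g.\ $c=1$ gives $2^{m-2}-1$ of weight $m-2$). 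For the high interval in part (2) I would substitute $a = n - b$, so that $av \bmod n = n - (bv \bmod n)$ and hence $w_2(av \bmod n) = m - w_2(bv \bmod n)$ by the identity $w_2(i)+w_2(n-i)=m$; this reflects the high interval onto the already-settled low interval, and the two endpoints $b = 2^{(m-2)/2}+1,\,2^{(m-2)/2}+2$ are again handled by hand.

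Part (3) is the only genuinely new computation, since its multiplier $v = 2^{(m+2)/2}-1$ is covered neither by Lemma~\ref{lemma-m401} nor by Lemma~\ref{lemma-m421}. Here I would mimic the argument of Lemma~\ref{lemma-m401}: writing $a = 2^{l}\bar a$ with $\bar a$ odd reduces to $\bar a$ (cyclic shifts preserve $w_2$), and for $1 \le \bar a \le 2^{(m-2)/2}-1$ with $\bar a = \sum_{i=0}^{(m-4)/2} a_i 2^{i}$ and $a_0 = 1$ I would split
$$\bar a v = (\bar a - 1)2^{(m+2)/2} + \bigl(2^{(m+2)/2} - \bar a\bigr),$$
where the high summand has weight $w_2(\bar a)-1$, the low summand (using $2^{(m+2)/2}-1-\bar a$ as a bitwise complement on the bottom $(m+2)/2$ bits) has weight $(m-2)/2 - w_2(\bar a) + 3$, and the two summands occupy disjoint bit ranges. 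This gives the constant value $w_2(\bar a v) = (m+2)/2$, which also matches the endpoint $a = 2^{(m-2)/2}$ (there $av \bmod n = 2^m - 2^{(m-2)/2}$, again of weight $(m+2)/2$); reducing $(m+2)/2$ modulo $3$ shows it avoids the excluded residue in each of the three branches, and the few boundary values up to $+4$ are checked directly as before.

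I expect the main obstacle to be bookkeeping rather than conceptual. The delicate points are verifying, for each boundary value of $a$ and each wrap-around value in part~(2), that the relevant product neither overflows nor underflows modulo $n$ and that the contributing bit positions are genuinely disjoint, so that the stated weight is exact; and making the thresholds (the $+2$ versus $+4$ endpoints, and the exact left end of the high interval) align with the residue bookkeeping in each congruence class of $m$ modulo $6$. The underlying weight identities themselves are routine, being direct analogues of those in Section~\ref{sec-auxi}.
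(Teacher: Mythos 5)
Your proposal is correct and follows exactly the route the paper intends: the paper omits the proof of Lemma~\ref{lemma-m61ed}, stating only that it follows from ``the same techniques given in Section~\ref{sec-auxi},'' and your argument is precisely that---reusing Lemma~\ref{lemma-m401} for the multiplier $v=2^{(m-2)/2}-1$, hand-checking the finitely many boundary values, reflecting the high interval in part~(2) via $w_2(i)+w_2(n-i)=m$, and redoing the $2$-adic expansion computation of Lemma~\ref{lemma-m401} for the new multiplier $v=2^{(m+2)/2}-1$ in part~(3). The weight values you derive (constant weight $(m+2)/2$ in part~(3), and the boundary weights $m-2$, $(m-2)/2$, etc.) are all consistent with the stated containments in each residue class of $m$ modulo~$6$.
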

	
	\begin{lemma}\label{lemma-m61ed2}
		Let $m \equiv 2 \pmod{4} \geq 6$. Then we have the following.
        \begin{enumerate}
          \item If $v=2^{(m-4)/2} -1$, then $\gcd(v,n)=1$ and
        		$$\begin{cases}
				\{av : 0 \leq a \leq 2^{(m-4)/2}  \} \subseteq \Bbb Z_n \setminus T_{(2,m)}, & \text{if } m\equiv 0 \pmod{6}; \\
				 \{av : 0 \leq a \leq 2^{(m-4)/2 } +2 \} \subseteq \Bbb Z_n \setminus T_{(0,m)}, & \text{if } m\equiv 2 \pmod{6};  \\
			     \{av : 0 \leq a \leq 2^{(m-4)/2} +2\} \subseteq \Bbb Z_n \setminus T_{(1,m)}, & \text{if } m\equiv 4 \pmod{6}.
			    \end{cases}$$
          \item If $v=2^{(m-4)/2} -1$, then $\gcd(v,n)=1$ and 	
               \begin{align*}&\begin{cases}
				\{av \bmod \ n : 2^{m}-2^{(m-4)/2} -5 \leq a \leq n-1, \text{ or } & \\ 0 \leq a \leq 2^{(m-4)/2} +4 \} \subseteq \Bbb Z_n \setminus T_{(0,m)}, & \text{if } m\equiv 0 \pmod{6}; \\
				 \{av \bmod \ n: 2^{m}-2^{(m-4)/2} -1 \leq a \leq n-1, \text{ or } & \\ 0 \leq a \leq 2^{(m-4)/2 } \} \subseteq \Bbb Z_n \setminus T_{(1,m)}, & \text{if } m\equiv 2 \pmod{6};  \\
			     \{av \bmod \ n: 2^{m}-2^{(m-4)/2} -5 \leq a \leq n-1, \text{ or } & \\ 0 \leq a \leq 2^{(m-4)/2} +4 \} \subseteq \Bbb Z_n \setminus T_{(2,m)}, & \text{if } m\equiv 4 \pmod{6}.
			    \end{cases}\end{align*}
          \item If $v=2^{(m+4)/2} -1$, then $\gcd(v,n)=1$ and
            $$\begin{cases}
				\{av \bmod \ n : 0 \leq a \leq 2^{(m-4)/2}  \} \subseteq \Bbb Z_n \setminus T_{(1,m)}, & \text{if } m\equiv 0 \pmod{6}; \\
				 \{av \bmod \ n : 0 \leq a \leq 2^{(m-4)/2 } +2 \} \subseteq \Bbb Z_n \setminus T_{(2,m)}, & \text{if } m\equiv 2 \pmod{6};  \\
			     \{av \bmod \ n : 0 \leq a \leq 2^{(m-4)/2} +2\} \subseteq \Bbb Z_n \setminus T_{(0,m)}, & \text{if } m\equiv 4 \pmod{6}.
			    \end{cases}$$
        \end{enumerate}		
	\end{lemma}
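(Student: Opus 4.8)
The plan is to follow verbatim the method developed in Section \ref{sec-auxi} for Lemmas \ref{lemma-m401} and \ref{lemma-m421}, since this statement is the \emph{complementary} counterpart of those two lemmas: where the earlier results placed each multiple $av$ inside a single class $T_{(i,m)}$, here one must certify that $av \bmod n$ lies \emph{outside} a prescribed class, i.e. that $w_2(av \bmod n) \not\equiv i \pmod 3$ (the value $av = 0$ belonging trivially to every $\Bbb Z_n \setminus T_{(i,m)}$ because $T_{(i,m)}$ only contains indices $\geq 1$). First I would dispose of the coprimality claims uniformly: writing $v = 2^k - 1$ with $k \in \{(m-4)/2,\,(m+4)/2\}$, Lemma \ref{lemma-gcd2} gives $\gcd(v,n) = 2^{\gcd(k,m)} - 1$, and since $m \equiv 2 \pmod 4$ a short argument shows that $\gcd(k,m)$ divides $4$, is odd, and hence equals $1$, so $\gcd(v,n) = 1$ in all three parts.

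For the membership claims the key reduction is that multiplication by $2$ is a cyclic shift of the $m$-bit representation modulo $n = 2^m - 1$, so after writing $a = 2^l \bar a$ with $\bar a$ odd one has $w_2(av \bmod n) = w_2(\bar a v \bmod n)$; this lets me assume $\bar a$ odd and bounded by the relevant power of $2$, whence $\bar a v < 2^m$ and no reduction is needed. For such $\bar a$ the product splits as $\bar a v = (\bar a - 1)2^k + (2^k - \bar a)$ into a high block and a low block with no carry between them, so $w_2(\bar a v) = (w_2(\bar a) - 1) + (k - w_2(\bar a) + 1) = k$. Thus every ``small-$a$'' multiple has weight exactly $k$, namely $(m-4)/2$ in parts (1) and (2) and $(m+4)/2$ in part (3). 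For the ``large-$a$'' sub-ranges appearing in part (2), where $a$ is close to $n$, I would write $a = n - b$ with $b$ small, use $av \equiv n - (bv \bmod n) \pmod n$ (legitimate since $\gcd(v,n)=1$ forces $bv \not\equiv 0$), and invoke the identity $w_2(j) + w_2(n-j) = m$ for $1 \leq j \leq n-1$ to obtain $w_2(av \bmod n) = m - w_2(bv \bmod n) = m - (m-4)/2 = (m+4)/2$.

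It then remains to read off, in each of the residue classes $m \equiv 0, 2, 4 \pmod 6$, that the computed weight avoids the excluded index: $(m-4)/2$ is $\equiv 1, 2, 0$ and $(m+4)/2$ is $\equiv 2, 0, 1 \pmod 3$ respectively, and a line-by-line comparison with the stated exclusions $T_{(2,m)}, T_{(0,m)}, T_{(1,m)}$ (part 1), $T_{(0,m)}, T_{(1,m)}, T_{(2,m)}$ (part 2), and $T_{(1,m)}, T_{(2,m)}, T_{(0,m)}$ (part 3) shows the forbidden residue is never hit. The genuinely tedious step — and the only real obstacle — is the boundary bookkeeping: the endpoints that exceed a pure power of $2$ (the ``$+2$'' and ``$+4$'' terms in the ranges, together with the first few values of $b$ in the large-$a$ blocks) contribute an extra carry or an extra one-bit and must be verified one at a time, exactly as the isolated values $a = 2^{(m-2)/2}+1$ and $a = 2^{(m-2)/2}+2$ were handled in the proof of Lemma \ref{lemma-m61}. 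Since each such endpoint is a single explicit integer, its $2$-adic weight is computed directly and checked to lie in an admissible class; assembling these finitely many verifications across the three congruences of $m$ completes the proof.
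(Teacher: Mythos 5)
Your proposal is correct and takes essentially the same route as the paper, which omits the proof of this lemma precisely because it ``can be similarly proved by using the same techniques given in Section \ref{sec-auxi}'': coprimality via Lemma \ref{lemma-gcd2} (with $\gcd(k,m)$ odd and dividing $4$), the carry-free decomposition $\bar a v = (\bar a -1)2^k + (2^k - \bar a)$ giving $w_2(\bar a v)=k$ on the power-of-two ranges, the identity $w_2(j)+w_2(n-j)=m$ for the large-$a$ blocks, and one-at-a-time verification of the boundary values. Your residue bookkeeping is also accurate: $(m-4)/2 \equiv 1,2,0$ and $(m+4)/2 \equiv 2,0,1 \pmod{3}$ for $m \equiv 0,2,4 \pmod{6}$, which indeed avoids the excluded class in every line of all three parts.
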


When $m \geq 4$ is even, the parameters of  $\mathcal C_{(i,m)}^\bot$ for $i=0, 1, 2$ are treated in the following theorem.

    \begin{theorem}\label{them-5.7}
    	Let $ m \geq 4$ be even. Then we have the following.
        \begin{enumerate}
      \item The code $\mathcal C_{(0,m)}^{\bot}$ has parameters $[2^m-1, (2^{m}-4)/3, d^{\bot}]$,
    	where
    $$d^{\bot} \geq \begin{cases}
    		2^{m/2} +6, &  \text{ if } m \equiv 0 \pmod{6}; \\
    		2^{(m-2)/2} +6, &  \text{ if } m \equiv 2 \pmod{6}; \\
    		2^{(m-2)/2} +6, &  \text{ if } m \equiv 4 \pmod{6},
    	\end{cases}$$
    	when $m \equiv 0 \pmod{4}$,  or $$  d^{\bot} \geq \begin{cases}
    	2^{(m-2)/2} +10, &  \text{ if } m \equiv 0 \pmod{6}; \\
    	2^{(m-4)/2} +4, &  \text{ if } m \equiv 2 \pmod{6}; \\
    	2^{(m-4)/2} +4, &  \text{ if } m \equiv 4 \pmod{6},
        \end{cases}$$ when $m \equiv 2 \pmod{4}$.
      \item The code $\mathcal C_{(1,m)}^{\bot}$ has parameters $[2^m-1,  k^{\bot}, d^{\bot}]$,
    	where $$ k^{\bot}=\begin{cases}
    		(2^{m} -1)/3, & \text{if } m \equiv 0 \pmod{6};   \\
    		(2^{m}+2)/3, & \text{if } m \equiv 2 \pmod{6};  \\
    		(2^{m} -4)/3, & \text{if } m \equiv 4 \pmod{6},
    	\end{cases}$$ and
    		$$ d^{\bot} \geq \begin{cases}
    		2^{(m-2)/2} +2, &  \text{ if } m \equiv 0 \pmod{6}; \\
    		2^{m/2} +6, &  \text{ if } m \equiv 2 \pmod{6}; \\
    		2^{(m-2)/2} +6, &  \text{ if } m \equiv 4 \pmod{6},
    	\end{cases}$$ when $m \equiv 0 \pmod{4}$, or
     $$d^{\bot} \geq \begin{cases}
    		2^{(m-4)/2} +2, &  \text{ if } m \equiv 0 \pmod{6}; \\
    		2^{(m-2)/2} +2, &  \text{ if } m \equiv 2 \pmod{6}; \\
    		2^{(m-4)/2} +4, &  \text{ if } m \equiv 4 \pmod{6},
    	\end{cases}$$ when $m \equiv 2 \pmod{4}$.
      \item The code $\mathcal C_{(2,m)}^{\bot}$ has parameters $[2^m-1,  k^{\bot}, d^{\bot}]$,
    	where $$ k^{\bot}=\begin{cases}
    		(2^{m} -1)/3, & \text{if } m \equiv 0 \pmod{6};   \\
    		(2^{m}-4)/3, & \text{if } m \equiv 2 \pmod{6};   \\
    		(2^{m} +2)/3, & \text{if } m \equiv 4 \pmod{6},
    	\end{cases}$$ and
    		$$ d^{\bot} \geq \begin{cases}
    		2^{(m-2)/2} +2, &  \text{ if } m \equiv 0 \pmod{6}; \\
    		2^{(m-2)/2} +6, &  \text{ if } m \equiv 2 \pmod{6}; \\
    		2^{m/2} +2, &  \text{ if } m \equiv 4 \pmod{6},
    	\end{cases}$$ when $m \equiv 0 \pmod{4}$, or
    $$  d^{\bot} \geq \begin{cases}
    		2^{(m-4)/2} +2, &  \text{ if } m \equiv 0 \pmod{6}; \\
    		2^{(m-4)/2} +4, &  \text{ if } m \equiv 2 \pmod{6}; \\
    		2^{(m-2)/2} +10, &  \text{ if } m \equiv 4 \pmod{6},
    	\end{cases}$$ when $m \equiv 2 \pmod{4}$.
    \end{enumerate}
       \end{theorem}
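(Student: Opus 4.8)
The plan is to mirror the proof of Theorem~\ref{them-5.4} for the odd case, replacing the odd-$m$ ingredients by their even-$m$ analogues. The two pillars are the negation symmetry $w_{2}(i)+w_{2}(n-i)=m$ for $1\le i\le n-1$, which pins down the dimensions, and the ``consecutive-run'' Lemmas~\ref{lemma-m61ed} and~\ref{lemma-m61ed2}, which feed the BCH bound for the minimum distances. Throughout, $T_{(i,m)}^{-1}$ denotes the defining set under negation, $T_{(i,m)}^{-1}=\{\,n-t : t\in T_{(i,m)}\,\}$, and I recall that the defining set of the dual satisfies $T_{(i,m)}^{\bot}=\mathbb{Z}_n\setminus T_{(i,m)}^{-1}$.

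First I would settle the dimensions. From $\dim(\mathcal C_{(i,m)}^{\bot})=n-|T_{(i,m)}^{\bot}|=|T_{(i,m)}^{-1}|$ it suffices to identify $T_{(i,m)}^{-1}$. Since $w_{2}(n-i)\equiv m-w_{2}(i)\pmod 3$, negation fixes the residue class of $i$ exactly when $2i\equiv m\pmod 3$ and otherwise swaps the two nontrivial classes; as $m$ is even it lies in $\{0,2,4\}$ modulo $6$. Concretely, for $m\equiv 0\pmod 6$ one obtains $T_{(0,m)}^{-1}=T_{(0,m)}$, $T_{(1,m)}^{-1}=T_{(2,m)}$, $T_{(2,m)}^{-1}=T_{(1,m)}$; for $m\equiv 2\pmod 6$ the fixed class is $1$; and for $m\equiv 4\pmod 6$ the fixed class is $2$. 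Reading off $|T_{(j,m)}|$ from the proofs of Theorems~\ref{them-4.5} and~\ref{them-4.6}, whose cardinalities depend only on $m\bmod 6$, then yields each claimed value of $k^{\bot}$. This also explains why the dimensions are not split by $m\bmod 4$: the sets $T_{(j,m)}$, hence their sizes, are the same for $m\equiv 0$ and $m\equiv 2\pmod 4$.

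Next I would establish the distance bounds. For fixed $i$, choose the relevant $v$ from Lemma~\ref{lemma-m61ed} (when $m\equiv 0\pmod 4$) or Lemma~\ref{lemma-m61ed2} (when $m\equiv 2\pmod 4$); since $\gcd(v,n)=1$, let $\bar v$ be its inverse modulo $n$ and set $\gamma=\alpha^{\bar v}$, a primitive $n$-th root of unity. Because $T_{(i,m)}^{\bot}=\mathbb{Z}_n\setminus T_{(i,m)}^{-1}=\mathbb{Z}_n\setminus T_{(j,m)}$ for the $j$ found above, the appropriate part of the chosen lemma exhibits a block $\{av\bmod n\}$ lying inside this defining set, where $a$ ranges over an interval (or a wrap-around union of two intervals meeting at $n-1\equiv -1$ and $0$). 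Multiplying by $\bar v$ turns $\{av\bmod n\}$ into the corresponding set of $a$'s, a genuine run of consecutive integers modulo $n$ inside the defining set of $\mathcal C_{(i,m)}^{\bot}$ with respect to $\gamma$; counting its length $\ell$ and invoking the BCH bound gives $d^{\bot}\ge \ell+1$. For instance, for $\mathcal C_{(0,m)}^{\bot}$ with $m\equiv 0\pmod 6$ and $m\equiv 0\pmod 4$, part~(2) of Lemma~\ref{lemma-m61ed} supplies a wrap-around run of length $2^{m/2}+5$, whence $d^{\bot}\ge 2^{m/2}+6$; the single-interval parts likewise produce the various $2^{(m-2)/2}+\mathrm{const}$ and $2^{m/2}+\mathrm{const}$ bounds.

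The only genuine obstacle is organizational rather than conceptual: there are $3\times 2\times 3$ sub-cases indexed by the code $i$, the residue $m\bmod 4$, and the residue $m\bmod 6$, and for each one must select the correct part of the correct lemma, match the dual's complement to the right $T_{(j,m)}$ via the negation rule above, and count the run length correctly. Particular care is needed for the wrap-around unions, where the two intervals join across $n-1$ and $0$ so that the total run length is the sum of the two interval lengths (e.g.\ an interval of $2^{(m-2)/2}+2$ terms together with one of $2^{(m-2)/2}+3$ terms yields $2^{m/2}+5$). Once the cases are tabulated in this way, every bound follows at once from the BCH bound, exactly as in Theorem~\ref{them-5.4}.
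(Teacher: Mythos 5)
Your proposal is correct and follows exactly the route the paper intends: the paper omits this proof as ``very similar to that of Theorem~\ref{them-5.4}'', and your argument is precisely that adaptation --- negation symmetry $w_2(i)+w_2(n-i)=m$ to identify $T_{(i,m)}^{-1}$ and hence the dual dimensions via the cardinalities from Theorems~\ref{them-4.5} and~\ref{them-4.6}, then Lemmas~\ref{lemma-m61ed} and~\ref{lemma-m61ed2} with the scaling $\gamma=\alpha^{\bar v}$ and the BCH bound for the distance bounds. Your run-length counts (including the wrap-around case giving $2^{m/2}+5$ consecutive elements, hence $d^\perp \ge 2^{m/2}+6$) check out against the stated bounds.
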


    \begin{proof}
    	The proof is very similar to that of Theorem  \ref{them-5.4} and omitted here.
    \end{proof}

	\begin{example}
 Let $m=6$ and  let $\alpha$ be a generator of  $\Bbb F_{2^6}^*$ with $\alpha^{6}+\alpha^{4}+ \alpha^{3}+ \alpha +1 = 0$.
 \begin{itemize}
  \item The codes $\mathcal C_{(0,6)}$  and $\mathcal C_{(0,6)}^{\bot}$  have parameters $[63, 43, 6]$ and $[63, 20, 14]$, respectively.
  \item  The codes $\mathcal C_{(1,6)}$ and $\mathcal C_{(1,6)}^{\bot}$ have parameters $[63, 42, 6]$ and $[63, 21, 16]$, respectively.
  \item The codes $\mathcal C_{(2,6)}$ and $\mathcal C_{(2,6)}^{\bot}$ have parameters $[63, 42, 6]$ and $[63, 21, 16]$, respectively.
 \end{itemize}
\end{example}

\subsection{Comments on the binary cyclic codes $ \mathcal C_{(i,m)} $}

The dimensions $\dim(\mathcal C_{(i,m)}) $ of these codes are around $2n/3$ and the dimensions of their duals are around $n/3$.
Experimental data shows that the lower bounds on the minimum distances of $ \mathcal C_{(i,m)} $ and $ \mathcal C_{(i,m)}^\perp$
developed in the previous subsections are very good. In particular, the three families of codes contain distance-optimal codes
(see Example  \ref{exam-j91}). Hence, they are three dully-good infinite families of binary cyclic codes.

\section{Two families of binary duadic codes with a square-root-like lower bound}\label{sec-duadicc}

\subsection{Known binary duadic codes with a square-root-like lower bound}

In this subsection, we introduce binary duadic codes and survey binary duadic codes with a square-root-like lower bound on
their minimum distances, which are binary duadic codes with parameters of the form $[n, (n \pm1 )/2, d]$ such that $d$ is
very close to $\sqrt{n}$.

		Let $n$ be an odd positive integer and let $\Bbb Z_n$ denote the ring of integers modulo $n$.
	Let $m=\text{ord}_n(2)$, i.e., the order of $2$ modulo $n$. Let $\beta$ be an $n$-th primitive root of unity in
	$\Bbb F_{2^m}$.
	Let $S_1$ and $S_2$
	be two subsets of $\Bbb Z_n$ such that
	\begin{itemize}
		\item $S_1 \cap S_2 = \emptyset$ and $S_1 \cup S_2=\Bbb Z_n \setminus \{0\}$, and
		\item both $S_1$ and $S_2$ are the union of some $2$-cyclotomic cosets modulo $n$.
	\end{itemize}
	If there is a unit $\mu \in \Bbb Z_n$ such that $S_1 \mu = S_2$ and $S_2 \mu =S_1$,
	then $(S_1, S_2, \mu)$ is called a \emph{splitting\index{splitting}} of $\Bbb Z_n$.
	
	Let $(S_1, S_2, \mu)$ be a
	splitting of $\Bbb Z_n$.
	Define
	$$
	g_i(x)=\prod_{i \in S_i} (x - \beta^i) \ \mbox{ and } \ \tilde{g}_i(x)=(x-1) g_i(x)
	$$
	for $i \in \{1,2\}$. The pair of cyclic codes $\mathcal C_1$ and $\mathcal C_2$ of length $n$ over $\Bbb F_2$ with generator
	polynomials $g_1(x)$ and $g_2(x)$ are called \emph{odd-like duadic codes\index{odd-like duadic codes}},
	and the pair of cyclic codes $\tilde{\mathcal C}_1$ and $\tilde{\mathcal C}_2$ of length $n$ over $\Bbb F_2$ with generator
	polynomials $\tilde{g}_1(x)$ and $\tilde{g}_2(x)$ are called \emph{even-like duadic codes\index{even-like duadic codes}}.
	
	By definition, $\mathcal C_1$ and $\mathcal C_2$ have parameters $[n, (n+1)/2]$ and $\tilde{\mathcal C}_1$ and $\tilde{\mathcal C}_2$ have
	parameters $[n, (n-1)/2]$. For odd-like duadic codes, we have the following result \cite[Theorem 6.5.2]{HP03}.
	
	\begin{theorem}[Square root bound]\label{thm-srb}
		Let $\mathcal C_1$ and $\mathcal C_2$ be a pair of odd-like duadic codes of length $n$ over $\Bbb F_2$. Let $d_o$ be their
		(common) minimum odd weight. Then the following hold:
		\begin{enumerate}
			\item $d_o^2 \ge n$.
			\item If the splitting defining the duadic codes is given by $\mu=-1$, then $d_o^2-d_o+1 \geq n$.
			\item Suppose $d_o^2-d_o+1 = n$, where $d_o >2$, and assume that the splitting defining the duadic codes is given by $\mu=-1$. Then $d_o$ is the minimum weight of both $\mathcal C_1$ and $\mathcal C_2$.
		\end{enumerate}
	\end{theorem}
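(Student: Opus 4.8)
The plan is to work in the ring $R=\Bbb F_2[x]/(x^n-1)$ and exploit three structural facts about the pair $(\mathcal{C}_1,\mathcal{C}_2)$. First, both $\mathcal{C}_1$ and $\mathcal{C}_2$ are ideals of $R$. Second, the multiplier $\mu$ acts as the coordinate permutation $c(x)\mapsto c(x^\mu)$, and since $S_1\mu=S_2$ this permutation sends $\mathcal{C}_1$ onto $\mathcal{C}_2$ while preserving Hamming weight. Third, because the defining sets satisfy $S_1\cup S_2=\Bbb Z_n\setminus\{0\}$, the intersection $\mathcal{C}_1\cap\mathcal{C}_2$ is the one-dimensional repetition code $\langle j(x)\rangle$, where $j(x)=1+x+\cdots+x^{n-1}$ is the all-ones word. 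The pivotal observation I would record is that for $a\in\mathcal{C}_1$ and $b\in\mathcal{C}_2$ the product $ab$ lies in $\mathcal{C}_1\cap\mathcal{C}_2=\{0,\,j(x)\}$ (each factor absorbs the other into its ideal), and that evaluating at $x=1$ gives $(ab)(1)=a(1)b(1)$, which records the product of the weight-parities of $a$ and $b$.

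For part (1), I would take $c\in\mathcal{C}_1$ of minimum odd weight $d_o$ and set $\bar c(x)=c(x^\mu)\in\mathcal{C}_2$, of the same weight $d_o$. As $c$ has odd weight, $c(1)=\bar c(1)=1$, so $(c\bar c)(1)=1$ and therefore $c\bar c=j(x)$. But the product $c\bar c$, reduced modulo $x^n-1$, has at most $d_o^2$ nonzero monomials (cancellation in characteristic $2$ can only decrease this count), whereas $j(x)$ has $n$ nonzero coordinates; hence $n\le d_o^2$. For part (2), specializing to $\mu=-1$ gives $\bar c(x)=c(x^{-1})$, so with $D=\{a_1,\dots,a_{d_o}\}$ the support of $c$ we have $c\,\bar c=\sum_{i,j}x^{a_i-a_j}=j(x)$ in $R$. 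The $d_o$ diagonal pairs ($i=j$) supply the (odd, hence correct) coefficient at position $0$, while the $d_o(d_o-1)$ off-diagonal differences $a_i-a_j$ with $i\ne j$ must produce the coefficient $1$ at each of the $n-1$ nonzero positions; as each position must be hit at least once, $d_o(d_o-1)\ge n-1$, that is $d_o^2-d_o+1\ge n$.

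The main obstacle, and the only part requiring a genuinely new idea, is part (3). Here equality $d_o^2-d_o+1=n$ forces the $d_o(d_o-1)$ off-diagonal differences from part (2) to be exactly the $n-1$ nonzero residues, each occurring once; thus $D$ is a planar $(n,d_o,1)$ difference set, and its translates $D+r$ for $r\in\Bbb Z_n$ are the lines of a projective plane of order $d_o-1$ on the point set $\Bbb Z_n$ (this is where $d_o>2$ is needed, to guarantee a genuine plane of order at least $2$, in which two points lie on a unique line and the lines through a fixed point partition the remaining points). Since $\mathcal{C}_1$ and $\mathcal{C}_2$ are equivalent under $\mu$, it suffices to bound the minimum weight of $\mathcal{C}_1$: odd-weight words have weight $\ge d_o$ by the very definition of $d_o$, so I would show that every nonzero even-weight word $w$ has weight $\ge d_o$. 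For such $w$ one has $w(1)=0$, whence $w\,c(x^{-1})=0$ in $R$; reading off coefficients, this means $|\,\mathrm{supp}(w)\cap(D+r)\,|$ is even for every $r$, i.e. the support $W$ of $w$ meets every line of the plane evenly. Fixing a point $p\in W$, the $d_o$ lines through $p$ partition the other $n-1$ points, and each such line meets $W$ evenly while containing $p\in W$, hence contains an odd (so nonzero) number of further points of $W$; these contributions are disjoint, giving $|W|-1\ge d_o$ and therefore $\mathrm{wt}(w)\ge d_o+1$. Combining the two cases shows the minimum weight of $\mathcal{C}_1$, and by equivalence of $\mathcal{C}_2$, equals $d_o$.
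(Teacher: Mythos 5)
Your proof is correct, and it is essentially the classical Huffman--Pless argument for this theorem: the product of two minimum-odd-weight codewords (one from each code, related by the multiplier) lies in $\mathcal{C}_1\cap\mathcal{C}_2=\{0,\,j(x)\}$ and must equal $j(x)$, monomial counting then gives parts (1) and (2), and in the equality case of (3) the support of $c$ is a planar $(n,d_o,1)$ difference set whose translates form a projective plane, which forces every nonzero even-weight codeword to have weight at least $d_o+1$. Note that the paper itself offers no proof of this statement---it quotes it as Theorem 6.5.2 of \cite{HP03}---so your write-up in effect reconstructs the cited textbook proof, and I see no gap in it.
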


Binary duadic codes are theoretically attractive due to the following facts \cite{DingPless99}:
\begin{itemize}
\item All code examples demonstrate that duadic codes of prime lengths have a square-root bound on their minimum distances.
It is known that duadic codes that are not quadratic-residue codes have a square-root bound on their minimum odd weight
(see Theorem \ref{thm-srb}).
\item Duadic codes could be the best class of cyclic codes of certain lengths. For example, consider $n=31$   and all binary cyclic codes of length $31$    and dimension  $16$. There are four $[31,16]$  binary cyclic codes up to equivalence,
one is the quadratic-residue code and the other is duadic, not quadratic-residue. Both of these have minimum weight $7$, the other two codes have minimum weights $5$ and $6$ \cite[Chapter 6]{HP03}.
\item Another example of very good duadic codes that are not quadratic-residue codes are the  $[113,57,18]$ codes, which have higher minimum weight than the quadratic residue code of the same length \cite{Ples87}. Further the  $[151,76,23]$  and $[233, 117, 32]$
duadic codes have higher minimum weights than the quadratic residue codes of the same lengths \cite{Ples87} and indeed are the best codes known of their length.
\item Every self-dual extended cyclic binary code is the extended code of a duadic code \cite[p. 10]{Ples87}.
\end{itemize}

Duadic codes are a generalisation of the quadratic residue codes. They were introduced
and investigated by Leon, Masley and Pless \cite{Leon84}, Leon \cite{Leon88},
and Pless, Masley and Leon \cite{Ples87}, where a
number of properties are proved. Also all binary duadic codes of length
until 241 are described in \cite{Ples87}.
The total number of binary duadic codes of prime power lengths and their constructions were
presented in  \cite{DLX99} and \cite{DingPless99}.
Further information on the existence, constructions, and properties of duadic codes can
be found in \cite[Chapter 6]{HP03}.

It is proven that binary duadic codes of length
$n$ exist if and only if $n=\prod_i p_i^{m_i}$ where each $p_i \equiv
\pm 1 \pmod{8}$ \cite{Leon84}. In general, there are many binary duadic codes of length
$n$  (see \cite{DLX99} and \cite{DingPless99}).
Experimental computations show that the
minimum distance of many binary duadic codes is poor \cite[Chapter 6]{HP03}.
The minimum weight of an odd-like duadic code may be even. It is open which
binary duadic codes have an odd minimum weight.
Hence, the
lower bound in Theorem \ref{thm-srb} cannot be used to develop a lower bound
on the minimum weight of a binary duadic code.
The only known infinite families of binary duadic codes with a square-root-like lower bound on their minimum distances
are the following:
\begin{itemize}
\item Binary quadratic residue codes with parameters $[n, (n+1)/2, d]$, where $d^2 \geq n$ and $n \equiv \pm 1 \pmod{8}$ is a prime.
\item The punctured binary Reed-Muller codes of order $(m-1)/2$ which has parameters $[2^m-1, 2^{m-1}, 2^{(m+1)/2}-1]$, where $m$
          is odd.
\item  Two infinite families of binary duadic codes with parameters $[2^m-1, 2^{m-1}, d]]$ presented in \cite{TD}, where $m>3$ is odd
and $d$ has a square-root-like lower bound.
\end{itemize}
Motivated by these facts above, we will present two more infinite families of binary duadic codes with a square-root-like lower bound on their minimum distances
in the next subsection.

\subsection{The two families of binary duadic codes}

From now on, we fix $n=2^m-1$ and let $\alpha$ denote a primitive element of $\Bbb F_{2^m}$, where $m \geq 3$ is odd.
Define a polynomial
\begin{equation} \label{equ-gi1i2m}
	g_{(i_1,i_2,m)}(x) =  \prod\limits_{
		\substack{1 \leq i \leq n-1\\
			w_{2}(j) \equiv i_1 \text{ or } i_2 \pmod{4}}}
	(x-\alpha^{j}),
\end{equation}
where $i_1$ and $i_2$ are a pair of distinct elements in the set $\{0,1,2,3\}$.
It is easily seen that $g_{(i_1,i_2,m)}(x) \in \Bbb F_2[x]$. Let $\mathcal C_{(i_1,i_2,m)}$ denote the binary cyclic code of length $n=2^m-1$ with generator polynomial $g_{(i_1,i_2,m)}(x)$.
 Denote
   $$T_{(i_1,i_2,m)} = \{ 1 \leq j \leq n-1: \ \ w_{2}(j) \equiv i_1 \text{ or } i_2 \pmod{4}\}$$
   for $i=0, 1, 2,3$.  It is  clear that $T_{(i_1,i_2,m)}$ is the defining set of $\mathcal C_{(i_1,i_2,m)}$ with respect to the $n$-th primitive root of unity $\alpha$. When $m$ is odd, it is easy to see that $\mathcal C_{(0,2,m)}$ and $\mathcal C_{(1,3,m)}$ form a  pair of odd-like duadic codes, which were originally proposed and studied in \cite{TD}. We will prove the following two
   statements later:
\begin{itemize}
   	\item  When $m \equiv 1 \pmod 4$, $\mathcal C_{(0,3,m)}$ and $\mathcal C_{(1,2,m)}$ form a  pair of odd-like duadic codes.
   	\item  When $m \equiv 3 \pmod 4$, $\mathcal C_{(0,1,m)}$ and $\mathcal C_{(2,3,m)}$ form a  pair of odd-like duadic codes.
   \end{itemize}
In the following subsections, we will investigate the dimensions and minimum distances of the two families of duadic codes and their dual and extended codes.

\subsection{Some auxiliary results}

	To develop lower bounds on the minimum distances of the two families of duadic codes, we need some auxiliary results about their defining sets.
	
	\begin{lemma}\label{lemma-m81}
		Let $m \equiv 1 \pmod{8} \geq 9$. Then we have the following.
\begin{enumerate}
  \item If $v=2^{(m-1)/2} -1$, then $\gcd(v,n)=1$ and
		$$\{av : 1 \leq a \leq 2^{(m-1)/2} +2 \} \subseteq	 T_{(0, 3, m)}.$$
  \item If $v=2^{(m+1)/2} -1$, then $\gcd(v,n)=1$ and $$\{av : 1 \leq a \leq 2^{(m-1)/2} +2 \} \subseteq  T_{(1, 2, m)}.$$
\end{enumerate}		
	\end{lemma}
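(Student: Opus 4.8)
The plan is to follow the same strategy as the proof of Lemma \ref{lemma-m61}, replacing the modulus $3$ by $4$ and exploiting that $m \equiv 1 \pmod 8$ forces $(m-1)/2 \equiv 0 \pmod 4$ and $(m+1)/2 \equiv 1 \pmod 4$. For both parts I would first dispose of the coprimality claim via Lemma \ref{lemma-gcd2}: when $v = 2^{(m-1)/2}-1$ this gives $\gcd(v,n) = 2^{\gcd((m-1)/2,\,m)}-1 = 2^{\gcd((m-1)/2,\,1)}-1 = 1$ since $m = 2\cdot\tfrac{m-1}{2}+1$, and similarly $\gcd(2^{(m+1)/2}-1,\,n) = 2^{\gcd((m+1)/2,\,m)}-1 = 1$. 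The remaining task is to show that every $av$ (reduced modulo $n$ where necessary) has $2$-adic weight in the correct residue class: $w_2(av) \equiv 0$ or $3 \pmod 4$ for part (1), and $w_2(av) \equiv 1$ or $2 \pmod 4$ for part (2).

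For part (1), with $v = 2^{(m-1)/2}-1$, I would treat the three boundary values $a \in \{2^{(m-1)/2},\, 2^{(m-1)/2}+1,\, 2^{(m-1)/2}+2\}$ by direct expansion exactly as in Lemma \ref{lemma-m61}: these give $av = 2^{(m-1)/2}v$, $av = 2^{m-1}-1$, and $av = 2^{m-1}+2^{(m-1)/2}-2$, with $2$-adic weights $(m-1)/2$, $m-1$, and $(m-1)/2$ respectively (no reduction mod $n$ is needed since here $av < n$). For the main range $1 \le a \le 2^{(m-1)/2}-1$ I would write $a = 2^l\bar a$ with $\bar a$ odd, note $w_2(av) = w_2(\bar a v)$ because multiplication by $2^l$ acts as a cyclic shift, and reuse the key expansion
$$\bar a v = \sum_{i=1}^{(m-3)/2} a_i\, 2^{\,i+(m-1)/2} + \sum_{i=0}^{(m-3)/2}(1-a_i)2^i + 1,$$
which yields $w_2(\bar a v) = (m-1)/2$. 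Since $m \equiv 1 \pmod 8$ makes both $(m-1)/2$ and $m-1$ divisible by $4$, every such $av$ satisfies $w_2(av) \equiv 0 \pmod 4$, hence lies in $T_{(0,3,m)}$.

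For part (2), with $v = 2^{(m+1)/2}-1$, the analogous computations (again splitting off $a = 2^{(m-1)/2},\, 2^{(m-1)/2}+1,\, 2^{(m-1)/2}+2$ and treating $1 \le a \le 2^{(m-1)/2}-1$ through the odd part $\bar a$) produce $2$-adic weights equal to either $(m+1)/2$ or $1$. In every case the resulting weight is $\equiv 1 \pmod 4$ because $m \equiv 1 \pmod 8$ gives $(m+1)/2 \equiv 1 \pmod 4$, so each $av \in T_{(1,2,m)}$. The calculation is otherwise routine once Lemma \ref{lemma-m61} is in hand, since the expansions are identical and only the final congruence bookkeeping changes from modulus $3$ to modulus $4$.

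The one step I expect to require genuine care is the reduction modulo $n$ in part (2), where $av$ can exceed $n$: for $a = 2^{(m-1)/2}+1$ one gets $av \equiv 2^{(m-1)/2} \pmod n$ only after subtracting $n$, and for $a = 2^{(m-1)/2}+2$ the reduced representative is $2^{(m+1)/2}+2^{(m-1)/2}-1$, whose weight is $(m+1)/2$ (not $(m+3)/2$). A careless collapse of the carry chain there can shift the weight by one and break the residue count, so I would verify that a single subtraction of $n$ lands each value in $\{0,\dots,n-1\}$ and recompute these two weights from the explicit binary patterns.
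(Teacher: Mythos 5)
Your proposal is correct and follows essentially the same route as the paper: the paper's own proof of Lemma \ref{lemma-m81} is precisely the argument of Lemma \ref{lemma-m61} transplanted from modulus $3$ to modulus $4$, with the coprimality claims from Lemma \ref{lemma-gcd2}, the three boundary values of $a$ handled by direct expansion, and the range $1 \le a \le 2^{(m-1)/2}-1$ handled via the odd part $\bar a$ and the identical binary expansion of $\bar a v$. Your explicit treatment of the mod-$n$ reductions for $a = 2^{(m-1)/2}+1$ and $a = 2^{(m-1)/2}+2$ in part (2) — yielding weights $1$ and $(m+1)/2$, both $\equiv 1 \pmod 4$ — is exactly the computation the paper compresses into ``one can similarly check,'' and your values agree with it.
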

	
	\begin{proof}
If $v=2^{(m-1)/2} -1$, it follows from Lemma \ref{lemma-gcd2} that $ \gcd(v,n) =1$.
		 When $a=2^{(m-1)/2} +2$, we have
		$$av =2^{m-1} + 2^{(m-1)/2} -2 = 2(2^{m-2} + 2^{(m-3)/2}-1).$$
		Consequently, $w_{2}(av) =(m-1)/2 \equiv 0 \pmod{4}$.
		When $ a=2^{(m-1)/2} +1 $, $ av = 2^{m-1} -1$ and $w_{2}(av) =m-1 \equiv 0 \pmod{4}$.
		When $ a=2^{(m-1)/2}$, $w_{2}(av) =w_{2}(v)=(m-1)/2\equiv 0 \pmod{4}$.
		Now we assume that $1 \leq a \leq 2^{(m-1)/2} - 1$. Let $a=2^{l}\bar{a}$, where $\bar{a}$ is odd and $l \ge 0$ is an integer. Then we have $1 \leq \bar{a} \leq 2^{(m-1)/2} - 1$ and the $2$-adic expansion of $\bar{a}$ is given by $$ \bar{a} = \sum\limits_{i=0}^{(m-3)/2}a_{i}2^{i}.$$
		Since $\bar{a}$ is odd, $a_{0}=1$. We have
		\begin{align*}
			\bar{a}v =\sum\limits_{i=1}^{(m-3)/2}a_{i}2^{i+(m-1)/2} + \sum\limits_{i=0}^{(m-3)/2}(1-a_{i})2^{i} +1.
		\end{align*}
	    It then follows that
		$$w_{2}(\bar{a}v) =  w_{2}(\bar{a})-1 + 1+ \frac{m-1}{2}-w_{2}(\bar{a}) = \frac{m-1}{2}\equiv 0 \pmod{4}.$$
		The desired conclusion in the first case then follows.

		If $v=2^{(m+1)/2} -1$, it follows from Lemma \ref{lemma-gcd2} that
		$$\gcd(v,n)=2^{\gcd((m+1)/2,m)}-1 =2^{\gcd((m+1)/2,(m-1)/2)}-1 =1.$$
	When $ a=2^{(m-1)/2}$, it is easy to see that
		$$w_{2}(av) = w_{2}(v) = \frac{m+1}{2}\equiv 1 \pmod{4}.$$
Furthermore, one can similarly check that $w_{2}(av) \equiv 1 \pmod{4}$ for $a=2^{(m-1)/2}+1$ and $2^{(m-1)/2}+2$.
		Next, we assume that $1 \leq a \leq 2^{(m-1)/2} - 1$. Let $a=2^{l}\bar{a}$, where $\bar{a}$ is odd and $l \ge 0$ is an integer. Then we have $1 \leq \bar{a} \leq 2^{(m-1)/2} - 1$. Let the 2-adic expansion of $\bar{a}$ be given by
		$$\bar{a} = \sum\limits_{i=0}^{(m-3)/2}a_{i}2^{i}.$$
		Since $\bar{a}$ is odd, $a_{0}=1$.  Then
		\begin{align*}
			\bar{a}v &= \bar{a}2^{(m+1)/2} -\bar{a} \\
			&= \sum\limits_{i=1}^{(m-3)/2}a_{i}2^{i+(m+1)/2} +2^{(m-1)/2} + \sum\limits_{i=0}^{(m-3)/2}(1-a_{i})2^{i} +1.
		\end{align*}
		As a result, we have
		$$ w_{2}(\bar{a}v) =  w_{2}(\bar{a})-1 + 2 + \frac{m-1}{2}-w_{2}(\bar{a}) = \frac{m+1}{2}\equiv 1 \pmod{4}.$$
		This completes the proof.
	\end{proof}	
	
The following three lemmas  can be similarly proved and their proofs are omitted here.
		
		\begin{lemma}\label{lemma-m83}
		Let $m \equiv 3 \pmod{8} \geq 3$. Then we have the following.
 \begin{enumerate}
  \item If $v=2^{(m-1)/2} -1$, then $\gcd(v,n)=1$ and
		$$\{av : 1 \leq a \leq 2^{(m-1)/2}  \} \subseteq  T_{(0, 1, m)}.$$
  \item If $v=2^{(m+1)/2} -1$, then $\gcd(v,n)=1$ and $$\{av : 1 \leq a \leq 2^{(m-1)/2} \} \subseteq  T_{(2, 3, m)}.$$
\end{enumerate}		
	\end{lemma}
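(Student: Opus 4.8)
The plan is to mirror the proof of Lemma~\ref{lemma-m81} verbatim in structure, since the only substantive change is that all weight congruences are now taken modulo $4$ rather than modulo $3$; the hypothesis $m \equiv 3 \pmod 8$ is precisely what forces the relevant $2$-weights into the residue classes $\{0,1\}$ and $\{2,3\}$. First I would dispatch the coprimality claims uniformly: in both parts $v = 2^{s}-1$ with $s = (m-1)/2$ or $s = (m+1)/2$, so Lemma~\ref{lemma-gcd2} gives $\gcd(v,n) = 2^{\gcd(s,m)}-1$, and since $m = 2s \mp 1$ we get $\gcd(s,m)=1$, hence $\gcd(v,n)=1$.

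For part (1), with $v = 2^{(m-1)/2}-1$, the goal is to show that every $av$ in the range has $w_2(av \bmod n) = (m-1)/2$, which is $\equiv 1 \pmod 4$ because $m \equiv 3 \pmod 8$; as $1 \in \{0,1\}$ this places $av$ in $T_{(0,1,m)}$. I would first treat the boundary value $a = 2^{(m-1)/2}$ directly, using $w_2(av)=w_2(v)$, and then write a general $a$ as $a = 2^{l}\bar{a}$ with $\bar{a}$ odd. Since multiplication by $2$ is a cyclic shift modulo $n=2^m-1$ and hence preserves $w_2$, we have $w_2(av \bmod n) = w_2(\bar a v)$, and with $\bar a = \sum_{i=0}^{(m-3)/2} a_i 2^i$ and $a_0=1$ the same telescoping identity as in Lemma~\ref{lemma-m81} yields
$$\bar a v = \sum_{i=1}^{(m-3)/2} a_i 2^{i+(m-1)/2} + \sum_{i=0}^{(m-3)/2}(1-a_i)2^{i} + 1.$$
The high-order block and the low-order block occupy disjoint bit positions and $\bar a v < 2^{m-1} < n$, so no reduction modulo $n$ is needed and the weights simply add, giving $w_2(\bar a v) = (w_2(\bar a)-1) + ((m-1)/2 - w_2(\bar a)) + 1 = (m-1)/2$, as required. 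Part (2), with $v = 2^{(m+1)/2}-1$, is identical except that the analogous expansion carries an extra isolated bit $2^{(m-1)/2}$, so the total weight becomes $(m+1)/2 \equiv 2 \pmod 4$, placing $av$ in $T_{(2,3,m)}$.

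The only point requiring genuine care — and the main, though entirely routine, obstacle — is the bit bookkeeping: verifying that the exponents in the two sums (together with the extra term $2^{(m-1)/2}$ in part (2)) really are pairwise distinct, so that $w_2$ is additive across them, and confirming the size bound $\bar a v < n$ so that the displayed integer is already the residue modulo $n$. Once these disjointness checks are in place, no idea beyond Lemma~\ref{lemma-m81} is needed, which is exactly why the paper can declare the proof analogous and omit it.
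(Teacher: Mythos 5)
Your proposal is correct and takes essentially the same route the paper intends: the paper omits this proof, declaring it analogous to Lemma~\ref{lemma-m81}, and your argument is exactly that adaptation, with the key residue computations $(m-1)/2 \equiv 1 \pmod 4$ and $(m+1)/2 \equiv 2 \pmod 4$ under $m \equiv 3 \pmod 8$ placing the weights in $\{0,1\}$ and $\{2,3\}$ respectively. Your added care about disjointness of the bit blocks, the size bound $\bar a v < n$, and weight-invariance under multiplication by $2$ modulo $2^m-1$ matches the (implicit) bookkeeping in the paper's model proof, so there is no gap.
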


		\begin{lemma}\label{lemma-m85}
		Let $m \equiv 5 \pmod{8} \geq 5$. Then we have the following.
 \begin{enumerate}
  \item If $v=2^{(m+1)/2} -1$, then $\gcd(v,n)=1$ and
		$$\{av : 1 \leq a \leq 2^{(m-1)/2}  \} \subseteq  T_{(0, 3, m)}.$$
  \item If $v=2^{(m-1)/2} -1$, then $\gcd(v,n)=1$ and $$\{av : 1 \leq a \leq 2^{(m-1)/2} \} \subseteq  T_{(1, 2, m)}.$$
\end{enumerate}		
	\end{lemma}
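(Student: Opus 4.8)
The plan is to mirror the argument given in full for Lemma~\ref{lemma-m81}, since Lemma~\ref{lemma-m85} is structurally identical: in both cases we must exhibit, for a suitable $v$ coprime to $n$, a block of $2^{(m-1)/2}$ consecutive multiples $av$ whose binary weights land in the prescribed residue class modulo $4$. The only genuine difference is the residue $m \bmod 8$, which shifts the relevant constant weights by a fixed amount. First I would establish $\gcd(v,n)=1$ for each choice of $v$ directly from Lemma~\ref{lemma-gcd2}: for $v=2^{(m+1)/2}-1$ one computes $\gcd(v,n)=2^{\gcd((m+1)/2,\,m)}-1=2^{\gcd((m+1)/2,\,(m-1)/2)}-1=1$, and similarly $v=2^{(m-1)/2}-1$ gives $\gcd(v,n)=1$. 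This is routine and identical to the coprimality step already carried out.

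For the first statement, with $v=2^{(m+1)/2}-1$ and the target residue $0 \pmod 4$, I would split the range $1\le a\le 2^{(m-1)/2}$ into the small boundary cases $a=2^{(m-1)/2}$ (and any neighbouring values that must be checked individually) and the generic range $1\le a\le 2^{(m-1)/2}-1$. For the generic range I write $a=2^{l}\bar a$ with $\bar a$ odd, reduce to $\bar a$ since $w_2$ is invariant under multiplication by $2$, expand $\bar a = \sum_{i=0}^{(m-3)/2}a_i 2^i$ with $a_0=1$, and compute the $2$-adic expansion of $\bar a v = \bar a 2^{(m+1)/2}-\bar a$ exactly as in the displayed calculation for Lemma~\ref{lemma-m81}. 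The carries telescope so that the high block contributes $w_2(\bar a)-1$ ones, the borrow at position $(m-1)/2$ contributes a fixed number of ones, and the low block contributes $(m-1)/2 - w_2(\bar a)$ ones; the $w_2(\bar a)$ terms cancel, leaving a constant weight that I then reduce modulo $4$ using $m\equiv 5\pmod 8$. For the second statement I repeat this with $v=2^{(m-1)/2}-1$ and target residue $1\pmod 4$.

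The key arithmetic check is that the resulting constant weight is congruent to the claimed residue. Since $m\equiv 5\pmod 8$ gives $(m-1)/2\equiv 2\pmod 4$ and $(m+1)/2\equiv 3\pmod 4$, I expect the generic-range weight for $v=2^{(m+1)/2}-1$ to come out as $(m-1)/2\equiv 2$ or an adjusted value matching $0\pmod 4$, and for $v=2^{(m-1)/2}-1$ to yield a value matching $1\pmod 4$; I would verify these two congruences against the boundary cases to confirm the whole block lands in the same coset. The main obstacle, and the only point requiring real care, is bookkeeping the borrow/carry structure when subtracting $\bar a$ from $\bar a\,2^{(m\pm1)/2}$: one must confirm that the overlap between the shifted copy and the subtrahend is disjoint (which holds because $\bar a < 2^{(m-1)/2}$), so that no unexpected carries disturb the weight count. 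Once that disjointness is secured the weight count is forced, and the conclusion follows verbatim from the pattern of Lemma~\ref{lemma-m81}; accordingly I would simply remark that the proof is entirely analogous and omit the repeated computation, as the authors do.
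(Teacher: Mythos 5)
Your overall strategy is exactly the one the paper intends (the paper omits this proof as ``very similar'' to Lemma~\ref{lemma-m81}), and the machinery you outline --- coprimality via Lemma~\ref{lemma-gcd2}, reduction to odd $\bar a$, and the telescoping weight count for $\bar a v = \bar a 2^{(m\pm 1)/2} - \bar a$ justified by the disjointness $\bar a < 2^{(m-1)/2}$ --- is the right machinery. However, your residue bookkeeping is wrong in both cases, and since these congruences are the entire content of the lemma, the ``key arithmetic check'' as you state it would fail. Multiplication by $v = 2^{(m+1)/2}-1$ produces the constant weight $(m+1)/2$, not $(m-1)/2$: every $a$ with $1 \leq a \leq 2^{(m-1)/2}$ satisfies $w_2(av) = (m+1)/2 \equiv 3 \pmod 4$ when $m \equiv 5 \pmod 8$. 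So membership in $T_{(0,3,m)}$ is obtained through the residue $3$, not through the residue $0$ that you name as the target. Likewise, for $v = 2^{(m-1)/2}-1$ the constant weight is $(m-1)/2 \equiv 2 \pmod 4$, so membership in $T_{(1,2,m)}$ comes through the residue $2$, not $1$.

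This is the structural point you missed: in Lemma~\ref{lemma-m81} we have $m \equiv 1 \pmod 8$, so $(m-1)/2 \equiv 0$ and $(m+1)/2 \equiv 1 \pmod 4$, whereas passing to $m \equiv 5 \pmod 8$ shifts both constant weights by $2$ modulo $4$. This is precisely why the pairing of the two values of $v$ with the two defining sets flips relative to Lemma~\ref{lemma-m81} ($2^{(m+1)/2}-1$ now goes with $T_{(0,3,m)}$ and $2^{(m-1)/2}-1$ with $T_{(1,2,m)}$), and also why the range of $a$ stops at $2^{(m-1)/2}$ rather than $2^{(m-1)/2}+2$: for instance, with $v = 2^{(m-1)/2}-1$ and $a = 2^{(m-1)/2}+1$ one gets $av = 2^{m-1}-1$ of weight $m-1 \equiv 0 \pmod 4$, which falls outside $T_{(1,2,m)}$. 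Once you replace your predicted single residues $0$ and $1$ by the correct values $3$ and $2$ --- each of which lies in the corresponding two-element residue set defining $T_{(0,3,m)}$ or $T_{(1,2,m)}$ --- your argument goes through verbatim and coincides with the proof the paper had in mind.
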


		\begin{lemma}\label{lemma-m87}
		Let $m \equiv 7 \pmod{8} \geq 7$. Then we have the following.
 \begin{enumerate}
  \item If $v=2^{(m+1)/2} -1$, then $\gcd(v,n)=1$ and
		$$\{av : 1 \leq a \leq 2^{(m-1)/2}+2 \} \subseteq  T_{(0, 1, m)}.$$
  \item If $v=2^{(m-1)/2} -1$, then $\gcd(v,n)=1$ and $$\{av : 1 \leq a \leq 2^{(m-1)/2}+2 \} \subseteq  T_{(2, 3, m)}.$$
\end{enumerate}		
	\end{lemma}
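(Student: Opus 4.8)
The plan is to prove Lemma \ref{lemma-m87} by exactly mirroring the explicit $2$-adic bookkeeping carried out in the proof of Lemma \ref{lemma-m81}, now with the congruence class taken modulo $4$ rather than modulo $3$, and with the residues aligned to the case $m \equiv 7 \pmod 8$. For both parts I would first invoke Lemma \ref{lemma-gcd2} to secure $\gcd(v,n)=1$: for $v=2^{(m+1)/2}-1$ this gives $\gcd(v,n)=2^{\gcd((m+1)/2,m)}-1=2^{\gcd((m+1)/2,(m-1)/2)}-1=1$, and for $v=2^{(m-1)/2}-1$ it gives $1$ directly since $\gcd((m-1)/2,m)=\gcd((m-1)/2,(m+1)/2)=1$. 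This guarantees that multiplication by $v$ is a bijection on $\Bbb Z_n$, so the claimed products $av$ are genuinely distinct residues forming a consecutive-integer pattern after the change of root of unity.

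For part (2), with $v=2^{(m-1)/2}-1$, I would verify the claim $w_2(av)\equiv 3\pmod 4$ for every $a$ in the range $1 \le a \le 2^{(m-1)/2}+2$, which places $av$ in $T_{(2,3,m)}$ (since $3$ is one of the two admissible residues, the other being $2$). The three boundary values $a=2^{(m-1)/2},\,2^{(m-1)/2}+1,\,2^{(m-1)/2}+2$ are checked by hand exactly as in Lemma \ref{lemma-m81}: for $a=2^{(m-1)/2}$ one has $w_2(av)=w_2(v)=(m-1)/2$, and since $m\equiv 7\pmod 8$ forces $(m-1)/2\equiv 3\pmod 4$, this lands correctly; the other two boundary cases follow from the same small computations. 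For the bulk range $1\le a\le 2^{(m-1)/2}-1$ I would write $a=2^l\bar a$ with $\bar a$ odd, expand $\bar a=\sum_{i=0}^{(m-3)/2}a_i2^i$ with $a_0=1$, and reproduce the identity
\begin{align*}
\bar a v=\sum_{i=1}^{(m-3)/2}a_i2^{i+(m-1)/2}+\sum_{i=0}^{(m-3)/2}(1-a_i)2^i+1,
\end{align*}
which yields $w_2(\bar a v)=\bigl(w_2(\bar a)-1\bigr)+1+\bigl(\tfrac{m-1}{2}-w_2(\bar a)\bigr)=\tfrac{m-1}{2}\equiv 3\pmod 4$. Since $w_2$ is invariant under multiplication by the power of $2$, $w_2(av)=w_2(\bar a v)$, completing this part.

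For part (1), with $v=2^{(m+1)/2}-1$, the target residue is $0$ (again one of the admissible pair $\{0,1\}$ for $T_{(0,1,m)}$), so I would instead show $w_2(av)\equiv 0\pmod 4$ throughout the range. The analogous expansion, borrowed from the second half of Lemma \ref{lemma-m81}'s proof, gives $\bar a v=\sum_{i=1}^{(m-3)/2}a_i2^{i+(m+1)/2}+2^{(m-1)/2}+\sum_{i=0}^{(m-3)/2}(1-a_i)2^i+1$, whence $w_2(\bar a v)=(m+1)/2$; for $m\equiv 7\pmod 8$ one has $(m+1)/2\equiv 0\pmod 4$, as required. The only genuine content, and hence the step I would watch most carefully, is confirming that each residue class computation — $(m-1)/2$ versus $(m+1)/2$ reduced mod $4$ — produces precisely one of the two labels $\{2,3\}$ or $\{0,1\}$ naming the defining set; the arithmetic itself is routine, but a sign or parity slip in reducing $(m\pm 1)/2$ modulo $4$ under the hypothesis $m\equiv 7\pmod 8$ would misidentify the target set, so I would double-check those two reductions before declaring the proof complete.
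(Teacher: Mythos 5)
Your overall route is the paper's own: the paper omits this proof, declaring it ``similarly proved'' to Lemma \ref{lemma-m81}, and your plan --- Lemma \ref{lemma-gcd2} for $\gcd(v,n)=1$, hand-checking the boundary values $a \in \{2^{(m-1)/2},\, 2^{(m-1)/2}+1,\, 2^{(m-1)/2}+2\}$, and the $2$-adic identity for $\bar a v$ on the bulk range $1 \leq a \leq 2^{(m-1)/2}-1$ --- is exactly that argument with the residues shifted to $m \equiv 7 \pmod 8$. Your bulk computations are correct: $w_2(\bar a v) = (m-1)/2 \equiv 3 \pmod 4$ in part (2), and $w_2(\bar a v) = (m+1)/2 \equiv 0 \pmod 4$ in part (1).

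However, your claim of a \emph{uniform} residue over the whole range is false, and the failure is at the boundary value $a = 2^{(m-1)/2}+1$. In part (2), $av = (2^{(m-1)/2}+1)(2^{(m-1)/2}-1) = 2^{m-1}-1$, so $w_2(av) = m-1 \equiv 2 \pmod 4$ when $m \equiv 7 \pmod 8$, not $3$. In part (1), $av = 2^m + 2^{(m-1)/2} - 1 \equiv 2^{(m-1)/2} \pmod n$, so $w_2(av \bmod n) = 1$, not $0$. Neither of these kills the lemma, because $2 \in \{2,3\}$ and $1 \in \{0,1\}$; indeed this non-uniformity is precisely why the defining sets in this section pair two residues modulo $4$, in contrast with the single-residue sets $T_{(i,m)}$ of Section \ref{sec-3fcyclic}. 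So your proof goes through once you replace ``verify $w_2(av) \equiv 3 \pmod 4$ for every $a$'' by ``verify $w_2(av) \bmod 4 \in \{2,3\}$ for every $a$'' (and analogously in part (1)). Note also that your closing worry is aimed at the wrong place: reducing $(m\pm 1)/2$ modulo $4$ under $m \equiv 7 \pmod 8$ is routine; the step that actually needs care is this exceptional boundary value, whose weight is $m-1$ (respectively $1$ after reduction modulo $n$), not $(m\pm 1)/2$.
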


\subsection{Parameters of  the two families of duadic codes and their related codes}

When $m \equiv 1 \pmod 4\geq 5$, the parameters of  $\C_{(0,3,m)}$ and $\C_{(1,2,m)}$  are treated in the following theorem.

\begin{theorem}\label{thm-m41}
	Let $m \equiv 1 \pmod 4\geq 5$ be an integer. Then $\C_{(0,3,m)}$ and $\C_{(1,2,m)}$ form a pair of odd-like duadic codes with
	parameters $[2^m-1, 2^{m-1}, d]$, where
	\begin{eqnarray*}
		d \geq \left\{
		\begin{array}{ll}
			2^{(m-1)/2}+3 & \mbox{ if } m \equiv 1 \pmod{8}, \\
			2^{(m-1)/2}+1 & \mbox{ if } m \equiv 5 \pmod{8}.
		\end{array}
		\right.
	\end{eqnarray*}
\end{theorem}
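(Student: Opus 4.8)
The plan is to first pin down the algebraic structure (which yields both the duadic property and the dimension) and then extract the distance bound from the BCH bound via the auxiliary lemmas.

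\emph{Step 1 (splitting and dimension).} I would show that $(T_{(0,3,m)}, T_{(1,2,m)}, -1)$ is a splitting of $\Bbb Z_n$ with multiplier $\mu = -1$. Disjointness and $T_{(0,3,m)} \cup T_{(1,2,m)} = \Bbb Z_n \setminus \{0\}$ are immediate from the definition, and each set is a union of $2$-cyclotomic cosets because $w_{2}$ is constant on cosets modulo $n = 2^m - 1$. The heart of this step is the action of $\mu = -1$, i.e. $j \mapsto n - j$. Using $w_{2}(j) + w_{2}(n-j) = m$ together with $m \equiv 1 \pmod 4$, the residue of $w_{2}$ modulo $4$ transforms as $0 \mapsto 1$, $3 \mapsto 2$, $1 \mapsto 0$, $2 \mapsto 3$, so $-T_{(0,3,m)} = T_{(1,2,m)}$ and $-T_{(1,2,m)} = T_{(0,3,m)}$. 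Hence the triple is a splitting and $\C_{(0,3,m)}, \C_{(1,2,m)}$ form a pair of odd-like duadic codes. Since $\mu = -1$ is a bijection between the two defining sets, $|T_{(0,3,m)}| = |T_{(1,2,m)}| = (n-1)/2$, giving dimension $n - (n-1)/2 = (n+1)/2 = 2^{m-1}$ and parameters $[2^m - 1, 2^{m-1}]$.

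\emph{Step 2 (minimum distance).} Here I would argue exactly as in the proof of Theorem \ref{them-4.1}, splitting on $m \bmod 8$. If $m \equiv 1 \pmod 8$, Lemma \ref{lemma-m81} provides $v = 2^{(m-1)/2} - 1$ with $\gcd(v,n) = 1$ and $\{av : 1 \leq a \leq 2^{(m-1)/2} + 2\} \subseteq T_{(0,3,m)}$. Letting $\bar v$ be the inverse of $v$ modulo $n$ and $\gamma = \alpha^{\bar v}$, the defining set of $\C_{(0,3,m)}$ with respect to $\gamma$ contains the $2^{(m-1)/2} + 2$ consecutive integers $\{1, 2, \ldots, 2^{(m-1)/2} + 2\}$, so the BCH bound yields $d \geq 2^{(m-1)/2} + 3$. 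If $m \equiv 5 \pmod 8$, Lemma \ref{lemma-m85} gives the analogous run $\{av : 1 \leq a \leq 2^{(m-1)/2}\}$ of length $2^{(m-1)/2}$ in the defining set, so the BCH bound gives $d \geq 2^{(m-1)/2} + 1$. As the two codes are equivalent under $\mu = -1$, they share the same minimum distance, so the bound transfers to $\C_{(1,2,m)}$ (and Lemmas \ref{lemma-m81}, \ref{lemma-m85} in fact record the corresponding runs inside $T_{(1,2,m)}$ as well).

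The bulk of the difficulty is already absorbed into Lemmas \ref{lemma-m81} and \ref{lemma-m85}, so no serious new obstacle is expected. The step demanding the most care is the splitting verification: one must track the four residues of $w_{2}$ modulo $4$ under complementation and note that it is precisely the hypothesis $m \equiv 1 \pmod 4$ that makes the pair $\{0,3\}$ swap with $\{1,2\}$ (any other residue of $m$ modulo $4$ would fail to pair these classes and would break the duadic structure). A minor point is to check that the range hypotheses of the lemmas match the theorem: Lemma \ref{lemma-m81} needs $m \equiv 1 \pmod 8 \geq 9$ and Lemma \ref{lemma-m85} needs $m \equiv 5 \pmod 8 \geq 5$, which together cover all $m \equiv 1 \pmod 4 \geq 5$.
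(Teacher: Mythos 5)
Your proposal is correct and follows essentially the same route as the paper's proof: establish the splitting $(T_{(0,3,m)}, T_{(1,2,m)}, -1)$ via $w_2(j)+w_2(n-j)=m$ with $m\equiv 1 \pmod 4$ (giving the duadic structure and dimension $2^{m-1}$), then apply the BCH bound through the coordinate change $\gamma=\alpha^{\bar v}$ using Lemmas \ref{lemma-m81} and \ref{lemma-m85}, and transfer the bound to the second code since the pair shares a common minimum distance. Your explicit tracking of the residue swap $\{0,3\}\leftrightarrow\{1,2\}$ and the check that the lemmas' hypotheses cover all $m\equiv 1 \pmod 4 \geq 5$ are slightly more detailed than the paper's write-up but amount to the same argument.
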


\begin{proof}
	Note that $m \equiv 1 \pmod 4$ and $w_2(i)=m-w_2(n-i)$ for each $i$ with $1 \leq i \leq n-1$.
	Hence, $i \in T_{(0,3,m)}$ if and only if  $n-i \in T_{(1,2,m)}$. It then follows that $T_{(0,3,m)}$ and
	$T_{(1,2,m)}$ partition $\Bbb Z_n \setminus \{0\}$ and
	$$
	T_{(0,3,m)}=-T_{(1,2,m)} \mbox{ and } T_{(1,2,m)}=-T_{(0,3,m)}.
	$$
	It then follows that $\C_{(0,3,m)}$ and $\C_{(1,2,m)}$ form a pair of duadic codes with length $n$ and
	dimension $(n+1)/2$. Hence, the two codes have the same minimum distance $d$.
	
	We only prove the lower bounds on minimum distannce $d$ in the case that $m \equiv 1 \pmod 8$ as it is similar to prove the  desired conclusion for $m \equiv 5 \pmod 8$. Denote $v=2^{\frac {m-1} 2}-1$. It follows from Lemma \ref{lemma-m81} that $\gcd(v,n) =1$. Let $ \overline{v}$ be the integer satisfying $v \overline{v} \equiv 1 \pmod n$.
	Write $\gamma = \alpha^{\overline{v}}$. It is deduced from Lemma \ref{lemma-m81} that defining set of $\mathcal C_{(0,3,m)}$ with respect to $\gamma$ contains the set $\left\{ 1,2,...,2^{(m-1)/2}+2 \right\}$. The desired lower bound on the minimum distance of $\mathcal C_{(0,3,m)}$ then follows from the BCH bound on cyclic codes. The desired conclusion on the minimum distance of $\mathcal C_{(1,2,m)}$  follows naturally, as the two duadic codes have the same minimum distance. This completes the proof.
\end{proof}

When $m \equiv 1 \pmod 4\geq 5$, the parameters of  the dual codes $\C_{(0,3,m)}^\perp$ and $\C_{(1,2,m)}^\perp$  are studied in the following theorem.
\begin{theorem}\label{thm-m41-dual}
	Let $m \equiv 1 \pmod 4\geq 5$ be an integer. Then $\C_{(0,3,m)}^\perp$ and $\C_{(1,2,m)}^\perp$ form a pair of even-like duadic codes with
	parameters $[2^m-1, 2^{m-1}-1, d^\perp]$, where
	\begin{eqnarray*}
		d^\perp \geq \left\{
		\begin{array}{ll}
			2^{(m-1)/2}+4 & \mbox{ if } m \equiv 1 \pmod{8}, \\
			2^{(m-1)/2}+2 & \mbox{ if } m \equiv 5 \pmod{8}.
		\end{array}
		\right.
	\end{eqnarray*}
\end{theorem}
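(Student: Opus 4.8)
The plan is to deduce everything from the duadic structure already obtained in Theorem~\ref{thm-m41} together with the standard formula for the defining set of a dual cyclic code. Recall that if a binary cyclic code $\mathcal{C}$ of length $n$ has defining set $T$ with respect to $\alpha$, then $\mathcal{C}^\perp$ has defining set $\Bbb Z_n \setminus (-T)$, where $-T = \{-t \bmod n : t \in T\}$. I would first apply this to $\mathcal{C}_{(0,3,m)}$, whose defining set is $T_{(0,3,m)}$. Theorem~\ref{thm-m41} gives $-T_{(0,3,m)} = T_{(1,2,m)}$ and tells us that $\{0\}$, $T_{(0,3,m)}$, $T_{(1,2,m)}$ partition $\Bbb Z_n$; hence the defining set of $\mathcal{C}_{(0,3,m)}^\perp$ is $\Bbb Z_n \setminus T_{(1,2,m)} = \{0\} \cup T_{(0,3,m)}$, and symmetrically that of $\mathcal{C}_{(1,2,m)}^\perp$ is $\{0\} \cup T_{(1,2,m)}$. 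Since the even-like duadic codes attached to the splitting $(T_{(0,3,m)}, T_{(1,2,m)}, -1)$ are by definition the cyclic codes whose defining sets are $\{0\} \cup T_{(0,3,m)}$ and $\{0\} \cup T_{(1,2,m)}$, this identifies $\mathcal{C}_{(0,3,m)}^\perp$ and $\mathcal{C}_{(1,2,m)}^\perp$ as exactly the pair of even-like duadic codes; in particular they are equivalent under the multiplier $\mu = -1$ and share the same minimum distance.

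The dimension follows by a direct count. From Theorem~\ref{thm-m41} the odd-like codes have dimension $2^{m-1} = (n+1)/2$, so $|T_{(0,3,m)}| = n - 2^{m-1} = 2^{m-1}-1$, whence the defining set $\{0\} \cup T_{(0,3,m)}$ has size $2^{m-1}$. Therefore $\dim \mathcal{C}_{(0,3,m)}^\perp = n - 2^{m-1} = 2^{m-1}-1$, and the same holds for $\mathcal{C}_{(1,2,m)}^\perp$.

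For the lower bound on $d^\perp$ I would recycle the multiplier used in the proof of Theorem~\ref{thm-m41}. Put $v = 2^{(m-1)/2}-1$ when $m \equiv 1 \pmod 8$ and $v = 2^{(m+1)/2}-1$ when $m \equiv 5 \pmod 8$; in either case $\gcd(v,n) = 1$, so $\gamma = \alpha^{v}$ is a primitive $n$-th root of unity. Lemma~\ref{lemma-m81} (resp. Lemma~\ref{lemma-m85}) supplies the run $\{av : 1 \le a \le K\} \subseteq T_{(0,3,m)}$ with $K = 2^{(m-1)/2}+2$ (resp. $K = 2^{(m-1)/2}$). The one extra ingredient for the dual is that $0$ now belongs to the defining set: since $0 = 0\cdot v \in \{0\} \cup T_{(0,3,m)}$, we get $\{av : 0 \le a \le K\} \subseteq \{0\} \cup T_{(0,3,m)}$, so with respect to $\gamma$ the generator polynomial of $\mathcal{C}_{(0,3,m)}^\perp$ has the $K+1$ consecutive roots $\gamma^0, \gamma^1, \dots, \gamma^{K}$. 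The BCH bound then gives $d^\perp \ge K+2$, that is $2^{(m-1)/2}+4$ for $m \equiv 1 \pmod 8$ and $2^{(m-1)/2}+2$ for $m \equiv 5 \pmod 8$, and the bound transfers to $\mathcal{C}_{(1,2,m)}^\perp$ by the equivalence of the two even-like codes.

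Almost all of this is bookkeeping: the cyclotomic-coset estimates that do the real work have already been carried out in Lemmas~\ref{lemma-m81} and~\ref{lemma-m85}, and the dual-defining-set formula and dimension count are routine. The only point that deserves a moment of care is the observation that adjoining the single element $0$ genuinely lengthens the consecutive run from $K$ integers ($1,\dots,K$) to $K+1$ integers ($0,1,\dots,K$); this is precisely what produces the uniform ``$+1$'' improvement of the dual bound over the primal bound of Theorem~\ref{thm-m41}, and I expect no genuine obstacle beyond verifying it.
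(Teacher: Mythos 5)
Your proposal is correct, and its structural half coincides with the paper's proof: both arguments compute the defining set of $\C_{(0,3,m)}^\perp$ as $\Bbb Z_n \setminus (-T_{(0,3,m)}) = \{0\} \cup T_{(0,3,m)}$ (and symmetrically for $\C_{(1,2,m)}^\perp$), which is what identifies the duals as the even-like pair attached to the splitting $(T_{(0,3,m)}, T_{(1,2,m)}, -1)$ and gives the dimension $2^{m-1}-1$. Where you genuinely diverge is the minimum-distance bound. The paper observes that a cyclic code with defining set $\{0\} \cup T_{(0,3,m)}$ is exactly the even-weight subcode of $\C_{(0,3,m)}$; every weight in that subcode is even, and since the lower bounds of Theorem~\ref{thm-m41}, namely $2^{(m-1)/2}+3$ and $2^{(m-1)/2}+1$, are odd integers, they improve by one automatically on the subcode --- no further BCH computation is performed. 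You instead re-run the BCH argument on the dual itself: adjoining $0$ to the run $\{av : 1 \le a \le K\}$ supplied by Lemma~\ref{lemma-m81} (resp.\ Lemma~\ref{lemma-m85}) produces the $K+1$ consecutive elements $0,1,\dots,K$ in the defining set with respect to $\gamma$, whence $d^\perp \ge K+2$ by the BCH bound, and the bound passes to the partner code by the multiplier equivalence of the two even-like codes. Both routes are sound and consume the same underlying lemmas; the paper's is shorter given Theorem~\ref{thm-m41} and isolates the parity mechanism behind the ``$+1$'', while yours is self-contained at the level of defining sets, makes the source of the ``$+1$'' explicit, and has the incidental merit of using the correct change of primitive root: under the paper's definition of the defining set, the substitution must be $\gamma = \alpha^{v}$ (as you write), because the defining set of $\C_{(0,3,m)}$ with respect to $\alpha^{v}$ equals $\overline{v}\,T_{(0,3,m)}$, which contains $\{1,\dots,K\}$ precisely since $\{av : 1 \le a \le K\} \subseteq T_{(0,3,m)}$; the choice $\gamma = \alpha^{\overline{v}}$ appearing in the paper's proof of Theorem~\ref{thm-m41} is a slip, though it is harmless for the truth of the statements.
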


\begin{proof}
	Note that the defining sets of $\C_{(0,3,m)}^\perp$ and $\C_{(1,2,m)}^\perp$ with respect to $\alpha$ are
	$\{0\} \cup T_{(0,3,m)}$ and $\{0\} \cup T_{(1,2,m)}$, respectively. It then follows that  $\C_{(0,3,m)}^\perp$ is
	the even-weight subcode of $\C_{(0,3,m)}$ and $\C_{(1,2,m)}^\perp$ is the even-weight subcode of $\C_{(1,2,m)}$.
	The desired conclusion then follows from Theorem \ref{thm-m41}.
\end{proof}

When $m \equiv 1 \pmod 4\geq 5$, the parameters of extended codes $\overline{\C_{(0,3,m)}}$ and $\overline{\C_{(1,2,m)}}$  are investigated in the following theorem.

\begin{theorem}\label{thm-m41-extended}
	Let $m \equiv 1 \pmod 4\geq 5$ be an integer. Then the extended codes $\overline{\C_{(0,3,m)}}$ and $\overline{\C_{(1,2,m)}}$ of
	$\C_{(0,3,m)}$ and $\C_{(1,2,m)}$ are self-dual and doubly-even. Furthermore, they have parameters
	$[2^m, 2^{m-1}, \bar{d} \ge 2^{(m-1)/2}+4]$, where $\bar{d}$ denotes the minimum distance of $\overline{\C_{(0,3,m)}}$ or $\overline{\C_{(1,2,m)}}$.
\end{theorem}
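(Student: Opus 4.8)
The plan is to derive everything from the duadic structure already exhibited in Theorem~\ref{thm-m41} together with the standard theory of extended duadic codes. First I would record what the proof of Theorem~\ref{thm-m41} actually establishes: since $w_2(i)=m-w_2(n-i)$ forces $T_{(0,3,m)}=-T_{(1,2,m)}$ and $T_{(1,2,m)}=-T_{(0,3,m)}$, the pair $\C_{(0,3,m)},\C_{(1,2,m)}$ is a pair of odd-like binary duadic codes whose splitting is given by the multiplier $\mu=-1$. Self-duality of the extended codes is then immediate from the classical characterisation of self-dual extended duadic codes: when the splitting is afforded by $\mu=-1$, the extended codes $\overline{\C_{(0,3,m)}}$ and $\overline{\C_{(1,2,m)}}$ are self-dual (see \cite[Chapter 6]{HP03}). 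The length is $n+1=2^m$, so the extended codes have parameters $[2^m,2^{m-1}]$, and since $\alpha^j\mapsto\alpha^{-j}$ is a coordinate permutation carrying one code to the other, the two extended codes are equivalent; it therefore suffices to analyse $\overline{\C_{(0,3,m)}}$.

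Next I would establish the doubly-even property, which I expect to be the main obstacle. The relevant length condition is satisfied: since $m\ge 3$ we have $2^m\equiv 0\pmod 8$, i.e.\ $n\equiv -1\pmod 8$. To argue double-evenness directly I would exploit that $\overline{\C_{(0,3,m)}}$ is now self-dual, hence self-orthogonal, so for any $a,b$ in the code $a\cdot b=0$, whence $|\mathrm{supp}(a)\cap\mathrm{supp}(b)|$ is even and $\mathrm{wt}(a+b)\equiv \mathrm{wt}(a)+\mathrm{wt}(b)\pmod 4$. Thus $c\mapsto \mathrm{wt}(c)\bmod 4$ is additive on $\overline{\C_{(0,3,m)}}$, and it is enough to verify that it vanishes on a generating set. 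The delicate point is exhibiting a convenient such set with all weights divisible by $4$ (for example the extensions of the cyclic shifts of the generator polynomial, together with the extended all-ones word $\overline{\mathbf 1}$ of weight $2^m\equiv 0\pmod 8$); alternatively one may simply invoke the known fact that an extended binary duadic code with $\mu=-1$ and $n\equiv -1\pmod 8$ is doubly-even.

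Finally, for the distance bound I would compare weights in $\overline{\C_{(0,3,m)}}$ with weights in $\C_{(0,3,m)}$ and its even-weight subcode. An even-weight codeword of $\C_{(0,3,m)}$ lies in $\C_{(0,3,m)}^\perp$ (its even-weight subcode, as noted in the proof of Theorem~\ref{thm-m41-dual}) and extends with unchanged weight, while an odd-weight codeword extends with weight increased by $1$; hence $\bar d=\min\{d^\perp,\,d_o+1\}$, where $d^\perp$ is bounded below in Theorem~\ref{thm-m41-dual} and $d_o\ge d$ is bounded below in Theorem~\ref{thm-m41}. In the case $m\equiv 1\pmod 8$ this already gives $\bar d\ge 2^{(m-1)/2}+4$, and in the case $m\equiv 5\pmod 8$ it gives $\bar d\ge 2^{(m-1)/2}+2$. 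The decisive final step is a parity refinement using double-evenness: since $m\ge 5$ we have $2^{(m-1)/2}\equiv 0\pmod 4$, so $2^{(m-1)/2}+2\equiv 2\pmod 4$, whereas $\bar d$ must be a multiple of $4$; the smallest admissible value $\ge 2^{(m-1)/2}+2$ is therefore $2^{(m-1)/2}+4$, yielding $\bar d\ge 2^{(m-1)/2}+4$ in both residue classes. This is exactly where the doubly-even conclusion is needed, so the crux of the whole argument is really the second paragraph.
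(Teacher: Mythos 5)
Your proposal is correct and takes essentially the same route as the paper: self-duality of the extensions from the splitting $(T_{(0,3,m)}, T_{(1,2,m)}, -1)$ via the standard result in \cite[Chapter 6]{HP03}, the doubly-even property by invoking the known theorem on extended duadic codes (the paper cites \cite[Theorem 6.5.1]{HP03}, which is the ``known fact'' you fall back on after your sketched direct argument), and the bound $\bar{d} \ge 2^{(m-1)/2}+4$ by combining Theorem~\ref{thm-m41} with divisibility of $\bar{d}$ by $4$. Your final paragraph simply makes explicit the parity refinement that the paper leaves implicit in ``the remaining conclusions follow from Theorem~\ref{thm-m41}.''
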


\begin{proof}
	It is well known that the extended codes of a pair of odd-like binary duadic codes are self-dual if
	the splitting corresponding to the pair of odd-like binary duadic codes is given by $-1$ \cite[Theorem 6.4.12]{HP03}. Note that
	$(T_{(0,3,m)}, T_{(1,2,m)}, -1)$ is a splitting of $\Bbb Z_{2^m-1}$ when $m \equiv 1 \pmod 4$. Consequently, the extended codes
	$\overline{\C_{(0,3,m)}}$ and $\overline{\C_{(1,2,m)}}$ of $\C_{(0,3,m)}$ and $\C_{(1,2,m)}$ are self-dual.
	It then follows from \cite[Theorem 6.5.1]{HP03} that the Hamming weight of each codeword
	in $\overline{\C_{(0,3,m)}}$ and $\overline{\C_{(1,2,m)}}$ is divisible by $4$. The remaining conclusions follow from Theorem \ref{thm-m41}.
\end{proof}

Similarly, we have the following three theorems on parameters of $\C_{(0,1,m)}$ and $\C_{(2,3,m)}$ and their dual and extended codes when $m \equiv 3 \pmod 4$.

\begin{theorem}\label{thm-m43}
	Let $m \equiv 3 \pmod 4$ be an integer. Then $\C_{(0,1,m)}$ and $\C_{(2,3,m)}$ form a pair of odd-like duadic codes with
	parameters $[2^m-1, 2^{m-1}, d]$, where
	\begin{eqnarray*}
		d \geq \left\{
		\begin{array}{ll}
			2^{(m-1)/2}+1 & \mbox{ if } m \equiv 3 \pmod{8}, \\
			2^{(m-1)/2}+3 & \mbox{ if } m \equiv 7 \pmod{8}.
		\end{array}
		\right.
	\end{eqnarray*}
\end{theorem}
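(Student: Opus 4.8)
The plan is to mirror exactly the structure of the proof of Theorem~\ref{thm-m41}, which handled the case $m \equiv 1 \pmod 4$, and transport it to the present case $m \equiv 3 \pmod 4$. The statement has two independent parts: first, that $\C_{(0,1,m)}$ and $\C_{(2,3,m)}$ form a pair of odd-like duadic codes with parameters $[2^m-1, 2^{m-1}, d]$; second, the stated lower bounds on $d$ according to the residue of $m$ modulo $8$.

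For the duadic structure, I would begin from the key identity $w_2(i) + w_2(n-i) = m$ valid for each $i$ with $1 \le i \le n-1$, where $n = 2^m-1$ so that $n-i$ is the bitwise complement of $i$. Since $m \equiv 3 \pmod 4$, the map $i \mapsto n-i$ sends $w_2(i) \equiv 0 \pmod 4$ to $w_2(n-i) \equiv 3 \pmod 4$ and sends $w_2(i) \equiv 1 \pmod 4$ to $w_2(n-i) \equiv 2 \pmod 4$. Hence $i \in T_{(0,1,m)}$ if and only if $n-i \in T_{(2,3,m)}$, which gives $T_{(0,1,m)} = -T_{(2,3,m)}$ and $T_{(2,3,m)} = -T_{(0,1,m)}$. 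Because every residue class modulo $4$ appears, the two defining sets are disjoint and their union is $\Bbb Z_n \setminus \{0\}$, so $(T_{(0,1,m)}, T_{(2,3,m)}, -1)$ is a splitting of $\Bbb Z_n$ and the two codes are odd-like duadic with dimension $(n+1)/2 = 2^{m-1}$. In particular, they share a common minimum distance $d$.

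For the minimum-distance bounds, I would invoke Lemma~\ref{lemma-m83} in the subcase $m \equiv 3 \pmod 8$ and Lemma~\ref{lemma-m87} in the subcase $m \equiv 7 \pmod 8$. In each subcase the relevant lemma produces an integer $v$ with $\gcd(v,n)=1$ such that a run $\{av : 1 \le a \le N\}$ is contained in the defining set, where $N = 2^{(m-1)/2}$ for $m\equiv 3\pmod 8$ and $N = 2^{(m-1)/2}+2$ for $m\equiv 7\pmod 8$. Letting $\overline{v}$ be the inverse of $v$ modulo $n$ and $\gamma = \alpha^{\overline v}$, the defining set with respect to the primitive $n$-th root $\gamma$ then contains the consecutive run $\{1,2,\dots,N\}$, so the BCH bound yields $d \ge N+1$, which is exactly $2^{(m-1)/2}+1$ and $2^{(m-1)/2}+3$ respectively. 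The bound for the partner code follows at once since duadic codes in a pair share the same minimum distance.

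I expect no genuine obstacle here, since the argument is a near-verbatim analogue of Theorem~\ref{thm-m41}; the only point demanding care is the bookkeeping of residues modulo $4$ under bitwise complementation, namely verifying that $m \equiv 3 \pmod 4$ is precisely what pairs classes $0,1$ with classes $3,2$ rather than some other pairing, and confirming that the splitting unit is indeed $\mu=-1$ so that the later self-duality statement of the extended codes will go through. The minor subtlety is simply matching the correct auxiliary lemma (Lemma~\ref{lemma-m83} versus Lemma~\ref{lemma-m87}) to the correct subcase so that the $+1$ versus $+3$ in the bound comes out right.
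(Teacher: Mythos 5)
Your proposal is correct and is essentially the paper's own argument: the paper omits the proof of Theorem~\ref{thm-m43}, stating only that it is very similar to that of Theorem~\ref{thm-m41}, and your write-up is precisely that transported argument, with the complementation identity $w_2(i)+w_2(n-i)=m$ giving the splitting $(T_{(0,1,m)},T_{(2,3,m)},-1)$ and Lemmas~\ref{lemma-m83} and~\ref{lemma-m87} feeding the BCH bound in the respective subcases $m\equiv 3\pmod 8$ and $m\equiv 7\pmod 8$. Your residue bookkeeping and the matching of lemmas to subcases are both correct, so nothing is missing.
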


\begin{proof}
	The proof is very similar to that of Theorem \ref{thm-m41} and omitted here.
\end{proof}

\begin{theorem}\label{thm-m43-dual}
	Let $m \equiv 3 \pmod 4$ be an integer. Then $\C_{(0,1,m)}^\perp$ and $\C_{(2,3,m)}^\perp$ form a pair of even-like duadic codes with
	parameters $[2^m-1, 2^{m-1}-1, d^\perp]$, where
	\begin{eqnarray*}
		d^\perp \geq \left\{
		\begin{array}{ll}
			2^{(m-1)/2}+2 & \mbox{ if } m \equiv 3 \pmod{8}, \\
			2^{(m-1)/2}+4 & \mbox{ if } m \equiv 7 \pmod{8}.
		\end{array}
		\right.
	\end{eqnarray*}
\end{theorem}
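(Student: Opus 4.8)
The plan is to mirror the proof of Theorem \ref{thm-m41-dual} exactly, with the roles of the two duadic pairs adjusted for $m \equiv 3 \pmod 4$. First I would pin down the defining set of each dual code. Recall that dualizing a cyclic code with defining set $T$ (with respect to $\alpha$) produces the defining set $\Bbb Z_n \setminus (-T)$, where $-T = \{\,n-t : t \in T\,\}$. Using the complementation identity $w_2(j) + w_2(n-j) = m$ for $1 \leq j \leq n-1$ together with $m \equiv 3 \pmod 4$, a one-line case check on $w_2(j) \bmod 4$ (if $w_2(j)\equiv 0$ then $w_2(n-j)\equiv 3$, and if $w_2(j)\equiv 1$ then $w_2(n-j)\equiv 2$, and conversely) gives $-T_{(0,1,m)} = T_{(2,3,m)}$ and $-T_{(2,3,m)} = T_{(0,1,m)}$. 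This is precisely the splitting $(T_{(0,1,m)}, T_{(2,3,m)}, -1)$ underlying Theorem \ref{thm-m43}.

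Consequently the defining set of $\C_{(0,1,m)}^\perp$ is $\Bbb Z_n \setminus T_{(2,3,m)} = \{0\} \cup T_{(0,1,m)}$, and that of $\C_{(2,3,m)}^\perp$ is $\{0\} \cup T_{(2,3,m)}$. Hence $\C_{(0,1,m)}^\perp$ is exactly the even-weight subcode of the odd-like code $\C_{(0,1,m)}$ (adjoining $0$ to the defining set forces $\alpha^0=1$ to be a zero, i.e.\ forces even weights), and similarly for $\C_{(2,3,m)}^\perp$. These even-weight subcodes are the even-like duadic codes attached to the same splitting, so they form a pair, each of dimension $n - |\{0\}\cup T_{(0,1,m)}| = n - (n+1)/2 = 2^{m-1}-1$, giving the claimed parameters $[2^m-1, 2^{m-1}-1, d^\perp]$.

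For the minimum distance I would reuse, verbatim, the consecutive BCH runs produced for the odd-like codes. Choose the primitive $n$-th root of unity $\gamma$ as in the proof of Theorem \ref{thm-m43}, so that the defining set of $\C_{(0,1,m)}$ with respect to $\gamma$ contains $\{1,2,\ldots,2^{(m-1)/2}\}$ when $m\equiv 3 \pmod 8$ (Lemma \ref{lemma-m83}) and $\{1,2,\ldots,2^{(m-1)/2}+2\}$ when $m\equiv 7\pmod 8$ (Lemma \ref{lemma-m87}). The crucial observation is that the change of primitive root $i \mapsto si$ fixes $0$, so the defining set of $\C_{(0,1,m)}^\perp$ with respect to the \emph{same} $\gamma$ is $\{0\}$ adjoined to the run above, i.e.\ $\{0,1,\ldots,2^{(m-1)/2}\}$ (resp.\ $\{0,1,\ldots,2^{(m-1)/2}+2\}$) — a consecutive block one longer than before. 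The BCH bound then yields $d^\perp \geq 2^{(m-1)/2}+2$ (resp.\ $2^{(m-1)/2}+4$). Since the two codes are a duadic pair with common minimum distance, the same bound holds for $\C_{(2,3,m)}^\perp$.

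The step I would be most careful about is the defining-set bookkeeping $-T_{(0,1,m)} = T_{(2,3,m)}$: this is the only place the hypothesis $m \equiv 3 \pmod 4$ is used, and it is what selects the pairing $(0,1)\leftrightarrow(2,3)$ here as opposed to $(0,3)\leftrightarrow(1,2)$ in Theorem \ref{thm-m41-dual}. Beyond that there is no genuine obstacle: the even-weight-subcode identification is formal, and the distance bound is the one-line consequence of prepending $0$ to a BCH run that has already been exhibited, so no new weight estimate is required beyond Lemmas \ref{lemma-m83} and \ref{lemma-m87}.
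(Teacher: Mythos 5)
Your proposal is correct, and its structural backbone is exactly the paper's: the paper's proof of Theorem \ref{thm-m43-dual} simply defers to that of Theorem \ref{thm-m41-dual}, which (i) observes that the defining sets of the duals with respect to $\alpha$ are $\{0\}\cup T_{(0,1,m)}$ and $\{0\}\cup T_{(2,3,m)}$ (your computation $-T_{(0,1,m)}=T_{(2,3,m)}$ under $m\equiv 3\pmod 4$ is precisely this step), so the duals are the even-weight subcodes of the odd-like codes, i.e.\ the even-like duadic pair of dimension $2^{m-1}-1$, and then (ii) says the distance bound ``follows from Theorem \ref{thm-m43}.'' Where you genuinely diverge is step (ii). The paper's intended argument is a parity observation: every nonzero codeword of the even-weight subcode has even weight at least the bound of Theorem \ref{thm-m43}, and since those bounds $2^{(m-1)/2}+1$ and $2^{(m-1)/2}+3$ are odd, they rise for free to $2^{(m-1)/2}+2$ and $2^{(m-1)/2}+4$. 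You instead return to Lemmas \ref{lemma-m83} and \ref{lemma-m87}, note that the unit multiplication effecting the change of primitive root fixes $0$, prepend $0$ to the consecutive run in the defining set, and apply the BCH bound to a run one longer; your final appeal to the multiplier equivalence of the pair to transfer the bound to $\C_{(2,3,m)}^\perp$ is also sound (or one could invoke the second parts of the same lemmas directly). Both routes give identical bounds; the paper's parity argument is shorter once Theorem \ref{thm-m43} is in hand, while your BCH argument is self-contained at the level of defining sets and does not rely on the evenness-of-weights observation, which makes it slightly more robust (it would still work even if the odd-like bounds were even numbers, where the parity trick gains nothing).
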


\begin{proof}
	The proof is very similar to that of Theorem \ref{thm-m41-dual} and omitted here.
\end{proof}

\begin{theorem}\label{thm-m43-extended}
	Let $m \equiv 3 \pmod 4 \ge 7$ be an integer. Then the extended codes $\overline{\C_{(0,1,m)}}$ and $\overline{\C_{(2,3,m)}}$ of
	$\C_{(0,1,m)}$ and $\C_{(2,3,m)}$ are self-dual and doubly-even. Furthermore, they have parameters
	$[2^m, 2^{m-1},  \bar{d} \ge 2^{(m-1)/2}+4]$, where $\bar{d}$ denotes the minimum distance of $\overline{\C_{(0,1,m)}}$ or $\overline{\C_{(2,3,m)}}$.
\end{theorem}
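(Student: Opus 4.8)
The plan is to mirror the proof of Theorem~\ref{thm-m41-extended} verbatim, since the self-dual/doubly-even structure depends only on the parity constraint $m \equiv 3 \pmod 4$ (not on $m \pmod 8$) and on the square root bound machinery invoked earlier. First I would establish that the triple $(T_{(0,1,m)}, T_{(2,3,m)}, -1)$ is a splitting of $\Bbb Z_{2^m-1}$. The observation driving this is the weight complementation identity $w_2(i) + w_2(n-i) = m$ for $1 \leq i \leq n-1$, already used throughout Section~\ref{sec-3fcyclic}. Since $m \equiv 3 \pmod 4$, multiplying an index by $-1$ sends residue class $w_2 \equiv 0$ to $w_2 \equiv m \equiv 3$ and $w_2 \equiv 1$ to $w_2 \equiv 2 \pmod 4$; hence $-T_{(0,1,m)} = T_{(2,3,m)}$ and vice versa. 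Together with the fact (used in Theorem~\ref{thm-m43}) that $T_{(0,1,m)}$ and $T_{(2,3,m)}$ partition $\Bbb Z_n \setminus \{0\}$, this confirms that $\mu = -1$ gives the required splitting.

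Next I would invoke the standard structural result \cite[Theorem 6.4.12]{HP03}: the extended codes of a pair of odd-like binary duadic codes are self-dual precisely when the splitting is given by $\mu = -1$. Since we have just verified the splitting condition and since Theorem~\ref{thm-m43} already establishes that $\C_{(0,1,m)}$ and $\C_{(2,3,m)}$ form a pair of odd-like duadic codes, it follows immediately that $\overline{\C_{(0,1,m)}}$ and $\overline{\C_{(2,3,m)}}$ are self-dual. To obtain the doubly-even property, I would apply \cite[Theorem 6.5.1]{HP03}, which guarantees that every codeword weight in the extended code of a binary duadic code (under the $\mu=-1$ splitting) is divisible by $4$; this is exactly the doubly-even conclusion.

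Finally, the dimension and length parameters $[2^m, 2^{m-1}]$ follow directly from extending the $[2^m-1, 2^{m-1}]$ odd-like duadic codes of Theorem~\ref{thm-m43} by a single overall parity check, and the minimum distance bound $\bar d \geq 2^{(m-1)/2}+4$ is read off from the bound $d \geq 2^{(m-1)/2}+3$ in Theorem~\ref{thm-m43} together with the doubly-even property: an odd-weight-free extended code has all weights divisible by $4$, so the minimum distance $d$ of the original code (being at least $2^{(m-1)/2}+3$) extends to a multiple of $4$ that is at least $2^{(m-1)/2}+4$. The restriction $m \geq 7$ ensures $m \equiv 7 \pmod 8$ or $m \equiv 3 \pmod 8$ fall into the regime where Theorem~\ref{thm-m43}'s bounds apply and $2^{(m-1)/2}+3$ rounds up to $2^{(m-1)/2}+4$.

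The only genuinely delicate point — and the one I expect to require the most care — is the minimum distance step, because rounding a lower bound up to the next multiple of $4$ is only valid when the extension preserves the relevant weights. I would therefore argue this explicitly: since $\overline{\C_{(0,1,m)}}$ is doubly-even and each nonzero codeword of $\C_{(0,1,m)}$ has weight $\geq 2^{(m-1)/2}+3$, the corresponding extended codeword has weight either equal to that weight (if already even) or one more, and in all cases the smallest achievable weight in the self-dual doubly-even code is the least multiple of $4$ not below $2^{(m-1)/2}+3$, namely $2^{(m-1)/2}+4$ when $(m-1)/2$ is odd. Everything else is a direct citation of Theorem~\ref{thm-m43} and the two results from \cite{HP03}, so the proof is genuinely short.
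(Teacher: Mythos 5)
Your proposal follows essentially the same route as the paper, which proves this theorem by mirroring Theorem~\ref{thm-m41-extended}: verify that $(T_{(0,1,m)}, T_{(2,3,m)}, -1)$ is a splitting of $\Bbb Z_{2^m-1}$ via $w_2(i)+w_2(n-i)=m$, cite \cite[Theorem 6.4.12]{HP03} for self-duality and \cite[Theorem 6.5.1]{HP03} for divisibility of all weights by $4$, and then obtain the distance bound from Theorem~\ref{thm-m43}. One slip worth correcting in your write-up: you repeatedly quote the bound $d \geq 2^{(m-1)/2}+3$ as if it held for all $m \equiv 3 \pmod 4$, but Theorem~\ref{thm-m43} gives only $d \geq 2^{(m-1)/2}+1$ when $m \equiv 3 \pmod 8$; this does not damage your argument, since $\bar d \geq d$, $\bar d \equiv 0 \pmod 4$, and $4 \mid 2^{(m-1)/2}$ together force $\bar d \geq 2^{(m-1)/2}+4$ starting from either $+1$ or $+3$.
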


\begin{proof}
	The proof is very similar to that of Theorem \ref{thm-m41-extended} and omitted here.
\end{proof}

\begin{example}\label{exam-J101}
	Let $m=5$ and  let $\alpha$ be a generator of  $\Bbb F_{2^5}^*$ with $\alpha^{5}+\alpha^{2}+1 = 0$.
	\begin{itemize}
		\item The codes $\mathcal \C_{(0,3,5)}$  and $\C_{(0,3,5)}^\perp $  have parameters $[31, 16, 7]$ and $[31, 15, 8]$, respectively,  where the former code is almost optimal in the sense that the minimum distance of the optimal binary
linear code with length $31$ and dimension $16$ is $8$, and the latter code is optimal according to the Database \cite{G}.
In fact, $\mathcal \C_{(0,3,5)}$ has the same parameters as the best binary cyclic code according to Table A.13 in
\cite{DingBK18}.
		\item  The codes $\mathcal \C_{(1,2,5)}$  and $\C_{(1,2,5)}^\perp $ have parameters $[31, 16, 7]$ and $[31, 15, 8]$, respectively. The comments on the parameters of $\mathcal \C_{(0,3,5)}$  and $\C_{(0,3,5)}^\perp $ above apply here to
		the parameters of $\mathcal \C_{(1,2,5)}$  and $\C_{(1,2,5)}^\perp $.
	\end{itemize}
\end{example}

\begin{example}\label{exam-J102}
	Let $m=7$ and  let $\alpha$ be a generator of  $\Bbb F_{2^7}^*$ with $\alpha^{7}+\alpha+1 = 0$.
	\begin{itemize}
		\item The codes $\mathcal \C_{(0,1,7)}$  and $\C_{(0,1,7)}^\perp $  have parameters $[127, 64, 15]$ and $[127, 63, 20]$, respectively.
		\item  The codes $\mathcal \C_{(2,3,7)}$ and $\C_{(2,3,7)}^\perp $ have parameters $[127, 64, 15]$ and $[127, 63, 20]$, respectively.
	\end{itemize}
	Note that $\mathcal \C_{(0,1,7)}$ and $\mathcal \C_{(2,3,7)}$ have the same parameters as the punctured binary Reed-Muller code
$\mathrm{PRM}_2(3, 7)$ of length $127$ and order $3$.  	
	The best binary duadic code known of length $127$ and dimension $64$ has minimum distance $19$ \cite{TD}.
\end{example}

The lower bounds on the minimum distances of $\C_{(0,1,m)}$, $\C_{(2,3,m)}$, $\C_{(0,3,m)}$, $\C_{(1,2,m)}$ and their
duals developed in this paper are very close to the square-root bound. Hence, they are very good codes in general. Example \ref{exam-J101}
shows that  $\C_{(0,3,m)}$ and $\C_{(1,2,m)}$ could be a best cyclic code.
Example   \ref{exam-J102}
shows that the minimum distance of $\C_{(0,1,7)}$ and $\C_{(2,3,7)}$ is less than that of the two duadic codes
$\C_{(0,2,7)}$ and $\C_{(1,3,7)}$ presented in \cite{TD}, but their duals $\C_{(0,1,7)}^\perp$ and $\C_{(2,3,7)}^\perp$,
$\C_{(0,2,7)}^\perp$ and $\C_{(1,3,7)}^\perp$ have the same parameters $[127, 63, 20]$.
Example   \ref{exam-J102}
shows that the minimum distance of $\C_{(0,1,7)}$ and $\C_{(2,3,7)}$ equals that of
the punctured binary Reed-Muller code
$\mathrm{PRM}_2(3, 7)$ of length $127$ and order $3$, but $\C_{(0,1,7)}^\perp$ and $\C_{(2,3,7)}^\perp$ are much better than
$\mathrm{PRM}_2(3, 7)^\perp$ as
$$
d(\mathrm{PRM}_2(3, 7)^\perp)=16
$$
and
$$
d(\C_{(0,1,7)}^\perp) = d(\C_{(2,3,7)}^\perp) =20,
$$
where $d(\C)$ denotes the minimum distance of $\C$.

\subsection{Differences among several families of duadic codes}

None of the families of duadic codes, $\C_{(0,1,m)}$, $\C_{(2,3,m)}$, $\C_{(0,3,m)}$ and $\C_{(1,2,m)}$, is identical with
the family of binary quadratic residue codes, as $2^m-1$ is composite for many odd $m$. The two families of binary
duadic codes studied in \cite{TD} are $\C_{(0,2,m)}$, $\C_{(1,3,m)}$ and thus are not identical with any of the families of
duadic codes investigated in this paper, as their defining sets are different. This is also justified by the facts that
$$
d(\mathcal \C_{(0,1,7)})=d(\mathcal \C_{(2,3,7)})=15
$$
and
$$
d(\mathcal \C_{(0,2,7)})=d(\mathcal \C_{(1,3,7)})=19,
$$
where $d(\C)$ denotes the minimum distance of $\C$.

The family of punctured binary Reed-Muller codes $\mathrm{PRM}_2((m-1)/2, m)$ of length $2^m-1$ and order $(m-1)/2$
is not identical with any of the families of
duadic codes investigated in this paper, as their defining sets are different. This is also justified by the facts that
$$
d(\mathcal \C_{(0,1,7)}^\perp)=d(\mathcal \C_{(2,3,7)}^\perp)=20
$$
and
$$
d(\mathrm{PRM}_2(3, 7)^\perp)=16.
$$

\section{Summary and concluding remarks}\label{sec-Con}
	 In this paper, we constructed and studied the parameters of the five families of binary cyclic codes with parameters
	 $[n, k,d]$ and their duals, where $n=2^m-1$ and $(n-6)/3 \leq  k \leq (n+6)/3$. They contain some distance-optimal
	 codes and are very good in general, as they have a very good lower bound on their minimum distances. The work on the
	 two families of duadic codes $\C_{(0,1,m)}$, $\C_{(2,3,m)}$, $\C_{(0,3,m)}$ and $\C_{(1,2,m)}$ complements the work
	 in \cite{TD}. It is possible to improve the lower bounds on the binary cyclic codes developed in this paper.
	
	 The works of
	 \cite{TD} and this paper can be generalised and extended to obtain more families of binary duadic codes and other
	 binary cyclic codes. But it will be more difficult to develop a good lower bound on their minimum distances.
	 The generalisation goes as follows.
	 Let $r \geq 2$ be a positive integer and let $n=2^m-1$ for an integer $m \geq 3$.
	 Let $S$ be any proper subset of $\mathbb{Z}_r$.
     Define
     $$
     T_{[r,m,S]}=\{1 \leq i \leq n-1: w_2(i) \bmod{r} \in S\}.
     $$	
     By definition, $T_{[r,m,S]}$ is the union of some $2$-cyclotomic cosets modulo $n$. Let $\alpha$ be a primitive element of
     $\mathbb{F}_{2^m}$.
	 Let $\C_{[r,m,S]}$ denote the binary cyclic code of length $n$ with generator polynomial
	 $$
	 g_{[r,m,S]}(x)=\prod_{i \in T_{[r,m,S]}} (x-\alpha^i).
	 $$
	 When $r=2$ and $|S|=1$, the codes $\C_{[r,m,S]}$ were studied in \cite{TD}. When $r=3$ and  $|S|=1$, the codes $\C_{[r,m,S]}$
	 were treated in this paper.  When $r=4$ and  $|S|=2$, some of the codes $\C_{[r,m,S]}$ were investigated in this paper, and
	 the others were not studied in this paper as they are not duadic codes. When $r \geq 6$ is even and $|S|=r/2$, the code
	 $\C_{[r,m,S]}$ could be a duadic code for certain odd $m$. Hence, we propose the following research problems.
	
	 \begin{problem}
	 Let $m \geq 3$ be an integer and let $r=4$. Let $S \in \{(0,1), (2,3) \}$,
	  Determine the parameters of the code $\C_{[r,m,S]}$ for
	 $m \not\equiv 3 \bmod{4}$.
	 \end{problem}

	 \begin{problem}
	 Let $m \geq 3$ be an integer and let $r=4$. Let $S \in \{(0,3), (1,2) \}$.
	  Determine the parameters of the code $\C_{[r,m,S]}$ for
	 $m \not\equiv 1 \bmod{4}$.
	 \end{problem}  	
	
	 The two problems above can be solved with similar techniques in this paper. The following two research problems are
	 harder.
	
	 \begin{problem}
	 Let $m \geq 3$ be an integer and let $r \geq 5$ be an integer.  For any proper subset $S$ of $\mathbb{Z}_r$,
	 determine the parameters of the code $\C_{[r,m,S]}$.
	 \end{problem}
	
	 	 \begin{problem}
	 Let $r \geq 6$ be an even integer. Find a subset $S$ of $\mathbb{Z}_r$ with $|S|=r/2$
	 such that $\C_{[r,m,S]}$ is a binary duadic code of length $n=2^m-1$ for infinitely many odd $m$.
	 Determine the parameters of these duadic codes.
	 \end{problem}


\begin{thebibliography}{99}





\bibitem{Charpin} P. Charpin, ``Open problems on cyclic codes,'' in Handbook of Coding Theory, vol. 1, V. S. Pless and W. C. Huffman, Eds. Amsterdam,  The Netherlands: Elsevier, 1998, pp. 963--1063.
		



\bibitem{DingBK18}
C. Ding, \emph{Codes from Difference Sets}. Singapore: World Scientific, 2018.


\bibitem{DLX99}
 C. Ding, K. Y. Lam and C. Xing, ``Enumeration and construction of all
      duadic codes of length $p^m$," {\it Fundamenta Informaticae}, vol. 38, no.
      1, pp. 149--161, 1999.


\bibitem{DingPless99}
C. Ding, V. Pless, ``Cyclotomy and duadic codes of prime lengths,"
\emph{IEEE Trans. Inf. Theory}, vol. 45, no. 2, pp. 453--466, March 1999.


\bibitem{DY} C. Ding and J. Yang, ``Hamming weights in irreducible cyclic codes,''
\emph{Discrete Math.}, vol. 313, no. 4, pp. 434--446, 2013.


\bibitem{HP03} W. C. Huffman and V. Pless, \emph{Fundamentals Error-Correcting Codes}.
Cambridge, U.K.: Cambridge Univ. Press, 2003.

\bibitem{GDL} B. Gong, C. Ding and C. Li, ``The dual codes of several classes of BCH codes,''
	    \emph{IEEE Trans. Inf. Theory}, vol. 68, no. 2, pp. 953--964,  2022.

\bibitem{G} M. Grassl, Bounds on the minimum distance of linear codes and
		quantum codes, Online available at http://www.codetables.de, accessed on 2023-1-1.


\bibitem{Leon84} J. S. Leon, J. M. Masley, and V. Pless, ``Duadic codes,"
         \emph{IEEE Trans. Inf. Theory,} vol. 30, no. 5, pp. 709--714, Sept. 1984.

\bibitem{Leon88} J. S. Leon, ``A probabilistic algorithm for computing
         minimum weight of large error-correcting codes," \emph{IEEE Trans
         Inf. Theory,} vol. 34, no. 5,  pp. 1354--1359, Sept. 1988.



\bibitem{LSX} S. Li, ``The minimum distance of some narrow-sense primitive BCH codes,'' \emph{ SIAM J. Discrete Math.},  vol. 31, no. 4, pp. 2530--2569, 2017.

\bibitem{JPen}
S. Noguchi, X.-N. Lu, M. Jimbo, Y. Miao,
``BCH codes with minimum distance proportional to code length,
\emph{SIAM J. Discrete Math.}, vol. 35,  no. 1, pp. 179--193,  2021.

\bibitem{Ples87} V. Pless, J. M. Masley, and J. S. Leon, ``On weights
         in duadic codes," \emph{J. Comb. Theory Ser. A}, vol. 44, pp. 6--21, 1987.


		
\bibitem{TD} C. Tang, C. Ding, ``Binary $[n,(n + 1)/2]$ cyclic codes with good
minimum distances,'' \emph{IEEE Trans. Inf. Theory}, vol. 68, no. 12, pp. 7842--7849,  2022.


\bibitem{SYW} X. Shi, Q. Yue, Y. Wu, ``The dual-containing primitive BCH codes with the maximum designed distance and their applications to quantum codes,''
\emph{Des. Codes and Cryptogr.}, vol. 87, pp. 2165--2183, 2019.

		
\bibitem{Xiong} M. Xiong, ``On cyclic codes of composite length and the minimum distance,"
\emph{IEEE Trans. Inf. Theory}, vol. 64, no. 9, pp. 6305--6314,  2018.
		
\bibitem{XiongZhang}
M. Xiong, A. Zhang, ``On cyclic codes of composite length and the minimum distance II,"
\emph{EEE Trans. Inf. Theory}, vol. 67, no. 8, pp. 5097--5103,  2021.
		
		
	
	\end{thebibliography}
\end{document}